\newtheorem{definition}{Definition}
\newtheorem{theorem}{Theorem}
\newtheorem{lemma}{Lemma}
\newtheorem*{remark}{Remark}
\newcommand{\X}{\mathcal{X}}
\newcommand{\Q}{\mathcal{Q}}
\newcommand{\M}{\mathcal{M}}
\newcommand{\uni}{\mathsf{Uni}}
\newcommand{\BM}{\mathsf{BM}}
\newcommand{\BIN}{\mathsf{Bin}}
\newcommand{\Lap}{\mathsf{Lap}}
\newcommand{\eps}{\varepsilon}
\newcommand{\E}{\mathsf{E}}
\newcommand{\p}{\mathcal{P}}
\newcommand{\A}{\mathcal{A}}
\newcommand{\D}{\mathcal{D}}
\newcommand{\var}{\mathsf{Var}}
\newcommand{\lar}{\leftarrow}
\newcommand{\pr}{\mathop{\rm Pr}}
\renewcommand{\paragraph}[1]{\medskip\noindent{\bf #1} }
\newcommand{\xomit}[1]{}
\newcommand{\ifconf}[1]{}
\newcommand{\remove}[1]{}
\def\FullBox{\hbox{\vrule width 8pt height 8pt depth 0pt}}
\def\qed{\ifmmode\qquad\FullBox\else{\unskip\nobreak\hfil
\penalty50\hskip1em\null\nobreak\hfil\FullBox
\parfillskip=0pt\finalhyphendemerits=0\endgraf}\fi}
\title{Adversarially Robust Distributed Count Tracking via Partial Differential Privacy}
\author{
Zhongzheng Xiong \\
School of Data Science \\
Fudan University\\
\texttt{zzxiong21@m.fudan.edu.cn}\\
\And
Xiaoyi Zhu\\
School of Data Science \\
Fudan University\\
\texttt{zhuxy22@m.fudan.edu.cn} \\
\And
Zengfeng Huang\thanks{Corresponding author} \\
School of Data Science \\
Fudan University \\
\texttt{huangzf@fudan.edu.cn} \\
}
\begin{document}

\maketitle

\begin{abstract}
We study the distributed tracking model, also known as distributed functional monitoring. This model involves $k$ sites each receiving a stream of items and communicating with the central server. The server's task is to track a function of all items received thus far continuously, with minimum communication cost. For count tracking, it is known that there is a $\sqrt{k}$ gap in communication between deterministic and randomized algorithms. However, existing randomized algorithms assume an "oblivious adversary" who constructs the entire input streams before the algorithm starts. Here we consider adaptive adversaries who can choose new items based on previous answers from the algorithm. Deterministic algorithms are trivially robust to adaptive adversaries, while randomized ones may not. Therefore, we investigate whether the $\sqrt{k}$ advantage of randomized algorithms is from randomness itself or the oblivious adversary assumption. We provide an affirmative answer to this question by giving a robust algorithm with optimal communication. Existing robustification techniques do not yield optimal bounds due to the inherent challenges of the distributed nature of the problem. To address this, we extend the differential privacy framework by introducing "partial differential privacy" and proving a new generalization theorem. This theorem may have broader applications beyond robust count tracking, making it of independent interest.
\end{abstract}
\section{Introduction}
In the distributed tracking model there are $k$ sites and a single central server. Each site $i$ receives items over time in a streaming fashion and can communicate with the server. Let $S_i(t)$ be the stream that site $i$ observes up to time $t$. The sever wants to track the value of a function $f$ that is defined over the multiset union of $\{S_i(t)~|~ i=1,\cdots k\}$ at all times. The goal is to minimize the communication cost, which is defined as the total number of words communicated between the server and all sites. Due to strong motivations from distributed system applications, this model has been extensively investigated, e.g., \cite{dilman2001efficient, babcock2003distributed, cormode2005holistic,keralapura2006communication,cormode2008algorithms, yi2009optimal,arackaparambil2009functional, tirthapura2011optimal, cormode2012continuous, woodruff2012tight,chen2017improved,wu2020learning}. The theoretical study of communication complexity was initiated by \cite{cormode2008algorithms}.
Count tracking is the most basic problem in distributed tracking, where $f$ is simply the total number of items received so far. Since exact tracking requires sites to communicate every time an item arrives, incurring too much communication, the objective is to track an $(1+\alpha)$-approximation. For this problem, there is a simple deterministic algorithm with $\Tilde{O}\left({k}/{\alpha}\right)$\footnote{We use the $\Tilde{O}$ notation to suppress the dependency on all polylogarithmic factors.} communication. Huang et al. \cite{huang2012randomized} proposed a randomized algorithm that achieves $\Tilde{O}({\sqrt{k}}/{\alpha})$, but the correctness is under the assumption of an oblivious adversary, i.e., input streams are constructed in advance and are just given to sites one item at a time. In particular, the analysis assumes the input is independent of the algorithm's internal randomness. In interactive applications, this assumption is often unrealistic; the adversary can generate the next item based on previous answers from the server, making the independence assumption invalid. Moreover, the break of independence may occur unintentionally. For example, the tracking algorithm may be part of a larger system; the output of the algorithm can change the environment, from which the future input to the algorithm is generated. In such cases, we can no longer assume the independence between inputs and algorithm's internal state. The main question of this paper is: \emph{Whether the $\sqrt{k}$ advantage of randomized count tracking algorithms is from randomness itself or the oblivious adversary assumption?}

Designing robust randomized algorithms against adaptive adversaries has received much attention recently \cite{ben2020adversarial, ben2022framework, ben2022adversarially, beimel2022dynamic, hasidim2020adversarially, cherapanamjeri2020adaptive, cohen2022robustness, braverman2021adversarial, woodruff2022tight, attias2023advstreaming}. Existing research focuses on centralized settings, and in this paper, we initiate the study of adversarial robustness for distributed tracking. We provide a new randomized algorithm with communication $\Tilde{O}({\sqrt{k}}/{\alpha})$, which provably tracks the count within an $\alpha$ relative error at all times, even in the presence of an adaptive adversary. As in \cite{hasidim2020adversarially}, we utilize differential privacy (DP) to construct robust tracking algorithms. The main idea is to use DP to protect the internal randomness, so that the adversary cannot learn enough information about it to generate bad inputs. However, due to the ``event-driven nature" of distributed tracking algorithms, we cannot protect the randomness in the usual sense of DP (which will be elaborated in more details below). Thus, the DP framework of \cite{hasidim2020adversarially} is not directly applicable.

To address this difficulty, a relaxed version of differential privacy, called \textit{partial differential privacy}, is introduced. Moreover, a new generalization theorem for partial DP is proved. In partial DP we allow an arbitrary small subset of the database to be revealed, and only require the privacy of the remaining dataset is protected. The power of the new definition comes from the fact that the privacy leaked set can be chosen by the algorithm after the interaction with the adversary and the set can depend on the actual transcript. Intuitively, an interactive mechanism satisfies partial DP as long as \emph{after} the interaction, we can always find a large subset whose privacy is protected. On the other hand, since the set we try to protect is not fixed in advance, it is subtle to give the right notion of ``protecting the privacy of a large part of the data". Besides this new notion of DP, our algorithm deviates from the framework of \cite{hasidim2020adversarially} in many other details. For instance, our algorithm does not treat existing oblivious algorithms as a black box; instead, we directly modify oblivious algorithms and perform a more fine-grained privacy analysis.
The contributions of this paper are summarized as follows:
\begin{enumerate}
\item We initiate the study of adversarially robust distributed tracking and propose the first robust counting tracking algorithm with near optimal communication. 

\item To overcome the inherent challenges that arise from the distributed nature of the problem, we introduce a relaxed (and more general) version of differential privacy and prove a new generalization theorem for this notion. We believe that this new generalization theorem can be of independent interest and may have broader applications beyond count tracking.
\end{enumerate}

\subsection{Problem Definitions and Previous Results} \label{sec:definition}
Throughout this paper, we use $\M$ to denote the tracking algorithm/mechanism and $\A$ to denote the adversary. $N$ is used to denote the total number of items. 

\paragraph{The model and its event-driven nature} We assume there exists a point-to-point communication channel between each site and the server, and communication is instantaneous with no delay. It is convenient to assume that the time is divided into discrete time steps. In each step, the adversary picks one site and sends it a new item. The adversary is also allowed to skip the step and do nothing, and because of this, algorithms that can trigger new events based on the global time do not have an advantage over purely event-driven ones. For example, the server may have wanted to wait a random number of time steps before updating the output, but the adversary can always skip a large number of steps before sending the next item, which makes the waiting meaningless. That being said, it is not a restriction to only consider \emph{event-driven} algorithms: the internal state of each site changes only when it receives a new item or a new message from the server, and the server's state changes only if a new message from sites arrives. 

\paragraph{Distributed count tracking} The goal of a count tracking algorithm $\M$ is to output an $(\alpha, \beta)$-approximation of the total number of items received by all sites. More specifically, with probability at least $1 - \beta$, the output of $\M$ is $(1 \pm \alpha$)-accurate with respect to the true answer at all time steps simultaneously. We measure the complexity of the algorithm by the total communication cost between the server and all sites. Consistent with prior research, communication cost is expressed in terms of words unless otherwise stated. We assume that any integer less than $N$ can be represented by a single word. To simplify the presentation, we assume $k \leq \frac{1}{\alpha^2}$. The case $k > \frac{1}{\alpha^2}$ can be solved with the same technique, {with an extra additive $O(k\log N)$ term in the communication complexity\footnote{{Note that this extra additive $O(k\log N)$ term for $k > \frac{1}{\alpha^2}$ also exists in previous work \cite{huang2012randomized} on oblivious distributed streams.}}}.

\paragraph{The adversarial model} The setting can be viewed as a two-player game between the tracking algorithm $\M$ and the adversary $\A$. At each time step $t$, 
\begin{enumerate}
    \item $\A$ generates a pair $u_t = (i, x)$, where $x$ is the item and $i$ is the site to send $x$ to; and $u_t$ depends on the previous items and previous outputs of $\M$.
    \item $\M$ processes $u_t$ and outputs its current answer $a_t$.
\end{enumerate}
 The interaction between $\A$ and $\M$ generates a transcript $\pi = (u_1, a_1, u_2, a_2, \cdots)$. The objective of $\A$ is to cause $\M$ to output an incorrect answer at some step $t$.
 
\paragraph{Existing results on count tracking} Previous results and their main ideas are discussed here.

\emph{Deterministic complexity.} There is a simple deterministic solution to count tracking. Each site notifies the server every time their counter increases by a factor of $1 + \alpha$. Then, the server always maintains an $\alpha$-approximation to each site's counter, and their sum is an $\alpha$-approximation to the total count. It is easy to see the communication complexity of this algorithm is $O(\frac{k}{\alpha}\cdot\log N)$. We note deterministic algorithms are trivially robust to adaptive inputs. A deterministic communication lower bound of $\Omega(\frac{k}{\alpha}\cdot\log N)$ was proved in \cite{yi2009optimal}.

\emph{Randomized complexity.} A randomized algorithm with $O(\frac{\sqrt{k}}{\alpha}\cdot\log N)$ communication and constant error probability was proposed in \cite{huang2012randomized}, which was shown to be optimal in the same paper. The main idea of their algorithm is as follows. Let $N$ be the current number of items. Unlike the above deterministic algorithm, in which each site notifies its local count according to deterministic thresholds, now the thresholds are set randomly. Let $e_i$ be the discrepancy between the true local count on site $i$ and its estimation on the server, and the total error is $e=\sum_i^k e_i$. For deterministic algorithms, all $e_i$ could have the same sign in the worst case, so on average, $e_i$ has to be less than $\alpha N/k$. On the other hand, in the randomized algorithm, each $e_i$ is a random variable. Suppose $\var[e_i] \le (\alpha N)^2/k$ for each $i$, the total variance $\var[e]\le (\alpha N)^2$, and it is sufficient to obtain an $\alpha$-approximation with constant probability by standard concentration inequalities. Compared to deterministic estimators, now each local error $e_i$ may far exceed $\alpha N/k$.

\emph{Robustness to adaptive inputs.} In the randomized approach described above, the analysis crucially relies on the independence assumption on $e_i$'s, since otherwise the variances do not add up and concentration inequalities cannot be applied. When the adversary is oblivious, the independence holds as long as each site uses independent random numbers. However, in the adaptive setting, this does not hold any more, and it becomes unclear whether the $\sqrt{k}$ improvement is still achievable.

\subsection{Existing Robust Streaming Frameworks}
Distributed tracking is a natural combination of streaming algorithms \cite{alon1996space} and communication complexity \cite{yao1979some}. Robust streaming algorithms design has become a popular topic recently and several interesting techniques have been proposed. Next, we provide a brief overview on the existing frameworks for robust streaming algorithms. Let $\mathcal{F}$ be the target function, for example, the number of distinct elements.

\paragraph{Sketch switching}\cite{ben2020adversarial} Given a stream of length $N$ and an accuracy parameter $\alpha$, the flip number, denoted as $\lambda_{\alpha, N}$, is the number of times that the target function $\mathcal{F}$ changes by a factor of $1 + \alpha$. For insertion-only streams and a monotone function $\mathcal{F}$, $\lambda = \frac{1}{\alpha}\cdot \log N$. In sketch switching, we initialize $\lambda$ independent copies of an oblivious algorithm, and items in the stream are fed to all copies. The stream can be divided into $O(\lambda)$ phases; in each phase $\mathcal{F}$ increases roughly by a factor of $1+\alpha$. During the $j$th phase, the output remains the same, and the $j$th copy is used for tracking the value. When the estimate (from the $j$th copy) has become larger than the last released output by a factor of $(1 + \alpha)$, the output is updated and $j \leftarrow j+1$. The robustness holds because each copy is utilized no more than once, and once its randomness is revealed, the algorithm switches to a new copy. The space complexity is $\lambda$ times the space of the oblivious algorithm. Applying sketch switching on the algorithm of \cite{huang2012randomized} results in a robust count tracking algorithm. However, the communication complexity increases by a factor of $\lambda$, which can be worse than the deterministic bound. 

\paragraph{Difference estimator} Woodruff et al. \cite{woodruff2022tight} refined the sketch switching approach significantly and proposed the difference estimator (DE) framework. Informally, instead of using oblivious sketches as the switching unit, the DE framework divides the stream into blocks and uses sketches on each block as switching units. Consider a part of the stream, denoted by $S$, in which the value of $\mathcal{F}$ increases from $F$ to $2F$. A technique called difference estimator (DE) was proposed to estimate the difference between values of $\mathcal{F}$ at current time $t$ and some earlier time $t_0$. The estimator is generated by maintaining $\ell = \log \frac{1}{\alpha}$ levels of DEs. In level $1$, $S$ is divided into $\frac{1}{\alpha}$ blocks and the value of $\mathcal{F}$ increases by $\alpha F$ in each block. In the $j$th level, $S$ is divided into $\frac{1}{\alpha 2^{j-1}}$ blocks, and the DEs produce estimators with additive error $\alpha F$. \cite{woodruff2022tight} proved that for many important problems, the space complexity of such DEs is $\alpha \mathrm{Space}(\mathcal{F})2^{j-1}$, where $\mathrm{Space}(\mathcal{F})$ is the space complexity of $\mathcal{F}$ in the oblivious setting. Since there are $\frac{1}{\alpha 2^{j-1}}$ DEs on the $j$th level, the total space of level $j$ is $\mathrm{Space}(\mathcal{F})$ and the space is $\ell \mathrm{Space}(\mathcal{F})$ over all levels. Since blocks from all levels form a dyadic decomposition, the final estimator is the sum of $\ell$ DEs, one from each level. Thus, the total error is $\ell\alpha F$, and by adjusting $\alpha$ in the beginning by a factor of $\ell$, this produces the desired error. Applying the DE framework to distributed tracking, the communication bottleneck is from level $1$, where there are $1/\alpha$ DEs. It requires synchronization at the beginning of each block, so that all sites and the server are able to agree to start a new DE. A synchronization incurs $k$ communication; thus, even ignoring other cost, the total cost is at least $\frac{k}{\alpha}$, which is no better than the deterministic bound.

\paragraph{Differential privacy} Hassidim et al.\ \cite{hasidim2020adversarially} proposed a framework using tools from DP. Instead of switching to fresh sketches, this framework protects the randomness in the sketch using DP. The random bits in each copy of the oblivious sketch are viewed as a data point in the database, and the adversary generates an item (considered as a query in DP) at each time and observes the privatized output. By the generalization theorem of DP, if the interaction transcript satisfies DP w.r.t.\ the random bits, then the error in the output is close to the error of an oblivious algorithm in the non-adaptive setting (the closeness depends on the magnitude of the noise injected). 
Similar as in sketch switching, the output is updated only when it changes by a $(1 + \alpha)$ factor, and thus there are only $\lambda$ time steps in which the adversary observes ``useful information". Therefore, it is not surprising that the sparse vector technique \cite{dwork2009complexity} is applied. By the advanced composition theorem of DP \cite{dwork2010boosting}, $\Tilde{O}(\sqrt{\lambda})$ independent copies of the oblivious algorithm is enough for $\lambda$ outputs. Therefore, compared with sketch switching, the space increases by only a factor of $\sqrt{\lambda}$. Attias et al.\ \cite{attias2023advstreaming} gave an improvement by incorporating difference estimator to the DP framework. 

However, there is a fundamental challenge in applying the DP framework to distributed tracking. As discussed in Section~\ref{sec:definition}, all distributed tracking algorithms are essentially event-driven. Now let us focus on a time step $t$ where the server updates its output. Because of the event-driven nature, this update is triggered by the event that some site $i$ just sent a message. Similarly, site $i$ sending the message is also triggered by another event, and so on and so forth. The start of this event chain must be that the adversary sends an item to some site $j$, triggering $j$ to send the first message. This causes additional privacy leakage. For example, suppose whether to send a message is indicated by a binary function $f(n_j, r_j)$ where $r_j$ is the random number used in the tracking algorithm and $n_j$ is the local count on site $j$. At time $t$, the adversary knows $f(n_j, r_j)=1$, which makes the algorithm have no privacy guarantee. This problem is attributed to the fact that the server can update the output only after it receives a message. So to achieve the desired level of privacy, one has to add noise to $f$ locally on each site, but the total noise from all sites can be too large.
\subsection{Our Method}
\paragraph{Technical overview} In our algorithm, each site divides its stream into continuous blocks of size $\Delta=\Tilde{O}(\frac{\alpha N}{\sqrt{k}})$. For each block $j$, the site draws a random integer $r_j$ with uniform distribution in $[\Delta]$. The site sends a message to the server when the number of items in a block $j$ reaches the threshold $r_j$. The server output $m \Delta$, where $m$ is the number of messages received from all sites. By a similar analysis as in \cite{huang2012randomized}, the estimator has $\alpha N$ additive error. To robustify this algorithm, we also use DP. To overcome the limitations of the existing DP framework, we make several critical changes. First, instead of running multiple independent copies of the oblivious algorithm, we run a single copy of the above oblivious algorithm. Secondly, we perform a more refined privacy analysis, in which each random number $r_j$ is treated as the privacy unit. Therefore, the analysis framework is quite different from \cite{hasidim2020adversarially}. Thirdly, and most importantly, we do not require the algorithm to have a privacy guarantee in the traditional sense; instead, we privatize the algorithm so that, at any time, we can always find a large set of random numbers whose privacy is protected. However, it is unclear how to change the original DP definition to capture the meaning of ``protecting a large subset of the dataset", since this set depends on the current transcript, and may change at each time step. Moreover, for this weaker DP, we need to prove that the generalization theorem still holds. To this end, we introduce partial DP and prove a generalization theorem for it. We believe partial DP is quite general and will have more applications beyond robust distributed tracking. The main results of this paper is summarized in the next theorem.
\begin{theorem}[Main theorem] With probability $1-\delta$, our mechanism $\M$ comprised of Algorithm \ref{alg:site} and \ref{alg:server} outputs an $\alpha$-approximate to the total count at all times in the adversarial setting. The communication complexity is $O(\frac{C\sqrt{k}\log N}{\alpha})$ where $C = \sqrt{\left(\left(\log\sqrt{k}\right)^{1.5} + 1\right)\cdot \log\left(\frac{8\sqrt{k}\log N}{\alpha \delta}\right)}$.
\label{thm:main}
\end{theorem}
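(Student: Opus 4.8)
The plan is to decouple the \emph{statistical} behaviour of the underlying sampling scheme from its \emph{privacy} behaviour and then stitch them together with a generalization theorem for partial DP. In outline: (1) analyse the plain oblivious scheme, bounding both its error and its communication; (2) show the privatized mechanism $\M$ is partially DP with respect to the random thresholds; (3) invoke the partial-DP generalization theorem to transfer the oblivious error bound to the adaptive game; (4) take a union bound over the failure events.

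\textbf{Accuracy and communication of the oblivious scheme.} Partition the execution into $O(\log N)$ epochs, the one handling counts in $[2^{s},2^{s+1})$ using block length $\Delta_s=\tilde\Theta\!\big(\alpha 2^{s}/(C\sqrt k)\big)$; each site reports completed blocks deterministically and signals its current incomplete block through a noisy threshold test against $r_j\sim\uni[\Delta_s]$. On a non-adaptive stream the server's estimate and the truth differ only through the $k$ incomplete blocks; writing $e_i$ for site $i$'s residual discrepancy, $e_i$ is bounded by $O(\Delta_s)$, has mean $O(\text{noise scale})$, and has variance $O(\Delta_s^2)$, so Bernstein's inequality plus a union bound over the $O(\log N)$ epochs and the $\tilde O(\sqrt k\log N/\alpha)$ times the output can move gives $|\sum_i e_i|\le\alpha N$ with probability $1-\delta/3$ — which fixes the $\log\!\big(8\sqrt k\log N/(\alpha\delta)\big)$ factor in $C$, while the requirement that the injected noise itself stay below $\alpha N$ (under advanced composition, see below) accounts for the $(\log\sqrt k)^{1.5}$ factor. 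Checking only at output-move times is enough because between consecutive moves the true count grows by less than a $1+\alpha$ factor. Counting messages — $\tilde O(C\sqrt k/\alpha)$ bulk notifications, $\le k$ residual signals, and $O(k)$ synchronizations per epoch, with $k\le\sqrt k/\alpha$ since $k\le1/\alpha^2$ — gives total communication $O\!\big(C\sqrt k\log N/\alpha\big)$.

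\textbf{Privacy and generalization.} Take the database $D$ to be all random thresholds $r_j$, one row each, and let the adversary's transcript $\pi=(u_1,a_1,\dots)$ be the mechanism's output. I would prove $\pi$ is $(\eps,\delta')$-partially DP with respect to $D$ with leaked set $L_\pi$ equal to the thresholds of blocks that have already fired — that is, whatever the event-driven chain (adversary insertion $\rightarrow$ site message $\rightarrow$ server update) could have exposed — and $|L_\pi|$ always small relative to $|D|$; the enabling modifications are a noisy residual comparison (so an active threshold is never revealed exactly) and a sparse-vector/AboveThreshold gate on the server's output (so privacy is charged only at the $\lambda=\tilde O(C\sqrt k/\alpha)$ output-change steps, with advanced composition over them yielding $\eps$). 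Then, since the true error $e(\pi,D)$ is a bounded-sensitivity statistic whose value on a \emph{fresh} independent $D'$ has mean $O(\text{noise scale})$ and variance $O(k\Delta_s^2)$ by the previous paragraph, the partial-DP generalization theorem — that such a statistic evaluated on the true database of a partial-DP transcript is close, up to a slack controlled by $\eps,\delta'$ and $|L_\pi|$, to its value on a fresh database — yields $|e(\pi,D)|\le\alpha N$ with probability $1-\delta/3$ against any adaptive $\A$. A union bound over this and the earlier failure events gives the claimed $1-\delta$.

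\textbf{Main obstacle.} The crux is making the privacy step and the generalization step fit. The transcript-dependent leaked set $L_\pi$ must be large enough to swallow everything the event-driven protocol genuinely reveals about the thresholds — a single output update is the tail of a chain of triggered messages running all the way back to the adversary's insertion — yet small enough that the surviving protected thresholds retain the $\Theta(k\Delta_s^2)$ of variance the concentration argument consumes and that the generalization slack stays a mere $\polylog$ overhead. Tied to this is the quantitative noise calibration — large enough for the partial-DP guarantee under advanced composition over $\lambda$ queries, small enough that the noise does not itself exceed $\alpha N$ — which is precisely what pins down the value of $C$. Formulating ``protecting a large, transcript-chosen subset of the data'' correctly and proving that the generalization theorem still holds for it is the genuinely new technical content, and the step I expect to be hardest.
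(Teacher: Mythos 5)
Your high-level architecture (per-block uniform thresholds, DP protection of the thresholds as the database, a transcript-dependent leaked set, a partial-DP generalization theorem, and a union bound over epochs) matches the paper's plan, and you correctly identify the generalization theorem as the crux. But your instantiation of the leaked set is a genuine gap. You take $L_\pi$ to be ``the thresholds of blocks that have already fired.'' That set has size roughly $B_t\approx N_t/\Delta$, and even its intersection with the support of the current query (the $k$ active thresholds) can be $\Theta(k)$. In any partial-DP generalization bound the leaked set contributes an additive slack of order $|f_L(\pi)|\cdot\Delta$ (or $|f_L(\pi)\cap G(\pi)|\cdot\Delta$ in the refined, few-coordinates version) in count units, i.e.\ up to $k\Delta=\alpha N_0\sqrt{k}/(8C)\gg\alpha N_0$, which destroys the accuracy claim. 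The point the paper exploits is that a bit arriving at the server without an output change does \emph{not} go into the leaked set at all: its influence on the adversary's view is only through the noisy-threshold (sparse-vector) test and the Binary Mechanism, and is paid for by the $\eps$ of those mechanisms. Only at a $\top$ step is the triggering site's active threshold revealed exactly, so the leaked set is one threshold per output update, $\kappa\le\sqrt{k}$ per round, contributing only $\sqrt{k}\Delta=\alpha N_0/(8C)$.

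The second gap is the privacy/parameter accounting. You never fix $\eps$, and you charge privacy ``at the $\lambda=\tilde O(C\sqrt{k}/\alpha)$ output-change steps with advanced composition,'' attributing the $(\log\sqrt{k})^{1.5}$ in $C$ to that composition. The paper does neither: randomness is reset every round (so rounds are analyzed in isolation), and within a round pure $(\eps,0,\sqrt{k})$-partial DP follows because changing a single threshold perturbs the per-phase bit counts $c_1,\dots,c_{\sqrt{k}}$ by $1$ in at most two phases, so only $O(1)$ noisy-threshold comparisons and a sensitivity-$2$ input to the Binary Mechanism are affected; the $(\log\sqrt{k})^{1.5}$ factor is the BM's utility error over $\sqrt{k}$ phases, not a composition cost (and your $\lambda$ is the per-epoch message count, not the number of output changes, which is $O(\log N/\alpha)$ overall). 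Finally, the quantitative heart that forces the parameters is missing: one must choose $\eps=C/\sqrt{k}$ and use the refined theorem for queries depending on only $\hat m=k$ of the $m=Ck^2$ thresholds, so that the slack $\hat m(e^{\eps}-1)\Delta+ \kappa\Delta\approx C\sqrt{k}\Delta$ equals the budget $\alpha N_0$; your separate Bernstein analysis of the oblivious scheme is not needed for this (the generalization slack itself plays that role), but without pinning down $\eps$, $\kappa$, and the $\hat m$-refinement the claimed error and the stated value of $C$ do not follow.
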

Compared to the optimal randomized bound in the oblivious setting, the cost of handling adaptive adversaries is at most an extra factor of $C$.
\section{Preliminaries}
\paragraph{Notation} Let $\mathbf{\Pi}$ be the space of all possible transcripts of the interaction between $\M$ and $\A$. 
We use $\Pi$ to denote the transcript random variable, $\pi \in \mathbf{\Pi}$ to denote a realization of $\Pi$. The Laplace distribution with $0$ mean and $2b^2$ variance is denoted by $\Lap(b)$. We use the notation $S\sim \mathcal{P}^m$ to indicate that $S$ is a dataset comprised of $m$ i.i.d samples from distribution $\mathcal{P}$. The conditional distribution of $S$ given the transcript $\pi$ is represented by $\mathcal{Q}_\pi$. The query function is denoted by $q:\mathcal{X}^m \rightarrow [0, 1]$ and $q_t$ denotes the query function at time step $t$. If $q_t$ is a linear query, then $q_t(S) = \frac{1}{m}\sum_{i = 1}^d q_{t,i}(S_i)$, where $q_{t,i}: \mathcal{X}\rightarrow[0, 1]$ is a sub-query function on a single sample. The expectation of $q$ over the distribution $\mathcal{P}^m$ is denoted by $q(\mathcal{P}^m) = \E_{S\sim\mathcal{P}^m}[q(S)]$. And $q(Q_{\pi}) = \E_{S\sim Q_{\pi}}[q(S)]$.

\paragraph{Differential privacy} Let $S \in \mathcal{X}^m$ be the database that $\M$ needs to protect, for example in our case, the random numbers (thresholds) in $\M$. Denote the interaction between $\A$ and $\M$ by $I(\M, \A;S)$.
\begin{definition}[Differential Privacy]
\label{def:dp}
$\M$ is $(\eps, \delta)$-differentially private if for any $\A$, any two neighboring database $S\sim S'\in \mathcal{X}^m$ differing only in one position, and any event $E\subseteq \mathbf{\Pi}$, we have
\begin{align*}
    \pr\limits_{\Pi \sim I(\M,\A;S)}\left[\Pi \in E\right] \leq e^{\eps} \cdot  \pr\limits_{\Pi \sim I(\M,\A;S')}\left[\Pi \in E\right] + \delta.
\end{align*}
\end{definition}

\begin{lemma}[Laplace Mechanism \rm\cite{dwork2014algorithmic}]
\label{lem:lap}
Let $x, x' \in \mathbb{R}$ and $ |x - x'| \leq l$. Let $\sigma \sim \Lap(l/\eps)$ be a Laplace random variable. For any measurable subset $E \subseteq \mathbb{R}$, $\pr[x + \sigma \in E] \leq e^\eps \cdot \pr[x' + \sigma \in E]$.
\end{lemma}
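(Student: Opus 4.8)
The plan is to reduce the statement to a pointwise density comparison together with the reverse triangle inequality, and then integrate. First I would recall that $\sigma \sim \Lap(b)$ with $b = l/\eps$ has probability density $p(z) = \frac{1}{2b}\,e^{-\abs{z}/b}$ on $\R$. Consequently, the random variable $x + \sigma$ has density $y \mapsto p(y - x)$, and $x' + \sigma$ has density $y \mapsto p(y - x')$. For any measurable $E \subseteq \R$ we then have $\pr[x + \sigma \in E] = \int_E p(y - x)\,dy$ and $\pr[x' + \sigma \in E] = \int_E p(y - x')\,dy$, so it suffices to bound the ratio of the two densities at every point $y$.

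The key step is the pointwise bound. For any $y \in \R$,
\[
\frac{p(y - x)}{p(y - x')} = \exp\!\left(\frac{\abs{y - x'} - \abs{y - x}}{b}\right) \le \exp\!\left(\frac{\abs{x - x'}}{b}\right) \le \exp\!\left(\frac{l}{b}\right) = e^{\eps},
\]
where the first inequality is the reverse triangle inequality $\abs{y - x'} - \abs{y - x} \le \abs{(y-x') - (y-x)} = \abs{x - x'}$, and the second uses the hypothesis $\abs{x - x'} \le l$ together with $b = l/\eps$. Hence $p(y - x) \le e^{\eps}\, p(y - x')$ for all $y$.

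Finally I would integrate this inequality over $E$: since the integrand on the left is dominated pointwise by $e^{\eps}$ times the integrand on the right, monotonicity of the integral gives
\[
\pr[x + \sigma \in E] = \int_E p(y - x)\,dy \le e^{\eps} \int_E p(y - x')\,dy = e^{\eps}\, \pr[x' + \sigma \in E],
\]
which is exactly the claim. There is no real obstacle here — the only thing to be careful about is invoking the reverse triangle inequality in the right direction (we need an \emph{upper} bound on $\abs{y-x'} - \abs{y-x}$, not on its absolute value), and noting that the argument is symmetric in $x$ and $x'$, so the stated one-sided inequality in fact holds with the roles swapped as well. One could also phrase the whole thing as a special case of the general Laplace-mechanism guarantee for functions of $\ell_1$-sensitivity $l$ restricted to the one-dimensional identity query, but the direct density computation above is self-contained and shorter.
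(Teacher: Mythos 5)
Your proof is correct and is exactly the standard argument for the Laplace mechanism (the paper simply cites this lemma from Dwork and Roth, whose proof is the same density-ratio computation): bound the ratio of the two shifted Laplace densities pointwise by $e^{\eps}$ via the reverse triangle inequality, then integrate over $E$. No gaps; the remark about symmetry in $x$ and $x'$ is a nice bonus but not needed for the stated one-sided inequality.
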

\paragraph{Private continual counting} Consider the \textit{continual counting problem}: Given an input stream consists of $\{0, 1\}$, continual counting requires to output an approximate count of the number of $1$'s seen so far at every time step. Different techniques have been proposed to achieve differential privacy under continual observation \cite{dwork2010differential, chan2011private}. In this paper, we make use of the Binary Mechanism (BM) \cite{chan2011private} (see appendix for its pseudo code).
\begin{theorem}(\rm\cite{chan2011private})
\label{thm:bm}
    BM is $(\eps,0)$-differentially private with respect to the input stream. With probability at least $1 - \delta$, the additive error is $O(\frac{1}{\eps}\cdot (\log T)^{1.5} \cdot \log(\frac{T}{\delta }))$ at all time steps $t \in [T]$.
\end{theorem}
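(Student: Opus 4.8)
The plan is to recall the binary-tree construction behind the Binary Mechanism and then establish privacy and accuracy as two separate claims. Assume first that $T$ is a power of two; otherwise pad the stream with zeros, which at most doubles $T$. Form a complete binary tree with $T$ leaves holding the input bits $a_1,\dots,a_T$ in left-to-right order, and associate with each tree node $v$ the \emph{partial sum} (p-sum) $\Sigma_v = \sum_{i\in\mathrm{leaves}(v)} a_i$. The mechanism releases, as soon as the last leaf under $v$ has arrived, the noisy p-sum $\widehat{\Sigma}_v = \Sigma_v + Z_v$, where the $Z_v$ are mutually independent $\Lap(\Theta(\log T/\eps))$ random variables. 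To answer the running count at time $t$, $\BM$ writes $[1,t]$ as a disjoint union of at most $\lceil\log T\rceil$ dyadic intervals, each equal to $\mathrm{leaves}(v)$ for a distinct node $v$ (the canonical binary decomposition of $[1,t]$), and outputs the sum of the corresponding $\widehat{\Sigma}_v$.

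For privacy, fix two neighboring streams differing only in the bit $a_i$. The only p-sums whose values change are those of the $\lceil\log T\rceil+1$ nodes on the root-to-leaf path of leaf $i$, and each such $\Sigma_v$ changes by exactly $1$ while every other p-sum is unchanged. Since the released quantities are precisely the noisy p-sums, the Laplace Mechanism (Lemma \ref{lem:lap}) makes each affected node's release $\Theta(\eps/\log T)$-DP on its own, and basic composition across the $O(\log T)$ affected nodes makes the joint release of all p-sums $(\eps,0)$-DP, with the noise scale calibrated so that the overall guarantee is exactly $\eps$. Finally, the output of $\BM$ at every time step is a fixed, data-independent function of the released p-sums (it depends only on $t$), so closure of differential privacy under post-processing gives that $\BM$ is $(\eps,0)$-DP with respect to the input stream; $\delta=0$ because only pure Laplace noise is used. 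The one point here that deserves care is that the p-sums are released \emph{online}, possibly against an adaptive choice of future bits, rather than in a single batch: this loses nothing because within each level of the tree the p-sums partition the stream (so flipping one bit touches at most one p-sum per level), the injected noises are fresh and independent, and differential privacy composes adaptively while an adversary's future choices are themselves post-processing of the transcript revealed so far.

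For accuracy, fix a time step $t$. The error of $\BM$ at time $t$ equals $\sum_{v\in D_t} Z_v$, where $D_t$ is the canonical decomposition of $[1,t]$, so it is a sum of $m := |D_t| \le \lceil\log T\rceil$ independent $\Lap(b)$ variables with $b = \Theta(\log T/\eps)$. Using a standard sub-exponential tail bound for a sum $Y$ of $m$ i.i.d.\ $\Lap(b)$ variables, namely $\Pr[|Y|>\lambda] \le 2\exp\!\big(-c\min\{\lambda^2/(mb^2),\,\lambda/b\}\big)$ for an absolute constant $c>0$, and taking $\lambda = \Theta(b\sqrt{m}\cdot\log(T/\delta))$ makes the right-hand side at most $\delta/T$. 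A union bound over all $t\in[T]$ then shows that with probability at least $1-\delta$ the error is at most $\lambda = O\!\big(\frac{1}{\eps}(\log T)^{1.5}\log(T/\delta)\big)$ simultaneously at all $t\in[T]$, which is the claimed bound; we are deliberately using a loose form of the tail bound, since a sharper one (replacing $\log(T/\delta)$ by $\sqrt{\log(T/\delta)}$ over part of the range) is not needed. Overall, the canonical-decomposition bookkeeping and the Laplace tail estimate are routine; the only step I would expect to require genuine care is making the online, adversarial release of p-sums in the privacy argument fully rigorous via adaptive composition over the tree levels.
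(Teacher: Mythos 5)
Your proof is correct, but note that the paper does not prove this statement at all: Theorem~\ref{thm:bm} is imported from \cite{chan2011private}, and the paper only reproduces the Binary Mechanism's pseudocode in the appendix. Your argument is essentially a faithful reconstruction of the original Chan--Shi--Song proof: per-level p-sum sensitivity of one plus composition over the $O(\log T)$ nodes on a root-to-leaf path for pure $\eps$-DP (with adaptivity handled by adaptive composition and post-processing, as you rightly flag), and a Laplace-sum tail bound with a union bound over the $T$ time steps, which with noise scale $\Theta(\log T/\eps)$ and at most $\lceil\log T\rceil$ terms in the dyadic decomposition yields exactly the stated $O\bigl(\frac{1}{\eps}(\log T)^{1.5}\log(T/\delta)\bigr)$ error; your deliberately loose sub-exponential tail suffices for this bound.
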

\begin{remark}
  Note that although the input of BM is bits, it can directly extend to real numbers without any modification. The same privacy and utility guarantees hold. 
\end{remark} 
\paragraph{Generalization by differential privacy} The generalization guarantee of differential privacy arises from adaptive data analysis. Existing research \cite{dwork2014algorithmic, bassily2016algorithmic, jung2020new} has shown that any mechanism for answering adaptively chosen queries that is differentially private and sample-accurate is also accurate out-of-sample.
\begin{definition}[Accuracy]
\label{def:acc}
$\M$ satisfies $(\alpha, \beta)$-sample accuracy for adversary $\A$ and distribution $\mathcal{P}$ iff
\begin{align*}
    \pr\limits_{S\sim \mathcal{P}^m, \Pi\sim I(\M, \A;S)}\left[\max_{t}\left| a_t- q_t(S) \right| \geq \alpha\right]\leq \beta,
\end{align*}
where $a_t$ is the output of $\M$ and $q_t$ is the query given by $\mathcal{A}$ at time $t$. $\M$ satisfies $(\alpha, \beta)$-distributional accuracy iff
\begin{align*}
    \pr\limits_{S\sim \mathcal{P}^m, \Pi\sim I(\M, \A;S)}\left[\max_{t}\left|a_t - q_t(\mathcal{P}^m)\right| \geq \alpha\right]\leq \beta.
\end{align*}
\end{definition}
Recently, Jung et al. \cite{jung2020new} discovered a simplified analysis of the generalization theorem by introducing the posterior data distribution $\mathcal{Q}_{\pi}$ as the key object of interest. Through a natural resampling lemma, they showed that a sample-accurate mechanism is also accurate with respect to $\mathcal{Q}_{\pi}$. 
\begin{lemma}[\cite{jung2020new}]
\label{lem:sample_acc}
Suppose that $\mathcal{M}$ is $(\alpha, \beta)$-sample accurate. Then for every $c > 0$ it also satisfies:
\begin{align*}
    \pr\limits_{S \sim \mathcal{P}^m, \Pi \sim I(\mathcal{M}, \mathcal{A} ; S)}\left[\max _t\left|a_t-q_t\left(\mathcal{Q}_{\Pi}\right)\right|>\alpha+c\right] \leq \frac{\beta}{c}.
\end{align*}
\end{lemma}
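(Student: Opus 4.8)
The plan is to follow the posterior/resampling route of Jung et al. The one conceptual ingredient is a resampling identity: since $\Q_\pi$ is \emph{by definition} the conditional law of $S$ given $\Pi=\pi$, for any event $E(S,\pi)$ jointly measurable in its two arguments one has
\[
  \pr\limits_{S\sim\p^m,\ \Pi\sim I(\M,\A;S)}\!\big[E(S,\Pi)\big]
  \;=\;\E_{\Pi}\Big[\ \pr\limits_{S'\sim\Q_\Pi}\big[E(S',\Pi)\big]\ \Big],
\]
i.e.\ one may resample $S'\sim\Q_\Pi$ after the interaction without changing the joint law of $(S,\Pi)$. I would first record this, noting that conditioned on $\Pi=\pi$ everything the adversary produces — the queries $q_t$ and the answers $a_t$ — is already frozen, so the adaptivity of the $q_t$'s raises no measurability issue as long as $\Q_\pi$ is taken to be a regular conditional distribution.

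Next I would transport the sample-accuracy hypothesis across this identity. Define, for a fixed transcript $\pi$, the quantity
\[
  g(\pi)\ :=\ \pr\limits_{S'\sim\Q_\pi}\Big[\,\max_t\,\big|a_t-q_t(S')\big|>\alpha\,\Big],
\]
with $a_t$ and $q_t$ read off $\pi$. Applying the identity above to the event $\{\max_t|a_t-q_t(S)|>\alpha\}$ turns the assumption that $\M$ is $(\alpha,\beta)$-sample accurate (Definition~\ref{def:acc}) into the statement $\E_{\Pi}[g(\Pi)]\le\beta$, and Markov's inequality then gives $\pr_{\Pi}[g(\Pi)>c]\le\beta/c$ for every $c>0$.

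The remaining step is a deterministic, per-transcript comparison between $a_t$ and $q_t(\Q_\pi)$. Since $q_t(\Q_\pi)=\E_{S'\sim\Q_\pi}[q_t(S')]$ and $a_t$ is a constant given $\pi$, Jensen's inequality gives $|a_t-q_t(\Q_\pi)|\le\E_{S'}[|a_t-q_t(S')|]$; splitting this expectation according to whether $|a_t-q_t(S')|\le\alpha$, and using that $q_t$ and (after clipping, w.l.o.g.) $a_t$ lie in $[0,1]$ so the integrand never exceeds $1$, I bound the remaining mass by $\pr_{S'}[|a_t-q_t(S')|>\alpha]\le g(\pi)$. Hence $|a_t-q_t(\Q_\pi)|\le\alpha+g(\pi)$ \emph{simultaneously for all $t$}, so the event $\{\max_t|a_t-q_t(\Q_\Pi)|>\alpha+c\}$ — which depends on $\Pi$ only — is contained in $\{g(\Pi)>c\}$; combining with the Markov bound from the previous step proves the lemma.

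I expect the only real subtlety, rather than an obstacle, to be stating the resampling identity cleanly and checking that adaptively chosen $q_t$'s do not break it; once $\Q_\pi$ is accepted as a regular conditional distribution this step is immediate, and what is left is just Markov, Jensen, and the $[0,1]$-boundedness of queries and outputs — the latter obtained without loss of generality by clipping each $a_t$ into $[0,1]$, which only decreases every error $|a_t-q_t(\cdot)|$ since $q_t\in[0,1]$.
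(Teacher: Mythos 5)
Your argument is correct and is essentially the same resampling-plus-Markov proof from \cite{jung2020new} that the paper relies on (the paper does not reprove Lemma~\ref{lem:sample_acc}; its appendix only notes that the cited proof uses the fact $a_{t^*(\Pi)}-q_{t^*(\Pi)}(S')-\alpha\le 1$, i.e., the same $[0,1]$-boundedness you invoke). One small caveat: clipping $a_t$ into $[0,1]$ is not literally ``without loss of generality'' for the stated conclusion, since the claim concerns the unclipped $a_t$ (and can fail for $c>1$ if answers are unbounded), so $a_t\in[0,1]$ should be treated as a standing assumption of the model, as the paper implicitly does.
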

Thus, to achieve distributional accuracy, it suffices to prove the closeness between $\mathcal{Q}_{\pi}$ and $\mathcal{P}^m$. Then they showed that this can be guaranteed by differential privacy.
\begin{lemma}[\cite{jung2020new}]
\label{lem:post_stable}
If $\mathcal{M}$ is $(\eps, \delta)$-differentially private, then for any data distribution $\mathcal{P}$, any analyst $\mathcal{A}$, and any constant $c > 0$:
\begin{align*}
    \pr\limits_{S \sim \mathcal{P}^m, \Pi \sim I(\mathcal{M}, \mathcal{A} ; S)}\left[\max _t\left|q_t(\mathcal{P}^m)-q_t\left(\mathcal{Q}_{\Pi}\right)\right|>(e^\eps - 1) +2c\right] \leq \frac{\delta}{c}.
\end{align*}
\end{lemma}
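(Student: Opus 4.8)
The plan is to follow the posterior‑resampling argument of Jung et al.\ \cite{jung2020new}. I treat $q_t$ as a \emph{linear} query $q_t(S)=\frac1m\sum_{i=1}^m q_{t,i}(S_i)$ with $q_{t,i}:\X\to[0,1]$ — the case relevant for count tracking, and the statistical‑query setting from which the lemma is drawn. The goal is to produce a bound on $\big|q_t(\mathcal{Q}_\Pi)-q_t(\mathcal{P}^m)\big|$ that is \emph{uniform} over $t$ and over all linear queries, so that no ``monitor'' over time steps is needed and the adaptivity of the $q_t$'s is free.

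First I would set up per‑coordinate likelihood ratios. Fix a transcript $\pi\in\mathbf{\Pi}$, a base measure $\mu$ on $\mathbf{\Pi}$, and an index $i$; let $f_x$ be the $\mu$‑density of the transcript law $I(\M,\A;S)$ conditioned on $S_i=x$ (the other $m-1$ i.i.d.\ coordinates marginalized out) and $f=\E_{x\sim\mathcal{P}}[f_x]$ the transcript marginal. Then the posterior marginal of $S_i$ given $\Pi=\pi$ has density $x\mapsto f_x(\pi)/f(\pi)$ with respect to $\mathcal{P}$, so
\[
q_t(\mathcal{Q}_\pi)-q_t(\mathcal{P}^m)=\frac1m\sum_{i=1}^m\E_{x\sim\mathcal{P}}\!\left[q_{t,i}(x)\Big(\tfrac{f_x(\pi)}{f(\pi)}-1\Big)\right].
\]
The crux is to control $f_x/f$ via Definition~\ref{def:dp}: for any $x,x'$, further conditioning on the remaining coordinates gives two neighbouring databases, so $(\eps,\delta)$‑DP gives $\Pr[\Pi\in E\mid S_i=x,S_{-i}]\le e^\eps\Pr[\Pi\in E\mid S_i=x',S_{-i}]+\delta$ for all $E$; averaging over $S_{-i}\sim\mathcal{P}^{m-1}$ and then $x'\sim\mathcal{P}$ shows that $\{\Pi\mid S_i=x\}$ is $(\eps,\delta)$‑dominated by the transcript marginal, i.e.\ $\int_E f_x\,d\mu\le e^\eps\int_E f\,d\mu+\delta$ for all $E$, which as usual implies $\int(f_x-e^\eps f)^+\,d\mu\le\delta$.

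Writing $\frac{f_x}{f}-1=\big(\frac{f_x}{f}-e^\eps\big)+(e^\eps-1)$ in the identity and using $0\le q_{t,i}\le1$ gives, for every $\pi$ and $t$,
\[
q_t(\mathcal{Q}_\pi)-q_t(\mathcal{P}^m)\ \le\ (e^\eps-1)+R(\pi),\qquad R(\pi):=\frac1m\sum_{i=1}^m\E_{x\sim\mathcal{P}}\!\left[\Big(\tfrac{f_x(\pi)}{f(\pi)}-e^\eps\Big)^+\right]\ge 0 .
\]
Since $R(\pi)$ does not depend on the query, applying this with $q_{t,i}$ replaced by $1-q_{t,i}$ flips the sign and yields $\max_t\big|q_t(\mathcal{Q}_\Pi)-q_t(\mathcal{P}^m)\big|\le(e^\eps-1)+R(\Pi)$ almost surely. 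Finally, by Tonelli and the domination bound, $\E_\Pi[R(\Pi)]=\frac1m\sum_i\E_{x\sim\mathcal{P}}\big[\int(f_x-e^\eps f)^+\,d\mu\big]\le\delta$, so Markov gives $\Pr_\Pi[R(\Pi)>2c]\le\delta/(2c)\le\delta/c$, and combining with the previous display proves the lemma.

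The step that needs the most care — and what I expect to be the main obstacle — is the handling of the additive $\delta$: a naive argument that fixes a ``bad'' transcript set per coordinate and then unions over the $m$ coordinates loses a multiplicative factor of $m$. The fix is to keep the per‑coordinate excesses inside the single averaged quantity $R(\pi)$ and use only that $\E_\Pi[R(\Pi)]\le\delta$ (with no factor of $m$) in the Markov step. A secondary technicality is making the density manipulations rigorous when transcripts lack densities, which is handled by the standard decomposition of $(\eps,\delta)$‑DP into an $e^\eps$‑multiplicatively‑close part plus a $\delta$‑mass remainder.
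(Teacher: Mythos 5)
Your proof is correct, but it takes a genuinely different route from the one the paper relies on. The paper cites this lemma from \cite{jung2020new} and, for its own partial-DP analogue (Lemma~\ref{lem:trans_linear}), reproduces that style of argument: fix the worst query per transcript via $t^*(\pi)$, define the bad-transcript set $\mathbf{\Pi}_\alpha$ and the positively-biased triples $B_\alpha^{+}$, and compare the coupled law of $(\Pi,I,S_I)$ against the product law, reaching a contradiction; the additive $\delta$ enters through the per-coordinate domination $\pr[\Pi\in E\mid S_i=x]\le e^{\eps}\pr[\Pi\in E]+\delta$ (your averaging step is exactly the paper's Lemma~\ref{lem:rdp_post_sim} specialized to full DP). You instead work pointwise in the transcript with the posterior likelihood ratio $f_x/f$, isolate the approximate-DP excess into the query-independent quantity $R(\pi)=\frac1m\sum_i\E_x\big[(f_x(\pi)/f(\pi)-e^{\eps})^{+}\big]$ with $\E_\Pi[R(\Pi)]\le\delta$, and finish with Markov. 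Each step checks out: the Bayes identity for the posterior marginals, the choice $E=\{f_x>e^{\eps}f\}$ giving $\int(f_x-e^{\eps}f)^{+}d\mu\le\delta$, the $1-q_{t,i}$ trick for two-sided control, and Tonelli plus Markov; you even get the slightly sharper failure probability $\delta/(2c)$. What your route buys is a bound that is uniform over all linear queries for each fixed transcript, so adaptivity and the maximum over $t$ come for free without the monitor/argmax construction, and you correctly identify and avoid the factor-$m$ loss a naive per-coordinate union bound would incur. What the paper's route buys is that it adapts directly to the refinements actually needed here: excluding the leaked indices $f_L(\pi)$ (the $\kappa/m$ term) and restricting to the $\hat m$ active coordinates (Lemma~\ref{lem:rdp_post_sim_m_hat}), and the low-sensitivity extension; your argument could be adapted similarly by restricting the index sum in $R(\pi)$, but that is not automatic from what you wrote.
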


\section{Partial Differential Privacy and Its Generalization Property}
\label{sec:pdp}
In the definition of partial DP, we specify the set of data whose privacy is leaked via a mapping $f_L: \mathbf{\Pi}\rightarrow 2^{[m]}$. Given a transcript $\pi$, partial DP guarantees privacy only on $S\setminus f_L(\pi)$. Intuitively, this means that the transcripts on two database $S,S'$ that differs only on a position $i\in S\setminus f_L(\pi)$ have similar distributions. However, $\pi$ is not known in advance, which causes trouble to this direct definition.
We remedy this by first fixing $i$; then we only consider $S,S'$ that differs only on $i$ and only those events $E$ whose elements do not contain $i$ in their privacy leaked set. 
\begin{definition}[Partial Differential Privacy]
\label{def:rdp}
$\M$ is $(\eps, \delta, \kappa)$-partial differentially private, if there exists a privacy leak mapping $f_L: \mathbf{\Pi}\rightarrow 2^{[m]}$ with $\max_{\pi }|f_L(\pi)|\le \kappa$, the following holds: for any $\A$, any $i$, any $S, S'\in \mathcal{X}^m$ that differs only on the $i$th position,  and any $E \subseteq \mathbf{\Pi}$ such that $i \notin \bigcup\limits_{\pi \in E}f_L(\pi)$,
\begin{align*}
    \pr\limits_{\Pi \sim I(\M,\A;S)}\left[\Pi \in E\right] \leq e^{\eps} \cdot  \pr\limits_{\Pi \sim I(\M,\A;S')}\left[\Pi \in E\right] + \delta.
\end{align*}
\end{definition}
A generalization theorem for partial DP is presented below. Generalization for linear queries suffices for our application, but this can be extended to general low-sensitivity queries. %and provide simpler proofs for $(\eps, 0)$-partial differential privacy in the appendix.
\begin{theorem}
\label{thm:transfer_linear}
For linear queries, if $\M$ satisfies $(\eps, \delta,\kappa)$ partial differential privacy and $\M$ is $(\alpha, \beta)$-sample accurate, then for any data distribution $\p$, any adversary $\A$, and any constant $c, d > 0$:
\begin{align*}
\pr\limits_{S\sim \p^m, \Pi\sim I(\M, \A;S)}\left[\max_{t}\left|a_t - q_t(\p^m)\right| > \alpha + (e^\eps - 1) +  \frac{2\kappa}{m} + c + 2d\right] \leq \frac{\delta}{d} + \frac{\beta}{c}.
\end{align*}
\end{theorem}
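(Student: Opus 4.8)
The plan is to follow the two-step structure of Lemma~\ref{lem:sample_acc} and Lemma~\ref{lem:post_stable}, but carefully track the effect of the privacy-leak set $f_L(\pi)$ and exploit the linearity of the queries to absorb that effect into the $\frac{2\kappa}{m}$ term. The first step is unchanged: since $\M$ is $(\alpha,\beta)$-sample accurate, Lemma~\ref{lem:sample_acc} gives that with probability at least $1-\beta/c$, $\left|a_t - q_t(\Q_\Pi)\right| \le \alpha + c$ for all $t$. So by a union bound it remains to show that, except with probability $\delta/d$, $\left|q_t(\p^m) - q_t(\Q_\Pi)\right| \le (e^\eps - 1) + \frac{2\kappa}{m} + 2d$ for all $t$, and then combine the two events.

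The heart of the argument is a modified version of the posterior-stability bound of Lemma~\ref{lem:post_stable}. For a linear query $q_t(S) = \frac{1}{m}\sum_{j=1}^m q_{t,j}(S_j)$, I would split the coordinates into those in the leak set and those outside it: write $q_t(S) = \frac{1}{m}\sum_{j \notin f_L(\Pi)} q_{t,j}(S_j) + \frac{1}{m}\sum_{j \in f_L(\Pi)} q_{t,j}(S_j)$. Since each $q_{t,j}$ takes values in $[0,1]$ and $|f_L(\pi)| \le \kappa$, the contribution of the leaked coordinates to both $q_t(\p^m)$ and $q_t(\Q_\Pi)$ is at most $\kappa/m$ in absolute value; hence they differ by at most $2\kappa/m$, which is exactly the extra term in the statement. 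For the coordinates $j \notin f_L(\Pi)$, I want to run the resampling/stability argument coordinate by coordinate: fix $j$, and compare the distribution of $S_j$ conditioned on $\Pi = \pi$ (for transcripts $\pi$ with $j \notin f_L(\pi)$) against its prior $\p$. The partial-DP guarantee in Definition~\ref{def:rdp} is precisely tailored so that, restricted to the event $E_j = \{\pi : j \notin f_L(\pi)\}$, the mechanism behaves $(\eps,\delta)$-differentially privately in coordinate $j$; this is what lets me re-run the computation behind Lemma~\ref{lem:post_stable} on each such coordinate and conclude $\left|\E_{S_j \sim \p}[q_{t,j}(S_j)] - \E_{S_j \sim \Q_{\Pi,j}}[q_{t,j}(S_j)]\right|$ is controlled in expectation by $e^\eps - 1$ plus a $\delta$-type error, then average over $j$ and apply Markov with the slack $2d$ to get the high-probability bound over all $t$.

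The main obstacle I anticipate is the bookkeeping around the fact that $f_L(\Pi)$ is itself a random set determined by the transcript: the set of ``good'' coordinates changes with $\pi$, so the decomposition of $q_t$ into leaked and protected parts is transcript-dependent, and I cannot simply fix one subset of $[m]$ and argue about it. The trick is that the events $E_j = \{j \notin f_L(\pi)\}$ are fixed subsets of $\mathbf{\Pi}$ once $j$ is fixed, and Definition~\ref{def:rdp} was stated with exactly this quantifier order ($\forall i$, then $\forall S,S'$ differing at $i$, then $\forall E$ with $i \notin \bigcup_{\pi\in E} f_L(\pi)$) so that the privacy statement applies on $E_j$. So the plan is to do all the DP reasoning per-coordinate on these fixed events and only at the end average over $j \in [m]$; the $\kappa/m$-sized leaked part is handled by the trivial $[0,1]$-boundedness bound rather than by any privacy argument. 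The remaining work — converting the per-coordinate in-expectation closeness into a uniform-over-$t$ high-probability statement — is a routine application of Markov's inequality exactly as in the proof of Lemma~\ref{lem:post_stable}, with $d$ playing the role of $c$ there, which is why the final failure probability is $\delta/d + \beta/c$.
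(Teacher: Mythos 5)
Your proposal is correct and follows essentially the same route as the paper: the paper also keeps Lemma~\ref{lem:sample_acc} untouched and proves a modified posterior-stability bound (Lemma~\ref{lem:trans_linear}), whose proof runs the Jung--Ligett coupled-versus-product measure comparison coordinate-by-coordinate, restricted to the non-leaked indices via fixed events of the form $\{\pi : i \notin f_L(\pi)\}$ (formalized in Lemma~\ref{lem:rdp_post_sim}), and charges the leaked coordinates only through the $[0,1]$-boundedness of the sub-queries, which is exactly where the $\kappa/m$-type term and the $(e^\eps-1)+2d$ slack with failure probability $\delta/d$ come from. The only detail to make explicit when writing it up is that the uniformity over $t$ is obtained via the per-transcript worst query $t^*(\pi)$, as in the proof of Lemma~\ref{lem:post_stable}, rather than a per-$t$ Markov bound.
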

Compared with existing results, it has an extra term $\frac{\kappa}{m}$ in the error. This is intuitive, as the privacy leaked set contributes at most $\frac{\kappa}{m}$ error in the worst case. Following \cite{jung2020new}, it suffices to establish the low discrepancy between $\Q_{\pi}$ and $\p^m$. To this end, we prove the following key lemma in the appendix.
\begin{lemma}
\label{lem:trans_linear}
If $\M$ satisfies $(\eps, \delta,\kappa)$ partial differential privacy, then for any data distribution $\p$, any adversary $\A$, and any constant $c > 0$:
\begin{align*}
    \pr\limits_{S\sim \p^m, \Pi\sim I(\M, \A;S)}\left[\max_{t}\left|q_t(\Q_{\pi}) - q_t(\p^m)\right| > (e^\eps - 1) +  \frac{2\kappa}{m}  + 2c\right] \leq \frac{\delta}{c}.
\end{align*}
\end{lemma}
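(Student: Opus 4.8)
The plan is to mirror the proof strategy of Lemma~\ref{lem:post_stable} from \cite{jung2020new}, but to track carefully the contribution of the privacy-leaked coordinates, which is where the extra $\frac{2\kappa}{m}$ term will come from. Fix an adversary $\A$ and a data distribution $\p$. Since $q_t$ is linear, write $q_t(S)=\frac1m\sum_{i=1}^m q_{t,i}(S_i)$, and similarly $q_t(\Q_\pi)=\frac1m\sum_{i=1}^m \E_{S\sim\Q_\pi}[q_{t,i}(S_i)]$ and $q_t(\p^m)=\frac1m\sum_{i=1}^m \E_{X\sim\p}[q_{t,i}(X)]$. For a fixed time $t$ and a fixed coordinate $i$, I would compare the ``true'' marginal contribution $\E_{X\sim\p}[q_{t,i}(X)]$ with the posterior marginal contribution $\E_{S\sim\Q_\pi}[q_{t,i}(S_i)]$, and argue that these two quantities are close \emph{provided} $i$ is not in the leaked set $f_L(\pi)$. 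Concretely, I would follow the resampling / change-of-measure argument of \cite{jung2020new}: condition on the values of all coordinates other than $i$, and view the transcript as a (randomized) function of the single coordinate $S_i$; the event ``$q_{t,i}(S_i)$ is large while its posterior expectation is small'' can then be bounded coordinate-by-coordinate. The key observation is that the partial-DP guarantee of Definition~\ref{def:rdp} applies exactly to those events $E$ whose every transcript $\pi$ satisfies $i\notin f_L(\pi)$, so the standard DP-to-posterior-stability argument goes through verbatim on the sub-event $\{i\notin f_L(\Pi)\}$.

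The structure I would use is: (1) decompose the quantity of interest as
\[
q_t(\Q_\pi)-q_t(\p^m)=\frac1m\sum_{i\notin f_L(\pi)}\bigl(\E_{S\sim\Q_\pi}[q_{t,i}(S_i)]-\E_{X\sim\p}[q_{t,i}(X)]\bigr)+\frac1m\sum_{i\in f_L(\pi)}\bigl(\cdots\bigr),
\]
where each summand in the second sum lies in $[-1,1]$ and there are at most $\kappa$ of them, contributing at most $\frac{\kappa}{m}$ in absolute value; (2) for the first sum, apply the posterior-stability argument of \cite{jung2020new} but only to coordinates outside the leaked set — for a fixed $i$, the set of transcripts $\pi$ with $i\notin f_L(\pi)$ is exactly the kind of event on which partial DP gives the $e^\eps$/$\delta$ bound, so resampling $S_i$ from $\p$ vs.\ from $\Q_\pi$ changes the transcript distribution by at most the DP factors; (3) take a union over time steps $t$ (or, more precisely, bound $\max_t$ via the standard ``bad event has small probability'' argument with the $\frac1c$ loss, exactly as in Lemma~\ref{lem:sample_acc} and Lemma~\ref{lem:post_stable}) and over the relevant coordinates, and collect terms to get $(e^\eps-1)+\frac{2\kappa}{m}+2c$ with failure probability $\frac{\delta}{c}$.

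The main obstacle — and the reason partial DP needs a new proof rather than a black-box reduction — is step (2): the leaked set $f_L(\pi)$ depends on the realized transcript, so when I resample coordinate $i$ the event ``$i\notin f_L(\Pi)$'' is itself not fixed in advance, and I must be careful that the change of measure is applied only to events that are legitimate in the sense of Definition~\ref{def:rdp} (i.e.\ $i\notin\bigcup_{\pi\in E}f_L(\pi)$). The resolution is to intersect every event I consider with $\{i\notin f_L(\Pi)\}$ before invoking partial DP, and to absorb the complementary event $\{i\in f_L(\Pi)\}$ into the $\frac{\kappa}{m}$ worst-case term; one has to check that this bookkeeping is consistent across the two directions of the DP inequality and across the union bound over $t$. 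A secondary technical point is making the resampling lemma of \cite{jung2020new} work at the level of a single coordinate conditioned on the rest — but since $\Q_\pi$ is a product-free posterior in general, I would instead work with the full posterior $\Q_\pi$ and use linearity of $q_t$ to reduce to per-coordinate statements, which is exactly why the theorem is stated for linear queries. Once these points are handled, the remaining calculations (the two applications of Markov's inequality giving the $\frac1c$ and $\frac1d$ losses, and combining with sample accuracy to prove Theorem~\ref{thm:transfer_linear}) are routine.
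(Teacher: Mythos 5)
Your plan matches the paper's proof in all essentials: it mirrors the Jung--Ligett posterior-stability argument (Lemma~\ref{lem:post_stable}), restricts the DP change-of-measure to events whose transcripts exclude coordinate $i$ from $f_L(\pi)$ (the paper does this by building the set $B_\alpha^+$ only over $i\in A(\pi)=[m]\setminus f_L(\pi)$, backed by an auxiliary lemma comparing $\pr[\Pi\in E\mid S_i=x]$ to $\pr[\Pi\in E]$), and charges the leaked coordinates at most $\kappa/m$ via boundedness of the sub-queries, yielding the contradiction threshold $(e^\eps-1)+\kappa/m+2c$. One caution when you write it out: do not union-bound over coordinates or condition on the remaining coordinates under $(\eps,\delta)$ partial DP --- the paper avoids both by averaging over a uniformly random coordinate $I$ and comparing the coupled law of $(\Pi,I,S_I)$ with the decoupled one, so the $\delta$ appears only once.
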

The query function of interest in this paper only depends on $\hat{m}$ data points, with $\hat{m} \ll m$, at each time step. For such queries, we expect the total error to be proportional to $\hat{m}$ rather than $m$, which is formalized in the following refinement of Theorem~\ref{thm:transfer_linear}, the proof of which requires only a slight modification and is included in the appendix.
\begin{theorem}
\label{thm:transfer_linear_m_hat}
For linear queries, if $\M$ satisfies $(\eps, \delta,\kappa)$ partial differential privacy and $\M$ is $(\alpha, \beta)$-sample accurate. Further, if each linear query depends on at most $\hat{m}$ data points, then for any data distribution $\p$, any adversary $\A$, and any constant $c, d > 0$:
\begin{align*}
\pr\limits_{S\sim \p^m, \Pi\sim I(\M, \A;S)}\left[\max_{t}\left|a_t - q_t(\p^m)\right| > \alpha + \frac{\hat{m}}{m}(e^\eps - 1 + c) + \frac{2\kappa}{m} + 2d\right] \leq \frac{\delta}{d} + \frac{\beta}{c}.
\end{align*}
\end{theorem}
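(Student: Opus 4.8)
The plan is to imitate the proof of Theorem~\ref{thm:transfer_linear}, which, following \cite{jung2020new}, writes $|a_t - q_t(\p^m)| \le |a_t - q_t(\Q_\pi)| + |q_t(\Q_\pi) - q_t(\p^m)|$ and controls the first term by the resampling guarantee of Lemma~\ref{lem:sample_acc} (turning $(\alpha,\beta)$-sample accuracy into accuracy against the posterior) and the second term by Lemma~\ref{lem:trans_linear}; a union bound over the two failure events then yields the theorem. The refinement touches only the second piece: I would replace Lemma~\ref{lem:trans_linear} by a version whose bias term carries a factor $\frac{\hat m}{m}$, i.e. $\Pr_{S,\Pi}[\max_t|q_t(\Q_\pi)-q_t(\p^m)| > \frac{\hat m}{m}(e^\eps-1)+(\text{Markov slack})+\frac{2\kappa}{m}]$ is small, and then reassemble the three error contributions and the two failure probabilities, relabelling the slack parameters to land on the constants $\frac{\hat m}{m}(e^\eps-1+c)+\frac{2\kappa}{m}+2d$ and $\frac{\delta}{d}+\frac{\beta}{c}$ claimed.

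The new ingredient is the $\hat m$-sensitive discrepancy bound. For a linear query $q_t$ depending on at most $\hat m$ points, $q_{t,i}\equiv 0$ outside a set $T_t\subseteq[m]$ with $|T_t|\le\hat m$, and since only marginals of $\Q_\pi$ matter for linear queries,
\[
q_t(\Q_\pi)-q_t(\p^m)=\frac1m\sum_{i\in T_t}\bigl(q_{t,i}(\Q_{\pi,i})-q_{t,i}(\p)\bigr),
\]
a sum of at most $\hat m$ terms each in $[-1,1]$. I would split $T_t$ into the indices lying in the privacy-leaked set $f_L(\pi)$ — at most $\min(\kappa,\hat m)\le\kappa$ of them, contributing at most $\kappa/m$ in absolute value — and the remaining at most $\hat m$ non-leaked indices, which I would handle by running the monitor/resampling argument of \cite{jung2020new} coordinate by coordinate. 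For a fixed non-leaked coordinate $i$, Definition~\ref{def:rdp} supplies the $(\eps,\delta)$ comparison restricted to the transcript-measurable event $\{i\notin f_L(\Pi)\}$, which on the monitor's chosen time $t^*$ yields a per-coordinate inequality of the shape $\E[\mathbb 1(i\notin f_L(\Pi))\,q_{t^*,i}(\Q_{\Pi,i})]\le e^\eps\,\E[\mathbb 1(i\notin f_L(\Pi))\,q_{t^*,i}(\p)]+\delta$; each such summand is then at most $e^\eps-1$ plus the $\delta$-term, and because there are at most $\hat m$ of them (not $m$), dividing by $m$ produces the factor $\frac{\hat m}{m}$ on the bias while the Markov step converting the resulting expectation bound into a high-probability statement contributes the $\frac{\hat m}{m}c$ slack and the $2d$, $\delta/d$ failure terms.

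The step I expect to be the main obstacle is making the monitor argument respect two layers of transcript-dependence at once: both the support $T_{t^*(\pi)}$ of the monitor's query and the leaked set $f_L(\pi)$ vary with the realized transcript, so every invocation of the partial-DP comparison for a given coordinate $i$ must be confined to transcripts with $i\notin f_L(\pi)$ — which is exactly why the leaked coordinates have to be peeled off and bounded trivially before the DP machinery is applied. A secondary, more mechanical difficulty is the bookkeeping of the several Markov slacks so that the final error and failure probabilities match the stated constants exactly; this is routine but requires choosing the monitor to maximize the appropriately truncated (positive-part) discrepancy, so that Markov is applied to the truncated expectation rather than the raw one.
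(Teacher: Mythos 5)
Your overall architecture (triangle inequality through $q_t(\Q_\Pi)$, a refined posterior-vs-prior discrepancy lemma, then a union bound) is the same as the paper's, but your plan has a concrete gap: you assert that ``the refinement touches only the second piece'' and keep Lemma~\ref{lem:sample_acc} untouched, attributing the $\frac{\hat m}{m}c$ slack to the Markov step inside the DP argument. That cannot work, because the slack $\frac{\hat m}{m}c$ in the theorem is paired with the failure probability $\frac{\beta}{c}$, and $\beta$ enters only through the sample-accuracy-to-posterior step; the Markov slack of the discrepancy lemma is paired with $\frac{\delta}{d}$ and is the $2d$ term. With the original Lemma~\ref{lem:sample_acc}, obtaining slack $\frac{\hat m}{m}c$ forces failure $\frac{m}{\hat m}\cdot\frac{\beta}{c}$, and obtaining failure $\frac{\beta}{c}$ forces slack $c$, so the stated constants do not follow (and in the paper's application, where $\hat m = k \ll m = k k'$ and $c = 1/\sqrt{k}$, an unscaled $+c$ slack would swamp the target error). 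The paper closes this by also proving an $\hat m$-aware variant of Lemma~\ref{lem:sample_acc}: when each query depends on at most $\hat m$ points, the bound $a_{t^*}-q_{t^*}(S')-\alpha \le 1$ used in Jung--Ligett improves to $\le \frac{\hat m}{m}$, giving $\pr\left[\max_t |a_t - q_t(\Q_\Pi)| > \alpha + \frac{\hat m}{m}c\right] < \frac{\beta}{c}$. This second refinement is essential to the statement you are proving, and your proposal explicitly rules it out.

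On the discrepancy piece, your coordinate-by-coordinate peeling is also not how the paper gets the claimed failure probability. The paper keeps the joint-measure monitor argument of Lemma~\ref{lem:trans_linear} verbatim and only shrinks the bad set, replacing $i \in A(\pi)$ by $i \in A(\pi)\cap G(\pi)$ where $G(\pi)$ is the support of $q_{t^*(\pi)}$; the single new observation is $\pr_{I\otimes S\otimes\Pi}\left[(\Pi,I,S_I)\in B_\alpha^{+}\right] \le \frac{\hat m}{m}\pr\left[\Pi\in\mathbf{\Pi}_\alpha\right]$ since $|G(\pi)|\le\hat m$, which multiplies the $(e^\eps-1)$ term by $\frac{\hat m}{m}$ while the $\delta$ still appears exactly once (it is averaged over the uniformly random coordinate $I$), preserving failure $\frac{\delta}{c}$. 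Your plan of invoking the partial-DP comparison once per non-leaked coordinate in the support either accumulates up to $\hat m$ copies of $\delta$, or, if formalized as in the paper's low-sensitivity extension (Lemma~\ref{lem:cond_low_sensitivity}), yields failure $\frac{m}{\hat m}\cdot\frac{\delta}{c}$ rather than $\frac{\delta}{c}$ --- again a mismatch with the constants you must land on. So while your sketch lives in the right family of arguments and the leaked-set peeling (contributing $\kappa/m$) is correct, as written it proves a quantitatively weaker statement on both the $\beta$- and the $\delta$-side.
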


\section{Robust Distributed Count Tracking Algorithm}
\label{sec:alg}
Our algorithm has multiple rounds. In each round, $N$ increases roughly by a factor of $1+\sqrt{k} \alpha$. After a round ends, the true count is computed and sent to all sites, and then, the algorithm is reinitialized with fresh randomness. Thus, we only focus on one round, and let $N_0$ be the true count at the beginning of the round. In the algorithm, $C = \sqrt{\left(\left(\log \sqrt{k}\right)^{1.5} + 1 \right)\cdot \log \left(\frac{8\sqrt{k}}{\beta}\right)}$ and $\Delta = \frac{\alpha N_0}{8C\sqrt{k}}$.
\begin{figure}[!htbp]
\begin{algorithm}[H]
\caption{Site $i$ for a round}
\label{alg:site}
\KwIn{Accuracy parameter $\alpha$, failure probability $\beta$}
\Init{ $k' = Ck$. Generate $k'$ i.i.d random number $\{r_{i1}, r_{i2}, \cdots, r_{ik'} \}$ from the uniform distribution on $[\Delta]$. Initialize $c_i = 0$ to count the number of received items.}
\When {site $i$ receives an item}
{
$c_i \leftarrow  c_i + 1$ \\ 
$j \leftarrow \lfloor \frac{c_i}{\Delta} \rfloor + 1$\\
$c_{ij} \leftarrow c_i \mod \Delta$ \\

\If{$j > k'$}{
    Send a signal to the server to end current round. 
}
$c_{ij} \leftarrow c_i \mod \Delta$ \\
\If{ $c_{ij} > r_{ij}$ for the first time}
{
Send a bit to the server. \\
}
}
\end{algorithm}
\begin{algorithm}[H]
\caption{Server}
\label{alg:server}
\KwIn{Accuracy parameter $\alpha$, failure probability $\beta$}
\Init{Set $\eps = C / \sqrt{k}$, $T = 2C\sqrt{k}$, $\hat{T} = T + \Lap(\frac{4}{\eps})$, $j = 1, a_0=N_0$. Initialize $\sqrt{k}$ counters $\{c_1, c_2, \cdots, c_{\sqrt{k}}\}$ to $0$. Start a Binary Mechanism instance denoted as $\BM$ with $L = \sqrt{k}$ and $\frac{\eps}{4}$ as privacy parameter.}
\When{ receiving a bit from some site}{

$c_j \lar c_j + 1$ \Comment{$c_j$ counts the number of bits in the $j$-th phase.}
\If{$c_j < \hat{T}$}{Output $(\bot, a_{j - 1})$}
\Else{$\hat{b}_j \lar \BM(c_j)$  \Comment{feeding $c_j$ as the $j$-th input to $\BM$.}
$a_j \leftarrow \hat{b}_j \cdot \Delta + N_0$  
\\
Output $(\top, a_j)$ \\
\If{$j > \sqrt{k}$}{
Notify all sites to end the current round, collect all local counters, calculate the total count, broadcast it to all sites, and start the next round.
}
$j \lar j + 1$ \\
$\hat{T} \lar T + \Lap(\frac{4}{\eps})$
}
}
\end{algorithm}
\end{figure}

The algorithm on site $i$ is presented in Algorithm \ref{alg:site}. The site divides its own stream into blocks of size $\Delta$, and exactly one bit will be sent to the server in each block. The actual time of sending the bit is determined by a random threshold $r_{ij}$. Let $N_{t,i}$ be the number of items received on site $i$ from the beginning of the current round until time $t$. Let $d_{t,i} = \lfloor N_{t,i}/\Delta \rfloor$ and $e_{t,i} = N_{t,i} \mod \Delta$. Thus, $j=d_{t,i}+1$ is the index of the current active block, and $e_{t,i}$ is the offset in this block. 
As per Algorithm \ref{alg:site}, the number of bits sent by site $i$ up to time $t$ is $B_{t,i} = d_{t,i} + \mathbf{1}[r_{i,j} < e_{t,i}] \le d_{t,i}+1$. Since $r_{i,j} \sim \uni(0, \Delta)$, $\E[B_{t,i}] = d_{t,i} + \frac{e_{t,i}}{\Delta}$, meaning $\Delta\cdot B_{t,i}$ is an unbiased estimate of $N_{t,i}$. Let $D \in (0, \Delta)^{m}$ ($m=k\times k'$) be the database comprised of all sites' random numbers, i.e., $D_{ij} = r_{ij}$, considering the input generated by the adversary as queries, then the query at time $t$ can be specified as:
\begin{align}
\label{eq:query}
    q_t(D) = \frac{1}{m}\left(\sum_{i \in[k]}d_{t, i} + \sum_{i \in [k]}\mathbf{1}[r_{i,j} < e_{t, i}]\right) = \frac{B_t}{m} ,
\end{align}
where $B_t = \sum_{i=1}^{k} B_{t, i}$ denotes the total number of bits received by the server at time $t$.
The value of $q_t$ at each time step depends on $k$ random numbers corresponding to the active blocks, which is much less than the size of $D$, which motivates the use of Theorem \ref{thm:transfer_linear_m_hat}. 

Let $\hat{B}_t$ be algorithm's estimate of $B_t$. Algorithm \ref{alg:server} consists of $\sqrt{k}$ phases; $\hat{B}_t$ remains constant in each phase and a new estimate $\hat{b}_j$ is obtained at the end of the $j$th phase via the binary mechanism. The times that the phases end are denoted by $H=\{t_1, t_2, \ldots, t_{\sqrt{k}}\}$, and for $t\in [t_j, t_{j+1})$, we have $\hat{B}_t = \hat{b}_j$. Therefore, the transcript generated by $\M$ and $\A$ is of the form $((\bot, 0), (\bot, 0), \ldots, (\top, \hat{b}_1),(\bot, \hat{b}_1), \ldots, (\top, \hat{b}_{\sqrt{k}}))$. 
We note, in addition to noise in BM, the only noise added for the purpose of DP is the Laplace random variable added on $T$, and an independent noise is used in each phase.

\subsection{Privacy Analysis}
In this section, we analyze the privacy of $\M$ for a single round, demonstrating that it satisfies $(\epsilon, 0,\sqrt{k})$-partial differential privacy with respect to the random numbers $D$ used by all sites. To achieve this, we first provide the privacy leak mapping. Note that each time the output of $\M$ updates, i.e, reporting $\top$, $\A$ knows the site $i$ it just accessed has sent a bit to the server. Then the active random number $r_{ij}$ at this moment is exposed, which means there is no meaningful privacy guarantee\footnote{Note approximate DP is also not satisfied, since the indices $i,j$ are not random and can be manipulated by the adversary.}. Therefore, we need to relax the DP constraint. Given a transcript $\pi$, the privacy leaked set consists of those $r_{ij}$ that are exposed during the execution.
\begin{definition}[Privacy Leaked Set]
\label{def:pls}
For a given transcript $\pi$, let $A$ be the set of time steps when output updates, i.e. $A = \{t ~|~ \pi_t = (\top, \cdot)\}$ where $\pi_t$ is the output at time $t$. Let $i_t$ be the site $\A$ chooses at time $t$. Then, $
    f_L(\pi) = \{(i_t, j_t) 
    ~|~ t \in A(\pi),  j_t = \lfloor N_{t,i_t} / \Delta \rfloor + 1\}.$
\end{definition}
Since there is one data point exposed in each phase, $\kappa \le \sqrt{k}$. The privacy guarantee with this privacy leak mapping is presented below. 
\begin{lemma}
\label{lem:privacy}
For any two neighboring databases $D \sim D'$ that differ only on $(i,j)$, any transcript $\pi$ satisfying $(i, j) \notin f_L(\pi)$, our mechanism (Algorithm~\ref{alg:site} and~\ref{alg:server}) satisfies the following inequality,
\begin{align*}
    e^{-\eps}\cdot  \pr\limits_{\Pi\sim I(\M, \A;D')}[\Pi = \pi] \leq \pr\limits_{\Pi\sim I(\M, \A;D)}[\Pi = \pi] \leq e^\eps\cdot  \pr\limits_{\Pi\sim I(\M, \A;D')}[\Pi = \pi],
\end{align*}
i.e., it satisfies $(\eps, 0, \sqrt{k})$-partial differential privacy. 
\end{lemma}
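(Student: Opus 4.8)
The plan is to establish the two-sided inequality by tracking how the transcript probability factorizes over the $\sqrt{k}$ phases and bounding the ratio phase-by-phase. Fix neighboring databases $D\sim D'$ differing only at coordinate $(i,j)$, and fix a transcript $\pi$ with $(i,j)\notin f_L(\pi)$. First I would observe that, since the adversary $\A$ is fixed and the items $u_t$ it generates are a deterministic function of the prior outputs (which are prescribed by $\pi$), the entire item sequence is determined by $\pi$ alone; hence $N_{t,i'}$, $d_{t,i'}$, $e_{t,i'}$ and the active block indices are all functions of $\pi$, not of the random numbers. The only randomness in $I(\M,\A;D)$ producing a given $\pi$ is then: (a) whether each site sends its bit at the prescribed time, which for site $i'$ in block $j'$ is the deterministic event $\mathbf{1}[r_{i'j'}<e]$ for the appropriate offset $e$ — but crucially $\pi$ only records \emph{that a bit arrived}, and the server-side ordering of bits within a phase, so I must argue that the bit-arrival pattern consistent with $\pi$ has the same probability under $D$ and $D'$ except through coordinate $(i,j)$; and (b) the Laplace noises $\hat T_1,\dots,\hat T_{\sqrt k}$ added to the phase thresholds and the internal randomness of $\BM$, none of which depend on $D$ at all.

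Next I would make the phase decomposition precise. Write $\pi$'s $\top$-times as $t_1<\dots<t_{\sqrt k}$ defining phases $P_1,\dots,P_{\sqrt k}$, and let $B^{(\ell)}$ be the number of bits the server receives during phase $P_\ell$; the server reports $\top$ at the end of phase $\ell$ exactly when the running count within the phase first exceeds the fresh threshold $\hat T=T+\Lap(4/\eps)$. The probability of $\pi$ factors, conditioning on the realized item stream, into a product over sites/blocks of the bit-indicator probabilities times a product over phases of $\Pr[\,\text{phase }\ell\text{ has exactly }B^{(\ell)}\text{ bits before the }\top\,]$ times the $\BM$ output probabilities. Since $D$ and $D'$ agree everywhere except $(i,j)$, every factor is identical under the two databases except the single indicator factor $\Pr[r_{ij}\lessgtr e_{t,i}]$ for the block $j$ on site $i$. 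Because $(i,j)\notin f_L(\pi)$, block $j$ on site $i$ is \emph{not} the block whose bit triggered a $\top$; so in the transcript $\pi$ we only know whether a bit from block $(i,j)$ has arrived at the server at all — i.e. we know the value of $\mathbf 1[r_{ij}<e_{t_{\rm end},i}]$ for the final offset, which is a $\{0,1\}$-event. The ratio of $\Pr[\pi\mid D]$ to $\Pr[\pi\mid D']$ is therefore exactly the ratio of two such indicator probabilities.

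I would then bound this ratio via the Laplace mechanism on the \emph{phase threshold}, not the random numbers directly — this is the subtle point. Flipping $r_{ij}$ to $r'_{ij}$ changes $B_{t,i}$ (and hence the running bit-count seen by the server in at most one phase, say phase $\ell_0$) by at most $1$. The server's $\top$-decision in phase $\ell_0$ is governed by the event ``running count $\geq \hat T$'' with $\hat T = T+\Lap(4/\eps)$; by Lemma~\ref{lem:lap} with sensitivity $l=1$ and noise scale $4/\eps$, shifting the count argument by $1$ changes the probability of any threshold-comparison event by a factor in $[e^{-\eps/4},e^{\eps/4}]$. The $\BM$ output $\hat b_{\ell_0}$ is fed the phase count $c_{\ell_0}$, which also shifts by $1$; by Theorem~\ref{thm:bm} ($\BM$ is $(\eps/4,0)$-DP with respect to its input stream) this contributes another $e^{\pm\eps/4}$ factor, and likewise a possible knock-on shift in one later phase count contributes at most $e^{\pm\eps/4}\cdot e^{\pm\eps/4}$ more. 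Summing the exponents gives a total factor bounded by $e^{\eps}$ since $\eps = C/\sqrt k$ was chosen to absorb these four $\eps/4$ contributions, and the additive term is $0$ because only pure-DP Laplace and $\BM$ mechanisms are used. The two-sided bound follows by symmetry in $D,D'$. The main obstacle I anticipate is the bookkeeping in the previous paragraph: carefully arguing that a single coordinate change propagates to a bounded number of phase counts (not cascading through all $\sqrt k$ phases) and that the transcript $\pi$, being event-driven, does not secretly encode more about $r_{ij}$ than the single indicator bit — in particular handling the boundary case where block $(i,j)$'s bit arrives in the same phase that later produces a $\top$ without that bit being the triggering one.
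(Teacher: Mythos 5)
Your overall route is the paper's: take the adversary to be deterministic so the item stream and phase boundaries are functions of $\pi$ alone, factor the transcript probability over the $\sqrt{k}$ phases into a noisy-threshold-crossing factor and a Binary Mechanism factor, note that changing the single coordinate $r_{ij}$ moves one bit's arrival time and hence changes at most two phase counters $c_t$ by $1$ each (no cascade, because conditioning on $\pi$ fixes the phase boundary times), and then charge $e^{\eps/4}$ per affected phase to the Laplace threshold (Lemma~\ref{lem:lap}) and $e^{\eps/4}$ per affected $\BM$ input (Theorem~\ref{thm:bm}), for $e^{\eps}$ in total. Your final paragraph states this accounting correctly, and it is exactly how the paper proves the lemma.

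However, your middle paragraph asserts a step that would fail if taken literally: that the transcript probability factors into ``bit-indicator probabilities'' $\Pr[r_{ij}\lessgtr e_{t,i}]$ and that the ratio $\Pr[\pi\mid D]/\Pr[\pi\mid D']$ ``is exactly the ratio of two such indicator probabilities.'' In the partial-DP statement being proved, $D$ and $D'$ are \emph{fixed} databases, so inside $I(\M,\A;D)$ the thresholds $r_{ij}$ carry no randomness: the indicators $\mathbf{1}[r_{ij}<e_{t,i}]$ are deterministic and contribute no probability factors at all, and if the ratio really were a ratio of such ``probabilities'' it would be $0$, $1$, or $\infty$ and could never be bounded by $e^{\eps}$. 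The whole discrepancy between the two transcript distributions lives in the threshold-crossing and $\BM$ factors; the middle paragraph should instead say that, for fixed $\pi$, the phase counts are deterministic functions $c_t(D)$ and $c_t(D')$ that agree in all but at most two phases, where they differ by $1$, which is what licenses the Laplace/$\BM$ charging in your last paragraph. Relatedly, the precise role of the hypothesis $(i,j)\notin f_L(\pi)$ is to exclude the case where block $(i,j)$'s bit is the one arriving exactly at a $\top$ time step: in that case one of the two conditional phase probabilities is $0$ and no multiplicative bound is possible. The ``boundary case'' you flag at the end (the bit landing in a phase that later emits a $\top$ without being the trigger) is harmless, since only the phase totals, not the within-phase arrival order, enter the threshold test and $\BM$.
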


\subsection{Accuracy and Communication}
Observe that the size of $D$ is $m = k \times k'$, and each query $q_t$ depends only on $\hat{m} = k$ of them. In the algorithm $q_t(D)$ is estimated by $\hat{B}_t/m$. Suppose it is $(\alpha,\beta)$-sample accurate, then by Lemma~\ref{lem:privacy} and Theorem \ref{thm:transfer_linear_m_hat}, we have
\begin{align*}
\pr\limits_{D\sim \p^{m}, \Pi\sim I(\M, \A;D)}\bigg[\max_{t}\bigg|\frac{\hat{B}_t}{m} -q_t(\p^m)\bigg| > \alpha + \frac{k}{m}(e^\eps - 1 + c)+\frac{2\sqrt{k}}{m} \bigg] \leq \frac{\beta}{c}.
\end{align*}
Note that $\eps = C/\sqrt{k}$. By setting $c = 1/\sqrt{k}$, we get the following lemma.
\begin{lemma}
\label{lem:tranfer_lem_count}
For query function $q_t$ defined in equation \eqref{eq:query}, if our mechanism $\M$ is $(\alpha, \beta)$-sample accurate for $q_t$, it is $(\frac{4C\sqrt{k}}{m} + \alpha, \sqrt{k}\beta)$-distributional accurate.
\end{lemma}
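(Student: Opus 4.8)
The plan is to derive the statement as a corollary of Theorem~\ref{thm:transfer_linear_m_hat} applied to the query $q_t$ of \eqref{eq:query}, using the partial-DP guarantee of Lemma~\ref{lem:privacy}, and then to pick the two free constants. First I would check the hypotheses of Theorem~\ref{thm:transfer_linear_m_hat}. The query $q_t$ is linear: writing $q_t(D)=\frac1m\sum_{(i,j)}q_{t,(i,j)}(D_{ij})$, the sub-query at coordinate $(i,j)$ is the constant $1$ if block $j$ of site $i$ has been completed by time $t$, the indicator $\mathbf 1[r_{ij}<e_{t,i}]$ if $(i,j)$ is site $i$'s currently active block, and $0$ otherwise. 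Only the $k$ active blocks (one per site) give non-constant sub-queries, so $\hat m=k$, while the completed-block part is a $D$-independent offset that cancels in every difference $q_t(D)-q_t(\p^m)$ and $q_t(\Q_\pi)-q_t(\p^m)$. By Lemma~\ref{lem:privacy}, $\M$ is $(\eps,0,\kappa)$-partial DP with $\eps=C/\sqrt k$ and $\kappa=\sqrt k$ (the leak mapping of Definition~\ref{def:pls} exposes one threshold per phase, hence at most $\sqrt k$ in all), and $(\alpha,\beta)$-sample accuracy for $q_t$ is the hypothesis of the lemma.

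Next I would invoke Theorem~\ref{thm:transfer_linear_m_hat} with $m=kk'$, $\hat m=k$, $\kappa=\sqrt k$ and $\delta=0$. Since $\delta=0$, the term $\delta/d$ vanishes for every $d>0$; letting $d\downarrow0$ and using continuity of probability (the events increase to the stated one as the threshold decreases) removes the $2d$ term, giving for every $c>0$
\begin{align*}
\pr\limits_{D\sim \p^m, \Pi\sim I(\M,\A;D)}\left[\max_t\left|\frac{\hat B_t}{m}-q_t(\p^m)\right|>\alpha+\frac{k}{m}\left(e^\eps-1+c\right)+\frac{2\sqrt k}{m}\right]\le\frac{\beta}{c},
\end{align*}
which is exactly the displayed inequality preceding the lemma ($\hat B_t/m$ being the mechanism's answer to $q_t$). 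Then I would set $c=1/\sqrt k$, so the failure probability becomes $\beta/c=\sqrt k\,\beta$ as claimed, and bound the error. With $\eps=C/\sqrt k$ and the elementary bound $e^x-1\le2x$ for $x\in[0,1]$ (applicable since we are in the regime $C\le\sqrt k$, the only one in which the randomized communication bound improves on the deterministic one), $e^\eps-1\le2C/\sqrt k$, hence
\begin{align*}
\frac{k}{m}\left(e^\eps-1+c\right)+\frac{2\sqrt k}{m}\le\frac{k}{m}\cdot\frac{2C+1}{\sqrt k}+\frac{2\sqrt k}{m}=\frac{(2C+3)\sqrt k}{m}\le\frac{4C\sqrt k}{m},
\end{align*}
the last inequality being $2C+3\le4C$, i.e.\ $C\ge3/2$, which follows from the definition of $C$ (indeed $C\ge\sqrt{\log(8\sqrt k/\beta)}>3/2$). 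Combining the two displays yields $\bigl(\alpha+\tfrac{4C\sqrt k}{m},\ \sqrt k\,\beta\bigr)$-distributional accuracy, as desired.

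\textbf{Where the difficulty lies.} Essentially all of the substance is imported: the real work sits in Theorem~\ref{thm:transfer_linear_m_hat} (the $\hat m$-refined generalization theorem for partial DP) and in Lemma~\ref{lem:privacy} (that the event-driven mechanism with the transcript-dependent leak map of Definition~\ref{def:pls} is $(\eps,0,\sqrt k)$-partial DP), both of which we may assume here. Within this argument the only genuine checks are (i) that $q_t$ is a \emph{linear} query depending on only $\hat m=k$ coordinates — this is what makes the $\hat m/m=1/k'$ scaling of Theorem~\ref{thm:transfer_linear_m_hat} available (rather than the much weaker constant of Theorem~\ref{thm:transfer_linear}), and is precisely why keeping $k'=Ck$ copies of the randomness is affordable — and (ii) the constant bookkeeping, $e^\eps-1\le2\eps$ for $\eps\le1$ together with $C\ge3/2$, which collapses $2C+3$ into $4C$. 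I do not anticipate any real obstacle beyond making point (i) precise and confirming the parameter regime $\eps\le1$.
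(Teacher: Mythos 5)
Your proposal is correct and follows essentially the same route as the paper: the paper likewise obtains the lemma by feeding the $(\eps,0,\sqrt k)$-partial DP guarantee of Lemma~\ref{lem:privacy} into Theorem~\ref{thm:transfer_linear_m_hat} with $m=kk'$, $\hat m=k$, and then setting $c=1/\sqrt k$ so that $\frac{k}{m}(e^\eps-1+c)+\frac{2\sqrt k}{m}\le\frac{4C\sqrt k}{m}$ and the failure probability becomes $\sqrt k\,\beta$. You are merely more explicit than the paper about the $\delta=0$ (hence $d\to 0$) step, the linearity/$\hat m=k$ verification, and the constant bookkeeping $e^\eps-1\le 2\eps$ with $C$ bounded below, none of which changes the argument.
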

By Theorem \ref{thm:bm}, with probability $1 - \frac{\beta}{2}$, for all $t_j \in H$, we have that:
\begin{align}
\label{eq:bm_sample_acc}
    \left|\hat{B}_{t_j}- B_{t_j}\right| &\leq \frac{1}{\eps}\cdot \left(\log \sqrt{k}\right)^{1.5}\cdot \log \left(\frac{2\sqrt{k}}{\beta}\right) \leq C\sqrt{k},
\end{align}
where the second inequality is from $\eps = C/\sqrt{k}$ and the definition of $C$. Denote the Laplace variables used in Algorithm~\ref{alg:server} as $\left\{\sigma_1, \cdots, \sigma_{\sqrt{k}}\right\}$. By the union bound, with probability $1 - \frac{\beta}{2}$, $\left|\sigma_j \right|\leq \frac{\log (4\sqrt{k}/\beta)}{\eps}$ for all $j\in [\sqrt{k}]$. Consider the $(j+1)$th phase. For every time step $t \in (t_j, t_{j + 1})$, since $B_{t} - B_{t_j} \leq T + \sigma_{j + 1} \leq 2C\sqrt{k} + \frac{\log (4\sqrt{k}/\beta)}{\eps}$ and $\hat{B}_t = \hat{B}_{t_j}$, we have
\begin{align}
\label{eq:atd_sample_acc}
    \left|\hat{B}_t - B_t\right| = \left|\hat{B}_{t_j} - B_t\right| =\left|\hat{B}_{t_j} -B_{t_j} + B_{t_j} - B_t\right| 
     \leq  3C\sqrt{k} + \frac{\log (4\sqrt{k}/\beta)}{\eps} \leq 4C \sqrt{k}.
\end{align}
Then $\hat{B}_t/m$ is $(4C\sqrt{k}/m, \beta)$-sample accurate with respect to $q_t(D)$. By Lemma \ref{lem:tranfer_lem_count}, it follows that $\hat{B}_t/m$ is $(8C\sqrt{k}/m,\sqrt{k}\beta)$-accurate w.r.t.\ $\E_{D\sim \p^m}[q_t(D)] = N_t/m\Delta$. We establish the following lemma of the accuracy guarantee.

\begin{lemma}
\label{lem:acc}
With probability $1-\sqrt{k}\beta$, for all $t$ in a round starting from $N_0$, we have $|{\Delta \cdot \hat{B}_t}+N_0 - {(N_t+N_0)}| \le {8C\sqrt{k}\cdot \Delta} = {\alpha N_0}$. 
\end{lemma}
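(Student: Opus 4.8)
The plan is to chain together the three accuracy ingredients already assembled in the excerpt: the sample-accuracy of $\hat{B}_t$ (inequality \eqref{eq:atd_sample_acc}), the transfer lemma for partial DP specialized to our query (Lemma~\ref{lem:tranfer_lem_count}), and the calibration of the parameters $\Delta$, $C$, $m$. First I would note that Lemma~\ref{lem:acc} is essentially a rescaling of the distributional-accuracy statement derived just above it. By \eqref{eq:atd_sample_acc}, with probability $1-\beta$ we have $|\hat{B}_t - B_t| \le 4C\sqrt{k}$ for \emph{all} $t$ in the round, so $\hat{B}_t/m$ is $(4C\sqrt{k}/m, \beta)$-sample accurate with respect to $q_t(D) = B_t/m$. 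Applying Lemma~\ref{lem:tranfer_lem_count} with this sample accuracy upgrades it to $(8C\sqrt{k}/m, \sqrt{k}\beta)$-distributional accuracy, i.e. with probability $1-\sqrt{k}\beta$, $\max_t |\hat{B}_t/m - q_t(\p^m)| \le 8C\sqrt{k}/m$.

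Next I would identify $q_t(\p^m)$ explicitly. Since $r_{ij} \sim \uni(0,\Delta)$, the unbiasedness computation from Section~\ref{sec:alg} gives $\E[B_{t,i}] = d_{t,i} + e_{t,i}/\Delta = N_{t,i}/\Delta$, hence $q_t(\p^m) = \E_{D\sim\p^m}[B_t/m] = \sum_i N_{t,i}/(m\Delta) = N_t/(m\Delta)$, where $N_t = \sum_i N_{t,i}$ is the true count accumulated in the current round. Multiplying the distributional-accuracy bound through by $m\Delta$ then yields $|\Delta\hat{B}_t - N_t| \le 8C\sqrt{k}\,\Delta$ for all $t$, with probability $1-\sqrt{k}\beta$. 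Adding $N_0$ to both sides inside the absolute value (it cancels) gives $|\Delta\hat{B}_t + N_0 - (N_t + N_0)| \le 8C\sqrt{k}\,\Delta$.

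Finally I would close with the arithmetic identity that turns the additive bound into a relative one: by the definition $\Delta = \alpha N_0/(8C\sqrt{k})$ stated at the top of Section~\ref{sec:alg}, we have $8C\sqrt{k}\,\Delta = \alpha N_0$ exactly, which is precisely the claimed bound. I would also remark that the two failure events — the $\beta$ from \eqref{eq:atd_sample_acc} (which itself bundles the BM error event \eqref{eq:bm_sample_acc} and the Laplace-tail event, each of probability $\beta/2$) and the $\sqrt{k}\beta$ slack absorbed by the transfer lemma — are already accounted for in the $1-\sqrt{k}\beta$ probability, since the transfer lemma's conclusion holds conditionally on the sample-accuracy event and the $\sqrt{k}\beta$ already dominates.

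The bulk of this proof is bookkeeping; the only genuine content has been pushed into earlier results, namely Lemma~\ref{lem:privacy} (that $\M$ is $(\eps,0,\sqrt{k})$-partial DP with the stated leak mapping) and Theorem~\ref{thm:transfer_linear_m_hat} (the $\hat{m}$-dependent generalization theorem), which together justify Lemma~\ref{lem:tranfer_lem_count}. The main obstacle, therefore, is not in this lemma's proof but in making sure the parameter settings line up: that $\eps = C/\sqrt{k}$ together with the choice $c = 1/\sqrt{k}$ makes the $\frac{k}{m}(e^\eps - 1 + c) + \frac{2\sqrt{k}}{m}$ term in Theorem~\ref{thm:transfer_linear_m_hat} collapse to $\le 4C\sqrt{k}/m$ (using $e^\eps - 1 \le 2\eps$ for small $\eps$, valid since $C/\sqrt{k}$ is bounded), and that the definition of $C$ is exactly calibrated so the BM error in \eqref{eq:bm_sample_acc} is at most $C\sqrt{k}$. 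I would double-check these constant-factor chases but expect no surprises.
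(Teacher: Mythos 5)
Your proposal is correct and follows essentially the same route as the paper, whose "proof" of this lemma is precisely the derivation immediately preceding it: combine the BM error bound and Laplace tails into the $(4C\sqrt{k}/m,\beta)$-sample-accuracy statement \eqref{eq:atd_sample_acc}, upgrade it to $(8C\sqrt{k}/m,\sqrt{k}\beta)$-distributional accuracy via Lemma~\ref{lem:tranfer_lem_count}, identify $q_t(\p^m)=N_t/(m\Delta)$ by unbiasedness, and rescale by $m\Delta$ using $\Delta=\alpha N_0/(8C\sqrt{k})$. Your bookkeeping of the failure probabilities and the constant-factor calibration of $C$ and $c=1/\sqrt{k}$ matches the paper's treatment.
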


For the Communication complexity, in one round, the total number of received bits by the server is $c_1 + c_2 + \ldots + c_{\sqrt{k}}$. By analysis above, with probability $1 - \beta/2$, for all $j \in [\sqrt{k}]$, we have
\begin{align}
    \sum_{j = 1}^{\sqrt{k}} c_j &\leq \sum_{j = 1}^{\sqrt{k}} (\hat{T}_j + 1) \leq \sum_{j = 1}^{\sqrt{k}} (T + |\sigma_j| + 1) \leq \sum_{j = 1}^{\sqrt{k}} 2C\cdot \sqrt{k} =  2Ck, \\
    \sum_{j = 1}^{\sqrt{k}} c_j &\geq \sum_{j = 1}^{\sqrt{k}} \hat{T} \geq \sum_{j = 1}^{\sqrt{k}} (T - |\sigma_j|) \geq \sum_{j = 1}^{\sqrt{k}} C\sqrt{k} \geq Ck. \label{eqn:CClb}
\end{align}
Therefore, in one round, the communication cost is upper bounded by $O(Ck)$. By \eqref{eqn:CClb}, there are at least $Ck\cdot \Delta = \alpha N_0 \sqrt{k}/8$ items received in this round, which means $N$ increases by an $O(1 + \alpha \sqrt{k}/8)$ factor after one round. It follows that there are at most $O(\frac{\log N}{\alpha\sqrt{k}})$ rounds. Combined this with the communication cost in one round, we can conclude the final communication complexity is $O(\frac{C\sqrt{k}\log N}{\alpha})$. Now we are ready to prove Theorem \ref{thm:main}.
\begin{proof}[Proof of Theorem \ref{thm:main}]
Since $N_0$ is the exact count at the beginning of a round, the Lemma~\ref{lem:acc} guarantees an $\alpha$-relative error in the round. Since the lemma holds for any round, the correctness is established.
We set the failure probability as $\beta = \delta / (\sqrt{k}\cdot O(\frac{\log N}{\alpha \sqrt{k}}))$. Since there are $O(\frac{\log N}{\alpha \sqrt{k}})$ rounds, by the union bound and Lemma \ref{lem:acc}, it can be concluded that with probability $1-\delta$, the output of $\M$ is an $\alpha$-approximate to $N$ at all times. The communication complexity is $O(\frac{C\sqrt{k}\log N}{\alpha})$\footnote{{Note that we assume that $k \leq \frac{1}{\alpha^2}$ and thus there are at most $O(\frac{\log N}{\alpha \sqrt{k}})$ rounds. For $k > \frac{1}{\alpha^2}$, there are at most $\log N$ rounds. The communication complexity is $O(Ck\log N)$. Therefore the communication complexity for all regimes of $k$ is $O(Ck\log N + \frac{C\sqrt{k}\log N}{\alpha})$.}}.
\end{proof}
\section{Conclusion}
In this paper, we study the robustness of distributed count tracking to adaptive inputs. We present a new randomized algorithm that employs differential privacy to achieve robustness. Our new algorithm has near optimal communication complexity. Besides, we introduce a relaxed version of differential privacy, which allows privacy leak of some data points. Based on this definition, we prove a new generalization theorem of differential privacy, which we believe can be of independent interest and have broader applications.

\begin{ack}
This work is supported by National Natural Science Foundation of China No. U2241212, No. 62276066.
\end{ack}

%\bibliographystyle{unsrt}
%\bibliography{ref}

\newpage
\appendix
\section{Other Related Work}
In addition to the robust streaming frameworks discussed earlier, several works in the literature have considered adversarial robustness for specific problems \cite{ben2020adversarial, ben2022adversarially,cohen2022robustness,braverman2021adversarial, alon2021adversarial,chakrabarti2022adversarially,bateni2023optimal,mironov2008sketching}. \cite{mironov2008sketching} also studied adversarially robust sketching in a distributed setting, but only considered a non-adaptive adversary and one-shot computation. The generalization property in adaptive data analysis has been extensively studied \cite{dwork2015preserving, feldman2018calibrating,bassily2016algorithmic, jung2020new,dwork2015generalization,feldman2017generalization,ullman2018limits}. Our work extends the existing studies by providing a new generalization theorem for a relaxed definition of differential privacy.
\section{Binary Mechanism}
\begin{algorithm}[ht]
\label{alg:binary}
\caption{Binary Mechanism \cite{chan2011private}}
\KwIn{  A time upper bound $L$, a privacy parameter $\eps$, and a stream $\sigma \in \{0, 1\}^L$}
\KwOut{ At each time step $t$, output estimate $\mathcal{B}(t)$.}
\Init{  Each $c_i$ and $\hat{c_i}$ are (implicitly) initialized to $0$.}
$\eps' \leftarrow \eps / \log L$ \\
\For{$t \leftarrow 1$ \rm\textbf{to} $L$}{
Express $t$ in binary form: $t = \sum_{j} \BIN_j(t)\cdot 2^j$. \\
Let $i:= \min\{j: \BIN_j(t) \neq 0\}$, then $c_i \lar \sum_{j < i} c_j + \sigma(t)$\\
\For{$j \lar 0$ \rm\textbf{to} $i - 1$}{
     $c_j \lar 0, \hat{c_j} \lar 0$
}
$\hat{c_i} \lar c_i + \Lap(\frac{1}{\eps'})$ \\
\textbf{Output the estimate at time $t$:}
\begin{align*}
    \mathcal{B}(t) \lar \sum_{j:\BIN_j(t) = 1}\hat{c_j}
\end{align*}
}
\end{algorithm}

\section{Missing Proofs in Section \ref{sec:pdp}}
Prior to presenting the missing proofs, we establish a lemma that will be utilized in subsequent proofs. To simplify notation, we will omit the $\kappa$ parameter in the definition of partial differential privacy when it is not used.
\begin{lemma}
\label{lem:rdp_post_sim}
    If $\M$ satisfies $(\eps, \delta)$ partial differential privacy with privacly leak mapping $f_L$, given index $i \in [m]$ and data-point $x$, for any event $E$ such that $\forall \pi\in E, i \notin f_L(\pi)$, we have:
    \begin{align*}
        \pr\limits_{S\sim \p^m, \Pi\sim I(S)}[\Pi \in E|S_i = x] \leq e^\eps \pr\limits_{S\sim \p^m, \Pi\sim I(S)}[\Pi \in E] +\delta
    \end{align*}
\end{lemma}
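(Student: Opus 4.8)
The statement is a "population-level" version of the partial-DP guarantee: instead of comparing the transcript distribution on two fixed neighboring databases $S \sim S'$, we want to compare the distribution of $\Pi$ when $S$ is drawn from $\p^m$ conditioned on $S_i = x$ versus $S$ drawn freely from $\p^m$. The natural route is to reduce to Definition~\ref{def:rdp} by conditioning on everything except the $i$th coordinate. First I would fix the vector $S_{-i}$ of all coordinates other than $i$, and write $\pr_{S\sim\p^m,\Pi}[\Pi\in E \mid S_i = x] = \E_{S_{-i}\sim\p^{m-1}}\bigl[\pr_{\Pi\sim I(S_{-i}, x)}[\Pi\in E]\bigr]$, where $I(S_{-i},x)$ denotes the interaction on the database that agrees with $S_{-i}$ off coordinate $i$ and has $x$ in coordinate $i$. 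Similarly, $\pr_{S\sim\p^m,\Pi}[\Pi\in E] = \E_{S_{-i}\sim\p^{m-1}}\E_{y\sim\p}\bigl[\pr_{\Pi\sim I(S_{-i},y)}[\Pi\in E]\bigr]$.

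Now, for each fixed $S_{-i}$ and each fixed $y$, the two databases $(S_{-i}, x)$ and $(S_{-i}, y)$ differ only on the $i$th position, and by hypothesis $E$ is an event with $i \notin f_L(\pi)$ for every $\pi\in E$, i.e. $i \notin \bigcup_{\pi\in E} f_L(\pi)$. Hence Definition~\ref{def:rdp} applies verbatim and gives
\[
\pr_{\Pi\sim I(S_{-i},x)}[\Pi\in E] \le e^\eps \cdot \pr_{\Pi\sim I(S_{-i},y)}[\Pi\in E] + \delta.
\]
The left side does not depend on $y$, so I can take the expectation over $y\sim\p$ on both sides, which leaves the left side unchanged and turns the right side into $e^\eps \cdot \E_{y\sim\p}\bigl[\pr_{\Pi\sim I(S_{-i},y)}[\Pi\in E]\bigr] + \delta$. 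Finally I take the expectation over $S_{-i}\sim\p^{m-1}$; linearity of expectation and the two identities above assemble exactly into the claimed inequality $\pr[\Pi\in E \mid S_i=x] \le e^\eps\,\pr[\Pi\in E] + \delta$.

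The only genuinely delicate point — and the step I would be most careful about — is checking that the partial-DP definition is legitimately applicable with the \emph{same} event $E$ for every conditioning value of $(S_{-i}, y)$: Definition~\ref{def:rdp} quantifies over "any $\A$, any $i$, any $S,S'$ differing on the $i$th position, and any $E$ with $i\notin\bigcup_{\pi\in E}f_L(\pi)$", and here $\A$, $i$, and $E$ are held fixed while only the pair $(S,S') = ((S_{-i},x),(S_{-i},y))$ varies over the admissible neighbors, so the quantifier ordering is fine. One should also note that the adversary $\A$ is the same in $I(S_{-i},x)$ and $I(S_{-i},y)$, which is implicit in writing $I(S)$ for a fixed analyst; the reduction does not require $\A$ to be oblivious since Definition~\ref{def:rdp} already holds against arbitrary adaptive $\A$. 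No measure-theoretic subtleties beyond Fubini/Tonelli (to split the draw of $S$ into $S_{-i}$ and $S_i$) arise, and those are routine since all quantities are probabilities in $[0,1]$.
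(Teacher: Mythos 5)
Your proposal is correct and matches the paper's proof in essence: the paper likewise expands $\pr[\Pi\in E\mid S_i=x]$ as an average over databases with coordinate $i$ replaced by $x$, applies Definition~\ref{def:rdp} pointwise to the neighboring pair differing only in position $i$ (legitimate since $i\notin\bigcup_{\pi\in E}f_L(\pi)$), and averages back. The only cosmetic difference is that you split the average into $S_{-i}$ and a fresh $y\sim\p$, while the paper sums over the full database $\boldsymbol{x}\in\X^m$ in one step.
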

\begin{proof}
    \begin{align*}
        \pr\limits_{S\sim \p^m, \Pi\sim I(S)}[\Pi \in E|S_i = x] &= \sum_{\boldsymbol{x} \in \X^m} \pr\limits_{S \sim \p^m}[S=\boldsymbol{x}] \cdot \pr\left[\Pi \in E \mid S=\left(\boldsymbol{x}_{-i}, x\right)\right]\\
        &\leq \sum_{\boldsymbol{x} \in \X^m} \pr\limits_{S \sim \p^m}[S=\boldsymbol{x}] \cdot \left(e^\eps \pr\left[\Pi \in E \mid S=\boldsymbol{x}\right] + \delta\right) \\
        &= e^\eps \pr\limits_{S\sim \p^m, \Pi\sim I(S)}[\Pi \in E] +\delta
    \end{align*}
where the inequality is from the definition of partial differential privacy.
\end{proof}
\subsection{Proof of Lemma \ref{lem:trans_linear}}
\begin{proof}[Proof of Lemma \ref{lem:trans_linear}]
Given a transcript $\pi \in \mathbf{\Pi}$, let $t^{*}(\pi) = \operatorname{argmax}_t\left|q_t(\Q_{\pi}) - q_t(\p^m)\right|$. For an $\alpha > 0$, we define the following sets:
\begin{gather*}
\mathbf{\Pi}_\alpha=\left\{\pi \in \boldsymbol{\Pi} \mid q_{t^*(\pi)}\left(\mathcal{Q}_\pi\right)-q_{t^*(\pi)}(\mathcal{P}^m)>\alpha\right\}, \\
\mathcal{X}^{+}(\pi, i) = \left\{x \in \mathcal{X} | \pr\limits_{S\sim \Q_\pi}\left[S_i = x\right]\ > \pr\limits_{S \sim \p^m}\left[S_i = x\right]\right\}, \\
A(\pi) = \left\{i \in [m] | i \notin f_L(\pi) \right\}, \\
B_{\alpha}^{+} =  \bigcup_{\pi \in \mathbf{\Pi}_{\alpha}}\Big(\{\pi\}\times \big(\bigcup_{i \in A(\pi)}\{i\}\times \mathcal{X}^{+}(\pi, i)\big)\Big), \\
\mathbf{\Pi}_{\alpha}^{+}(x, i) = \left\{\pi \in \mathbf{\Pi} | (\pi, i, x) \in B_{\alpha}^{+}\right\}.
\end{gather*}
Fix any $\alpha$ and suppose that $\pr\left[|q_{t^*(\pi)}(\Q_{\Pi}) - q_{t^*(\pi)}(\p)| > \alpha \right] > \frac{\delta}{c}$. Without loss of generality, assume that
\begin{align}
\label{eq:trans_lin_assu}
   \pr\left[q_{t^*(\pi)}(\Q_{\Pi}) - q_{t^*(\pi)}(\p) > \alpha \right] = \pr[\Pi \in \mathbf{\Pi}_{\alpha}] > \frac{\delta}{2c}.
\end{align}
By abuse of notation, let $I$ be the random variable obtained by uniformly sampling from $[m]$, i.e., $\pr[I = i] = 1/m$ for all $i\in[m]$. We write $S_{I}$ to denote the $I$-th sample of $S \sim \p^m$. We consider the following comparison of two probability measures on $B_{\alpha}^{+}$:
\begin{align*}
    \pr\limits_{I \otimes (S, \Pi)}&[(\Pi, I, S_I) \in B_{\alpha}^+] - \pr\limits_{I\otimes S\otimes \Pi}[(\Pi, I, S_I) \in B_{\alpha}^{+}] \\
    &= \sum_{\pi \in \mathbf{\Pi}_\alpha}\pr[\Pi = \pi]\sum_{i \in A(\pi)}\pr[I = i]\sum_{x \in \X^{+}(\pi, i)}(\pr[S_i = x|\Pi = \pi] - \pr[S_i = x]) \\
    &\geq \sum_{\pi \in \mathbf{\Pi}_\alpha}\pr[\Pi = \pi]\sum_{i \in A(\pi)}\pr[I = i]\sum_{x \in \X^{+}(\pi, i)}q_{t^*(\pi), i}(x)(\pr[S_i = x|\Pi = \pi] - \pr[S_i = x])\\
    &\geq \sum_{\pi \in \mathbf{\Pi}_\alpha}\pr[\Pi = \pi]\sum_{i \in A(\pi)}\pr[I = i]\sum_{x \in \X}q_{t^*(\pi)}(x)(\pr[S_i = x|\Pi = \pi] - \pr[S_i = x])\\
    &=\frac{1}{m}\sum_{\pi \in \mathbf{\Pi}_\alpha}\pr[\Pi = \pi]\sum_{i \in A(\pi)}\sum_{x \in \X}q_{t^*(\pi), i}(x)(\pr[S_i = x|\Pi = \pi] - \pr[S_i = x])\\
    &\geq \frac{1}{m}\sum_{\pi \in \mathbf{\Pi}_\alpha}\pr[\Pi = \pi]\left(m\cdot (q_{t^*(\pi)}(\Q_\pi) - q_{t^*(\pi)}(\p^m)) - |f_L(\pi)|\right) \quad \mbox{(By $q_{t^*(\pi),i}(x) \in [0,1]$)} \\
    &> \frac{1}{m}\sum_{\pi \in \mathbf{\Pi}_\alpha}\pr[\Pi = \pi]\left(m\alpha - \max_{\pi}|f_L(\pi)|\right) =\pr[\pi\in \mathbf{\Pi}_\alpha]\cdot(\alpha - \frac{\kappa}{m}).
\end{align*}
On the other hand, by partial differential privacy, we have
\begin{align*}
     \pr\limits_{I \otimes (S, \Pi)}&[(\Pi, I, S_I) \in B_{\alpha}^+] - \pr\limits_{I\otimes S\otimes \Pi}[(\Pi, I, S_I) \in B_{\alpha}^{+}] \\
    &= \sum_{i \in [m]}\pr[I = i]\sum_{x \in \X}\pr[S_i = x](\pr[\pi \in \mathbf{\Pi}_{\alpha}^{+}(x, i)| S_i = x] - \pr[\Pi \in \mathbf{\Pi}_{\alpha}^{+}(x, i)]) \\
    &\leq \sum_{i \in [m]}\pr[I = i]\sum_{x \in \X}\pr[S_i = x] \left((e^\eps - 1) \pr[\Pi \in \mathbf{\Pi}_{\alpha}^{+}(x, i)] +\delta\right) \quad \mbox{(By Lemma \ref{lem:rdp_post_sim})}\\
    &=(e^\eps - 1)\pr\limits_{I\otimes S\otimes \Pi}[(\Pi, I, S_I) \in B_{\alpha}^{+}] + \delta \\
    &\leq (e^\eps - 1)\pr[\Pi \in \mathbf{\Pi}_\alpha] + \delta \\
    &< ((e^\eps - 1) + 2c) \cdot \pr[\Pi \in \mathbf{\Pi}_\alpha], \quad \mbox{(By equation \eqref{eq:trans_lin_assu})}
\end{align*}
which result in a contradiction for $\alpha \geq (e^\eps - 1) + \frac{\kappa}{m} + 2c$.
\end{proof}

\subsection{Proof of Theorem \ref{thm:transfer_linear_m_hat}}
To prove Theorem \ref{thm:transfer_linear_m_hat}, we introduce two new lemmas, with one being a variant of Lemma \ref{lem:trans_linear} and the other a variant of Lemma \ref{lem:sample_acc}. 
\begin{lemma}
\label{lem:rdp_post_sim_m_hat}
If $\M$ satisfies $(\eps, \delta, \kappa)$ partial differential privacy with privacy leak function $f_{L}: \mathbf{\Pi} \rightarrow 2^{[m]}$. Further, if each linear query depends on at most $\hat{m}$ samples, then for any data distribution $\p$, any adversary $\A$, and any constant $c > 0$:
\begin{align}
    \pr\limits_{S\sim \p^m, \Pi\sim I(\M, \A;S)}\left[\max_{t}\left|q_t(\Q_{\pi}) - q_t(\p)\right| > \frac{\hat{m}}{m}\cdot (e^\eps - 1) +  \frac{\kappa}{m}  + 2c\right] \leq \frac{\delta}{c}.
\end{align}
\end{lemma}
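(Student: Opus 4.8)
\textbf{Proof proposal for Lemma \ref{lem:rdp_post_sim_m_hat}.}

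The plan is to adapt the proof of Lemma~\ref{lem:trans_linear} almost verbatim, tracking where the assumption ``each query depends on at most $\hat m$ samples'' lets us shave the error from $(e^\eps - 1)$ down to $\frac{\hat m}{m}(e^\eps - 1)$. First I would set up exactly the same objects as in the proof of Lemma~\ref{lem:trans_linear}: fix $\alpha$, let $t^*(\pi) = \operatorname{argmax}_t |q_t(\Q_\pi) - q_t(\p^m)|$, define $\mathbf{\Pi}_\alpha$, $\mathcal{X}^+(\pi,i)$, $A(\pi) = [m]\setminus f_L(\pi)$, the set $B_\alpha^+$, and the slices $\mathbf{\Pi}_\alpha^+(x,i)$. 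Assume for contradiction $\pr[\Pi \in \mathbf{\Pi}_\alpha] > \frac{\delta}{2c}$ (WLOG the one-sided version). The lower bound on $\pr_{I\otimes(S,\Pi)}[(\Pi,I,S_I)\in B_\alpha^+] - \pr_{I\otimes S\otimes\Pi}[\cdot]$ is unchanged: using $q_{t^*(\pi),i}(x)\in[0,1]$ and summing only over $i\in A(\pi)$ gives the bound $\pr[\Pi\in\mathbf{\Pi}_\alpha]\cdot(\alpha - \frac{\kappa}{m})$.

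The key new step is the \emph{upper} bound. Here I would exploit that $q_{t^*(\pi)}$ is a linear query touching at most $\hat m$ coordinates: for every fixed $\pi$ there is a set $D(\pi)\subseteq[m]$ with $|D(\pi)|\le\hat m$ such that $q_{t^*(\pi),i}\equiv 0$ for $i\notin D(\pi)$. Since $\mathcal{X}^+(\pi,i)$ only matters inside the sum weighted by $q_{t^*(\pi),i}(x)$, one can intersect $\mathcal{X}^+(\pi,i)$ with the support of $q_{t^*(\pi),i}$ --- equivalently, redefine $B_\alpha^+$ to also require $i\in D(\pi)$ --- without changing the lower bound, because coordinates $i\notin D(\pi)$ contribute nothing to $\sum_x q_{t^*(\pi),i}(x)(\pr[S_i=x\mid\Pi=\pi]-\pr[S_i=x])$ anyway. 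Then in the upper-bound computation, when I write $\pr_{I\otimes(S,\Pi)}[(\Pi,I,S_I)\in B_\alpha^+] - \pr_{I\otimes S\otimes\Pi}[\cdot] = \sum_{i}\pr[I=i]\sum_x \pr[S_i=x](\pr[\pi\in\mathbf{\Pi}_\alpha^+(x,i)\mid S_i=x] - \pr[\Pi\in\mathbf{\Pi}_\alpha^+(x,i)])$ and apply Lemma~\ref{lem:rdp_post_sim} termwise to get $(e^\eps-1)\pr[\Pi\in\mathbf{\Pi}_\alpha^+(x,i)] + \delta$, the point is that $\sum_i \pr[I=i]\sum_x\pr[S_i=x]\pr[\Pi\in\mathbf{\Pi}_\alpha^+(x,i)] = \pr_{I\otimes S\otimes\Pi}[(\Pi,I,S_I)\in B_\alpha^+]$, and with the new definition of $B_\alpha^+$ this event forces $I\in D(\Pi)$, a set of size $\le\hat m$; so this probability is at most $\frac{\hat m}{m}\pr[\Pi\in\mathbf{\Pi}_\alpha]$ rather than $\pr[\Pi\in\mathbf{\Pi}_\alpha]$. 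The $\delta$ term needs slightly more care: $\sum_i\pr[I=i]\sum_x\pr[S_i=x]\cdot\delta = \delta$ naively, but since only coordinates $i$ with $\pr[\Pi\in\mathbf{\Pi}_\alpha^+(x,i)]>0$ can occur and we are comparing against the assumption $\pr[\Pi\in\mathbf{\Pi}_\alpha]>\frac{\delta}{2c}$, the $\delta$ can be absorbed as in Lemma~\ref{lem:trans_linear} by splitting off a $2c\cdot\pr[\Pi\in\mathbf{\Pi}_\alpha]$ slack, yielding an upper bound of $\big(\frac{\hat m}{m}(e^\eps-1) + 2c\big)\pr[\Pi\in\mathbf{\Pi}_\alpha]$.

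Combining the two bounds gives $\pr[\Pi\in\mathbf{\Pi}_\alpha](\alpha - \frac{\kappa}{m}) < \big(\frac{\hat m}{m}(e^\eps-1)+2c\big)\pr[\Pi\in\mathbf{\Pi}_\alpha]$, a contradiction once $\alpha \ge \frac{\hat m}{m}(e^\eps-1) + \frac{\kappa}{m} + 2c$; hence the one-sided bad event has probability at most $\frac{\delta}{2c}$, and adding the symmetric event gives the stated $\frac{\delta}{c}$. The main obstacle I expect is the bookkeeping around the $\delta$ term and making sure that restricting $B_\alpha^+$ to $i\in D(\pi)$ genuinely does not weaken the lower bound --- i.e.\ carefully justifying that coordinates outside the query's support drop out of both the ``$q_{t^*(\pi),i}(x)\le 1$'' step and the telescoping identity $\sum_i\sum_x q_{t^*(\pi),i}(x)(\pr[S_i=x\mid\pi]-\pr[S_i=x]) = m(q_{t^*(\pi)}(\Q_\pi)-q_{t^*(\pi)}(\p^m))$, which already only has $\hat m$ nonzero summands. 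Everything else is a transcription of the Lemma~\ref{lem:trans_linear} argument.
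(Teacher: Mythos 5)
Your proposal is correct and follows essentially the same route as the paper's proof: your set $D(\pi)$ is exactly the paper's $G(\pi)$, you restrict $B_\alpha^+$ to indices in $A(\pi)\cap D(\pi)$, keep the lower bound $\pr[\Pi\in\mathbf{\Pi}_\alpha]\bigl(\alpha-\frac{\kappa}{m}\bigr)$ unchanged, and obtain the improved factor via the same key observation that the reference measure of $B_\alpha^+$ is at most $\frac{\hat m}{m}\pr[\Pi\in\mathbf{\Pi}_\alpha]$, absorbing $\delta$ with the assumption $\pr[\Pi\in\mathbf{\Pi}_\alpha]>\frac{\delta}{2c}$. The only cosmetic difference is that ``depends on at most $\hat m$ samples'' means the remaining sub-queries are constant rather than identically zero, which still contributes nothing to the differences of expectations, so your justification goes through as stated.
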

\begin{proof}
The proof is slight modification of that in Lemma \ref{lem:trans_linear}. In addition to the sets defined in the proof of Lemma \ref{lem:trans_linear}, we introduce another set $G(\pi)$ which specifies the samples that $q_{t^*(\pi)}$ depends on, formally defined as follows.
\begin{align*}
    G(\pi) = \left\{i \in [m]| i \in g(t^*(\pi), \pi)\right\},
\end{align*}
where the function $g(t, \pi) $ is used to characterize the sample set that $q_t$ depends on given transcript $\pi$. Accordingly, the set $B^{+}_{\alpha}$ is modified to incorporate $G(\pi)$:
\begin{align*}
    B_{\alpha}^{+} =  \bigcup_{\pi \in \mathbf{\Pi}_{\alpha}}\Big(\{\pi\}\times \big(\bigcup_{i \in A(\pi) \cap G(\pi)}\{i\}\times \mathcal{X}^{+}(\pi, i)\big)\Big).
\end{align*}
We consider the same comparison of probability measure on $B^{+}_{\alpha}$ as that in the proof of Lemma 4. The first part is same as before, 
\begin{align*}
     \pr\limits_{I \otimes (S, \Pi)}&[(\Pi, I, S_I) \in B_{\alpha}^+] - \pr\limits_{I\otimes S\otimes \Pi}[(\Pi, I, S_I) \in B_{\alpha}^{+}] \\
    &= \sum_{\pi \in \mathbf{\Pi}_\alpha}\pr[\Pi = \pi]\sum_{i \in A(\pi)\cap G(\pi)}\pr[I = i]\sum_{x \in \X^{+}(\pi, i)}(\pr[S_i = x|\Pi = \pi] - \pr[S_i = x]) \\
    &\geq \pr[\pi\in \mathbf{\Pi}_\alpha]\cdot(\alpha - \frac{\kappa}{m}).
\end{align*}
For the second part we can get that,
\begin{align*}
    \pr\limits_{I \otimes (S, \Pi)}&[(\Pi, I, S_I) \in B_{\alpha}^+] - \pr\limits_{I\otimes S\otimes \Pi}[(\Pi, I, S_I) \in B_{\alpha}^{+}] \\
    &= \sum_{i \in [m]}\pr[I = i]\sum_{x \in \X}\pr[S_i = x](\pr[\pi \in \mathbf{\Pi}_{\alpha}^{+}(x, i)| S_i = x] - \pr[\Pi \in \mathbf{\Pi}_{\alpha}^{+}(x, i)]) \\
    &\leq (e^\eps - 1)\pr\limits_{I\otimes S\otimes \Pi}[(\Pi, I, S_I) \in B_{\alpha}^{+}] + \delta.
\end{align*}
Let $B_{\alpha} =  \bigcup_{\pi \in \mathbf{\Pi}_{\alpha}}\Big(\{\pi\}\times \big(\bigcup_{i \in A(\pi) \cap G(\pi)}\{i\}\big)\Big)$.
Here comes the key observation that, 
\begin{align*}
    \pr\limits_{I\otimes S\otimes \Pi}[(\Pi, I, S_I) \in B_{\alpha}^{+}] \leq \pr\limits_{I\otimes \Pi}[(\Pi, I) \in B_{\alpha}] = \sum_{\pi \in \Pi_\alpha}\pr[\Pi = \pi]\sum_{i \in A(\pi)\cap G(\pi)}\frac{1}{m} \leq \frac{\hat{m}}{m}\pr[\Pi \in \mathbf{\Pi}_\alpha],
\end{align*}
where the third inequality is from $|G(\pi)| \leq \hat{m}$. Hence, 
\begin{align*}
     \pr\limits_{I \otimes (S, \Pi)}&[(\Pi, I, S_I) \in B_{\alpha}^+] - \pr\limits_{I\otimes S\otimes \Pi}[(\Pi, I, S_I) \in B_{\alpha}^{+}] < \left(\frac{\hat{m}}{m}(e^\eps - 1) + 2c\right) \cdot \pr[\Pi \in \mathbf{\Pi}_\alpha].
\end{align*}
Combining these two parts completes the proof.
\end{proof}
Next we provide a variant of Lemma \ref{lem:sample_acc}.
\begin{lemma}
    If $\M$ is $(\alpha, \beta)$-sample accurate and each linear query $q_t$ depends on at most $\hat{m}$ samples, then for any constant $c > 0$, 
    \begin{align*}
        \pr_{S\sim \p^m, \Pi\sim I(\M, \A; S)}\left[\max_{t}|a_t - q_t(Q_\Pi)| > \alpha + \frac{\hat{m}}{m}c\right] < \frac{\beta}{c}.
    \end{align*}
\end{lemma}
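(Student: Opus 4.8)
The plan is to mimic the resampling argument of Jung and Ligett~\cite{jung2020new} that proves Lemma~\ref{lem:sample_acc}, replacing the single place where a worst-case range of $1$ appears by the much smaller quantity $\hat m/m$, which is all the $\hat m$-sparsity of the queries buys. Recall that $\Q_\pi$ is, by definition, the conditional law of $S$ given $\Pi=\pi$, so a fresh draw $S'\sim\Q_\pi$ makes the pair $(\pi,S')$ distributed exactly as $(\Pi,S)$ with $S\sim\p^m,\ \Pi\sim I(\M,\A;S)$; in particular $(\alpha,\beta)$-sample accuracy is exactly the statement $\E_{\Pi}[p(\Pi)]\le\beta$, where $p(\pi):=\pr_{S'\sim\Q_\pi}[\max_t|a_t-q_t(S')|\ge\alpha]$ and the outputs $a_t$ and queries $q_t$ are functions of $\pi$ only. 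Fix $\pi$, pick $t^\star=t^\star(\pi)\in\operatorname{argmax}_t|a_t-q_t(\Q_\pi)|$, and write $\mu(\pi):=|a_{t^\star}-q_{t^\star}(\Q_\pi)|=\max_t|a_t-q_t(\Q_\pi)|$. Convexity of $|\cdot|$ and Jensen's inequality give $\E_{S'\sim\Q_\pi}|a_{t^\star}-q_{t^\star}(S')|\ge|a_{t^\star}-\E_{S'\sim\Q_\pi}q_{t^\star}(S')|=\mu(\pi)$, hence $\E_{S'\sim\Q_\pi}[\max_t|a_t-q_t(S')|]\ge\mu(\pi)$.

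The new ingredient is an almost-sure bound on the \emph{fluctuation} of each query under resampling. Writing $q_t(S)=\tfrac1m\sum_{i\in g(t,\pi)}q_{t,i}(S_i)$ with $|g(t,\pi)|\le\hat m$ and each $q_{t,i}:\X\to[0,1]$, we have $q_t(S')-q_t(\Q_\pi)=\tfrac1m\sum_{i\in g(t,\pi)}\big(q_{t,i}(S'_i)-\E_{S''\sim\Q_\pi}q_{t,i}(S''_i)\big)$, which is $1/m$ times a sum of at most $\hat m$ terms each in $[-1,1]$; therefore $|q_t(S')-q_t(\Q_\pi)|\le\hat m/m$ for every $t$ and every $S'$ in the support of $\Q_\pi$. (It is essential that $q_t(\Q_\pi)$ is the conditional mean $\E_{S'\sim\Q_\pi}q_t(S')$, so it lies within $\hat m/m$ of every realization $q_t(S')$; the magnitude of $q_t$ itself is not small in the count-tracking application.) Consequently $|a_t-q_t(S')|\le|a_t-q_t(\Q_\pi)|+\hat m/m\le\mu(\pi)+\hat m/m$ for all $t$ and all such $S'$, so $\max_t|a_t-q_t(S')|\le\mu(\pi)+\hat m/m$ almost surely.

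Finally I would combine the two bounds. Since $\max_t|a_t-q_t(S')|$ is at most $\alpha$ off the sample-inaccuracy event and at most $\mu(\pi)+\hat m/m$ everywhere,
\begin{align*}
\mu(\pi)\ \le\ \E_{S'\sim\Q_\pi}\big[\max_t|a_t-q_t(S')|\big]\ \le\ \alpha\big(1-p(\pi)\big)+\big(\mu(\pi)+\tfrac{\hat m}{m}\big)p(\pi),
\end{align*}
which rearranges to $p(\pi)\ \ge\ \dfrac{\mu(\pi)-\alpha}{\mu(\pi)-\alpha+\hat m/m}$; on the event $\mu(\pi)>\alpha+\tfrac{\hat m}{m}c$ this forces $p(\pi)\ge c$, and it is precisely the replacement of the range bound $1$ of \cite{jung2020new} by $\hat m/m$ that shrinks their additive slack $c$ to the slack $\tfrac{\hat m}{m}c$ here. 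Hence $\{\pi:\mu(\pi)>\alpha+\tfrac{\hat m}{m}c\}\subseteq\{\pi:p(\pi)\ge c\}$, and Markov's inequality applied to the nonnegative random variable $p(\Pi)$ with $\E_{\Pi}[p(\Pi)]\le\beta$ gives $\pr_{\Pi}[\max_t|a_t-q_t(\Q_\Pi)|>\alpha+\tfrac{\hat m}{m}c]=\pr_{\Pi}[\mu(\Pi)>\alpha+\tfrac{\hat m}{m}c]\le\pr_{\Pi}[p(\Pi)\ge c]\le\beta/c$, as claimed. I do not anticipate a real obstacle: the whole argument is a line-by-line variant of \cite{jung2020new}. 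The one point requiring care is conceptual rather than technical — one must recognise that $\hat m$-sparsity controls the \emph{fluctuation} of $q_t$ across resamplings, not its magnitude, and must therefore insert the $\hat m/m$ bound at the conditional-expectation step (bounding $\sup_{S'}\max_t|a_t-q_t(S')|$ by $\mu(\pi)+\hat m/m$) rather than attempting a pointwise bound on $q_t$, which would be false here; tracking the absolute constants in the final Markov step is routine.
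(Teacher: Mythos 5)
Your strategy---rerunning the Jung--Ligett resampling argument with the worst-case range bound $1$ replaced by a quantity of order $\hat m/m$---is the same as the paper's, but the place where you insert $\hat m/m$ breaks the final deduction. From $\mu(\pi)\le\alpha\bigl(1-p(\pi)\bigr)+\bigl(\mu(\pi)+\tfrac{\hat m}{m}\bigr)p(\pi)$ you correctly obtain $p(\pi)\ge\frac{\mu(\pi)-\alpha}{\mu(\pi)-\alpha+\hat m/m}$, but on the event $\mu(\pi)>\alpha+\tfrac{\hat m}{m}c$ this gives only $p(\pi)>\tfrac{c}{1+c}$, not $p(\pi)\ge c$: writing $y=\mu(\pi)-\alpha$ and $h=\hat m/m$, the hypothesis is $y>ch$, the bound $\tfrac{y}{y+h}\ge c$ would require $y\ge\tfrac{c}{1-c}h$, and $\tfrac{c}{1-c}>c$ for every $c\in(0,1)$ (for $c=\tfrac12$ you need $y\ge h$ but only know $y>h/2$); since $\tfrac{c}{1+c}<c$ for all $c>0$, the claimed implication fails. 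Markov then yields $\pr\bigl[\max_t|a_t-q_t(\Q_\Pi)|>\alpha+\tfrac{\hat m}{m}c\bigr]\le\tfrac{(1+c)\beta}{c}$, which is strictly weaker than the stated $\beta/c$ (harmless in the paper's application where $c=1/\sqrt k$, but not the lemma as claimed).

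The loss comes from capping the resampled error on the bad event at $\hat m/m$ \emph{above the conditional mean} $\mu(\pi)$, whereas the paper caps it at $\hat m/m$ \emph{above the accuracy threshold} $\alpha$: its proof replaces Jung--Ligett's fact $a_{t^*}-q_{t^*}(S')-\alpha\le 1$ by $a_{t^*}-q_{t^*}(S')-\alpha\le\hat m/m$. Your own diameter observation already delivers this sharper cap: conditioned on $\pi$, the values $q_{t^*}(S')$ over the support of $\Q_\pi$ lie in an interval of length at most $\hat m/m$ (the deterministic part is fixed by $\pi$ and the random part is $\tfrac1m$ times at most $\hat m$ terms in $[0,1]$), so if some realization satisfies $|a_{t^*}-q_{t^*}(S'')|<\alpha$ (and if none does, then $p(\pi)=1\ge c$ and that transcript is handled directly for $c\le1$), every realization satisfies $a_{t^*}-q_{t^*}(S')<\alpha+\hat m/m$. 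Splitting the expectation then gives $\mu(\pi)-\alpha\le\tfrac{\hat m}{m}\,p(\pi)$, hence $p(\pi)>c$ on the bad transcripts, and Markov recovers exactly $\beta/c$ (with the negative direction handled symmetrically). So the gap is a one-line misplacement of the $\hat m/m$ bound, but as written your argument does not prove the stated constant.
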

\begin{proof}
    The proof presented here is a minor modification of that used in Lemma \ref{lem:sample_acc}, provided in \cite{jung2020new}. Let $t^{*}(\pi) = \operatorname{argmax}_t\left|a_t - q_t(\Q_{\pi})\right|$. The proof of Lemma \ref{lem:sample_acc} uses a fact that $a_{t^*(\Pi)}-q_{t^*(\Pi)}\left(S^{\prime}\right)-\alpha \leq 1$. Under the condition that $q_t$ only depends on $\hat{m}$ samples, it can be concluded that $a_t - \frac{1}{m}\sum_{i\in[m]}q_{t, i}(S_i) \leq \frac{\hat{m}}{m}$. Thus the fact now becomes to $a_{t^*(\Pi)}-q_{t^*(\Pi)}\left(S^{\prime}\right)-\alpha \leq \frac{\hat{m}}{m}$. Using this new fact in original proof of Lemma \ref{lem:sample_acc} can yield the inequality above.
\end{proof}
The proof of Theorem \ref{thm:transfer_linear_m_hat} is direct combination of above two lemmas. 
\section{Missing Proofs in Section \ref{sec:alg}}
\begin{remark}
Without loss of generality, the adversary is assumed to be deterministic. This is because a randomized adversary can be regarded as a probabilistic mixture of deterministic adversaries, thereby rendering it sufficient to establish adaptive robustness against deterministic adversaries.
\end{remark}
\subsection{Proof of Lemma \ref{lem:privacy}}
\begin{proof}[proof of Lemma \ref{lem:privacy}]
As mentioned in the main text, the transcript generated by $\M$ and $\A$ is of the form $((\bot, 0), (\bot, 0), \ldots, (\top, \hat{b}_1),(\bot, \hat{b}_1), \ldots, (\top, \hat{b}_{\sqrt{k}}))$. Note that, w.l.o.g., $\A$ is assumed to be deterministic; thus the input generated by $\A$ can be fully determined by the output of $\M$ and thus is omitted in the transcript. Recall that Algorithm \ref{alg:server} consists of $\sqrt{k}$ phases and the output does not change until the end of each phase. Therefore for a given transcript $\pi$, it can be represented by $(\pi^{(1)}, \pi^{(2)}, \ldots, \pi^{\sqrt{k}})$ where $\pi^{(t)} = (\bot, \bot, \ldots, \hat{b}_t)$ is the simplified output of $t$-th phase. For notation convenience, we write $P_D[\pi]$ to denote $\pr\limits_{\Pi\sim I(\M, \A;D)]}[\Pi = \pi]$. Then we have:
\begin{align}
\label{eq:pdv}
    P_{D}[\pi] = P_{D}[\pi^{(1)}]\cdot P_{D}[\pi^{(2)}|\pi^{(1)}] \cdot P_{D}[\pi^{(3)}|\pi^{(1)}, \pi^{(2)}]\cdots P_{D}[\pi^{(\sqrt{k})}|\pi^{(1)}, \pi^{(2)}, \cdots,\pi^{(\sqrt{k} - 1)}]
\end{align}
\paragraph{Privacy analysis of $P_{\D}[\pi^{(t)}|\pi^{(1)}, \pi^{(2)}, \ldots, \pi^{(t - 1)}]$}. Now we focus on one phase of Algorithm \ref{alg:server}. Denote $(\pi^{(1)}, \pi^{(2)}, \cdots, \pi^{(t - 1)})$ as $\pi^{\leq (t - 1)}$. In each phase $t\in[\sqrt{k}]$, the server updates the output only when the number of received bits denoted as $c_t$ surpasses the noisy threshold $\hat{T}$. Hence the probability $P_D[\pi^{(t)}|\pi^{\leq(t - 1)}]$ can be calculated as follows:
\begin{align}
\label{eq:pdvt}
    &P_{D}\left[\pi^{(t)} | {\pi^{\leq(t - 1)}}\right ] \notag \\
 &= P_{D}\left[c_t - 1 < \hat{T} \leq c_t\right] \cdot P_{D}\left[\BM(c_t) = \hat{b}_t|\BM(c_1,\ldots, c_{t - 1}) = (\hat{b}_1, \ldots, \hat{b}_{t- 1})\right].
\end{align}
Without loss of generality, assume that $D'$ differs from $D$ at $(i, j)$ such that $D_{ij} < D'_{ij}$. If the local counter of site $i$ denoted as $n_i$ never surpasses $D_{ij}$, the output of $\M$ on both databases $D$ and $D'$ is identical, thus ensuring privacy. Privacy budget is only consumed when $n_i$ surpasses $D_{ij}$ or $D'_{ij}$, denoted as events $E_1$ and $E_2$, respectively. If $(i,j)\in f_L(\pi)$, then either $E_1$ or $E_2$ happens at the final time step of some phase $t$. Consequently, one of the two probability values $P_{D}\left[\pi^{(t)} | {\pi^{\leq(t - 1)}}\right ]$ and $P_{D'}\left[\pi^{(t)} | {\pi^{\leq(t - 1)}}\right ]$ will be zero. For instance, when $n_i$ exceeds $D_{ij}$ at the final time step of phase $t$, and as $D_{ij} < D'_{ij}$, $\M$ with $D'$ as input will receive no bits at this time, thus producing the same output $\perp$ as before, which results in $P_{D'}\left[\pi^{(t)} | {\pi^{\leq(t - 1)}}\right ] = 0$. To avoid this scenario, we require the condition $(i,j)\notin f_L(\pi)$. Under this condition, $P_{D'}\left[\pi^{(t)} | {\pi^{\leq(t - 1)}}\right ]$ can be computed in a similar manner to equation (\ref{eq:pdvt}):
\begin{align}
\label{eq:pdvt_prime}
 &P_{D'}\left[\pi^{t} | {\pi^{\leq(t - 1)}}\right] \notag \\
 &= P_{D'}\left[c'_t - 1 < \hat{T} \leq c'_t\right] \cdot P_{D'}\left[\BM(c'_t) = \hat{b}_t | \BM(c'_1,\ldots, c'_{t-1})= (\hat{b}_1, \ldots, \hat{b}_{t - 1})\right].
\end{align}

\paragraph{Composition of $\sqrt{k}$ subroutines and binary mechanism.} Combining equation (\ref{eq:pdv}), (\ref{eq:pdvt}) and (\ref{eq:pdvt_prime}) yields that
\begin{align}
\label{eq:pdv_full}
P_{D}\left[\pi\right] = (\prod_{t = 1}^{\sqrt{k}}P_{D}[c_t - 1 < \hat{T} < c_t])\cdot P_{D}\left[\BM(c_1, c_2, \cdots, c_{\sqrt{k}}) = (\hat{b}_1, \hat{b}_2, \cdots, \hat{b}_{\sqrt{k}})\right] \\
P_{D'}\left[\pi\right] = (\prod_{t = 1}^{\sqrt{k}}P_{D'}[c'_t - 1 < \hat{T} < c'_t])\cdot P_{D'}[\BM(c'_1, c'_2, \cdots, c'_{\sqrt{k}}) = (\hat{b}_1, \hat{b}_2, \cdots, \hat{b}_{\sqrt{k}})]
\end{align}
Since $D \sim D'$, in our mechanism, $c_t$ and $c'_t$ will differ only when $D_{ij}$ or $D'_{ij}$ is surpassed during the $t$-th phase. Since each phase uses a new counter, there exist at most two phases $t_1, t_2 \in [\sqrt{k}]$ such that $|c'_t - c_t| = 1$ and for the other phases, $c_t = c'_t$. Recall that $\hat{T} = T + \Lap(\eps/4))$. By Lemma \ref{lem:lap}, for $t \in \{t_1, t_2\}$, we can get that
\begin{align}
\label{eq:dp_atd_one}
    e^{-\frac{\eps}{4}}\cdot P_{D'}\left[c'_t - 1 < \hat{T} \leq c'_t\right] \leq  P_{D}\left[c_t - 1 < \hat{T} \leq c_t\right] \leq e^{\frac{\eps}{4}}\cdot P_{D'}\left[c'_t - 1 < \hat{T} \leq c'_t\right].
\end{align}
By direct calculation, we have
\begin{align}
\label{eq:dp_atd_all}
e^{-\frac{\eps}{2}}\prod_{t = 1}^{\sqrt{k}}P_{D'}[c'_t - 1 < \hat{T} < c'_t] \leq \prod_{t = 1}^{\sqrt{k}}P_{D}[c_t - 1 < \hat{T} < c_t]\leq e^{\frac{\eps}{2}}\prod_{t = 1}^{\sqrt{k}}P_{D'}[c'_t - 1 < \hat{T} < c'_t].
\end{align}
Now consider the binary mechanism. It is known from analysis above that $(c_1, \cdots, c_{\sqrt{k}})$ differs from $(c'_1, \cdots, c'_{\sqrt{k}})$ at most two positions. By Theorem \ref{thm:bm}, we have
\begin{align}
\label{eq:dp_bm}
e^{-\frac{\eps}{2}}\cdot P_{D'}[\BM(c'_1,\cdots, c'_{\sqrt{k}}) = (\hat{b}_1, \cdots, \hat{b}_{\sqrt{k}})] \leq P_{D}[\BM(c_1, \cdots, c_{\sqrt{k}}) = (\hat{b}_1, \cdots, \hat{b}_{\sqrt{k}})], \notag\\
P_{D}[\BM(c_1, \cdots, c_{\sqrt{k}}) = (\hat{b}_1, \cdots, \hat{b}_{\sqrt{k}})]
\leq e^{\frac{\eps}{2}}\cdot P_{D'}[\BM(c'_1, \cdots, c'_{\sqrt{k}}) = (\hat{b}_1, \cdots, \hat{b}_{\sqrt{k}})].
\end{align}
Combining equation (\ref{eq:dp_atd_all}) and (\ref{eq:dp_bm}), we can get that
\begin{align*}
    e^{-\eps}\cdot P_{D'}[\pi] \leq P_{D}[\pi] \leq e^{\eps}\cdot P_{D'}[\pi],
\end{align*}
which completes the proof.
\end{proof}
\section{Extension to Low Sensitivity Queries}
\begin{definition} A query $q: \mathcal{X}^m \rightarrow \mathbb{R}$ is called $\Delta$-sensitive if for all pairs of neighbouring datasets $S, S^{\prime} \in \mathcal{X}^m:\left|q(S)-q\left(S^{\prime}\right)\right| \leq \Delta$. Note that linear queries are $(1 / m)$-sensitive.
\end{definition}
\begin{lemma}
\label{lem:cond_low_sensitivity}
If $\M$ satisfies $(\eps, \delta, \kappa)$ partial differential privacy with privacy leak function $f_{L}: \mathbf{\Pi} \rightarrow 2^{[m]}$. Further, if each $\Delta$-sensitive query $q_t$ depends on at most $\hat{m}$ samples, then for any data distribution $\p$, any adversary $\A$, and any constant $c > 0$:
\begin{align}
    \pr\limits_{S\sim \p^m, \Pi\sim I(M, A;S)}\left[\max_{t}\left|q_t(\Q_{\pi}) - q_t(\p^m)\right| > \left(e^\eps - 1 + \frac{\kappa}{\hat{m}} + 4c \right) \hat{m}\Delta \right] \leq \frac{m}{\hat{m}}\cdot\frac{\delta}{c}.
\end{align}
\end{lemma}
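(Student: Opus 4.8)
The plan is to reduce Lemma~\ref{lem:cond_low_sensitivity} to the already-proved linear-query result, Lemma~\ref{lem:rdp_post_sim_m_hat}, by the standard discretization-of-a-low-sensitivity-query trick. First I would recall that a $\Delta$-sensitive query $q_t:\X^m\to\R$ depending on at most $\hat m$ samples can be written as a bounded function whose total range over a single dataset is at most $\hat m\Delta$ (changing all $\hat m$ relevant coordinates moves the value by at most $\hat m\Delta$). Rescaling and shifting, $q_t/(\hat m\Delta)$ behaves like a $[0,1]$-valued query depending on $\hat m$ coordinates, but it is not yet \emph{linear}; so the second step is to approximate each $q_t$ by an average of linear queries. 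Concretely, following \cite{jung2020new,bassily2016algorithmic}, one discretizes the range $[0,\hat m\Delta]$ into $O(1/c)$ buckets and writes $q_t$ as a scaled sum of threshold indicators $q_t^{(\ell)}(S)=\mathbf 1[q_t(S)\ge \ell\cdot c\hat m\Delta]$; each such threshold can in turn be realized (up to the per-coordinate sensitivity) as a linear query in the sense of the paper's definition, because a $\Delta$-sensitive function depending on $\hat m$ coordinates can be monotonically coupled, coordinate by coordinate, so that the indicator decomposes additively with each sub-query in $[0,1]$. The key point is that this introduces only an additive $O(c\hat m\Delta)$ rounding error and blows up the failure probability by the number of buckets $O(1/c)$, which is exactly the source of the $\frac{m}{\hat m}\cdot\frac{\delta}{c}$ and the $4c$ factors in the statement.

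Second, I would apply Lemma~\ref{lem:rdp_post_sim_m_hat} to each of these linear queries. That lemma already gives, for linear queries depending on at most $\hat m$ samples under $(\eps,\delta,\kappa)$-partial DP,
\begin{align*}
\pr\left[\max_t\left|q^{(\ell)}_t(\Q_\pi)-q^{(\ell)}_t(\p^m)\right|>\tfrac{\hat m}{m}(e^\eps-1)+\tfrac{\kappa}{m}+2c'\right]\le\frac{\delta}{c'}.
\end{align*}
Multiplying through by the scale factor $\hat m\Delta$ converts $\tfrac{\hat m}{m}(e^\eps-1)+\tfrac{\kappa}{m}$ into $\bigl(e^\eps-1+\tfrac{\kappa}{\hat m}\bigr)\hat m\Delta$ after absorbing the $\tfrac{m}{\hat m}$ factor appropriately — this is precisely the shape of the error bound claimed. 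Then a union bound over the $O(1/c)$ discretization levels gives the final failure probability $\frac{m}{\hat m}\cdot\frac{\delta}{c}$ (the $\frac{m}{\hat m}$ enters when one accounts for summing $\kappa/m$ terms per bucket against the $\kappa/\hat m$ target), and the discretization error plus the two $c'$-slack terms combine into the $4c$ in the displayed bound after setting $c'=c$.

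The main obstacle, and the step that deserves the most care, is the reduction of a general $\Delta$-sensitive query on $\hat m$ coordinates to an \emph{average of linear} queries in the exact syntactic sense the paper uses ($q_t(S)=\frac1m\sum_i q_{t,i}(S_i)$ with $q_{t,i}:\X\to[0,1]$). A generic low-sensitivity function is \emph{not} additively separable, so one cannot literally write it as such an average; the honest fix is to use the monotone-coupling / Lipschitz-extension argument that underlies the proof of Lemma~\ref{lem:trans_linear} itself, i.e.\ to re-run the core comparison-of-two-measures argument with $q_{t^*(\pi),i}(x)$ replaced by the per-coordinate \emph{increment} of the $\Delta$-sensitive query, which lies in $[-\Delta,\Delta]$ rather than $[0,1]$. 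Tracking this $\Delta$ through the chain of inequalities in the proof of Lemma~\ref{lem:rdp_post_sim_m_hat} — in particular the step ``By $q_{t^*(\pi),i}(x)\in[0,1]$'' becomes ``By $|q_{t^*(\pi),i}(x)|\le\Delta$'' — is what produces the $\hat m\Delta$ and $\kappa/\hat m$ scaling, and one must be careful that the positive/negative parts $B_\alpha^+$ are handled symmetrically and that the bound $|G(\pi)|\le\hat m$ is used to replace $m$ by $\hat m$ in the relevant places. I would expect the bookkeeping of constants (the $4c$ and the $\frac{m}{\hat m}$) to be the fiddly part, but structurally the proof is a faithful rescaling of the linear-query proof already in the appendix, so no genuinely new idea is needed.
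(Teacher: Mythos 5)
There is a genuine gap at the heart of your reduction. Both of your routes rely on making a $\Delta$-sensitive query behave like a linear one, but the comparison-of-measures argument behind Lemma \ref{lem:trans_linear} / Lemma \ref{lem:rdp_post_sim_m_hat} uses, essentially as its first step, the additive decomposition $q_t(\Q_\pi)-q_t(\p^m)=\frac{1}{m}\sum_i\sum_x q_{t,i}(x)\left(\Pr[S_i=x\mid\Pi=\pi]-\Pr[S_i=x]\right)$, i.e.\ the fact that the query is a sum of fixed per-coordinate functions of $S_i$ alone; this is what allows one to define $\X^{+}(\pi,i)$ and the event set $B_\alpha^{+}$ and then apply partial DP to events indexed by a single coordinate value. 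A general $\Delta$-sensitive query admits no such decomposition: your discretization into threshold indicators $\mathbf 1[q_t(S)\ge\ell c\hat m\Delta]$ yields queries that are still not additively separable (an indicator of a level set of a non-separable function cannot be "realized as a linear query by monotone coupling"), and your fallback of re-running the linear proof with "per-coordinate increments in $[-\Delta,\Delta]$" does not type-check: the natural increment $\bar q_{t}(S_{\le i})-\bar q_{t}(S_{\le i-1})$, where $\bar q(\boldsymbol{x}_{\le i})=\E_{S'\sim\p^{m-i}}[q(\boldsymbol{x}_{\le i},S')]$, depends on the whole prefix $S_{\le i}$ rather than on $S_i$ alone, so there is no fixed sub-query $q_{t,i}:\X\to[-\Delta,\Delta]$ with which to form $\X^{+}(\pi,i)$ or to compare conditional and unconditional marginals of $S_i$.

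The paper's proof supplies exactly the machinery you are missing. It works with the hybrid quantities $\bar q_{t^*(\pi)}(S_{\le i})$, whose telescoping sum over $i$ recovers $q(\Q_\pi)-q(\p^m)$ and gives the lower bound $\Pr[\Pi\in\mathbf{\Pi}_\alpha](\alpha-\kappa\Delta)$ after discarding the leaked and irrelevant coordinates; for the upper bound it converts each prefix-conditional expectation into an integral over thresholds $z\in[0,2\Delta]$ of probabilities of events $\mathbf{\Pi}_{\alpha,i,z}(S_{\le i})$, to which partial DP can be applied together with the resampling identity that $(S,Y)$ and $(S^{i\leftarrow Y},S_i)$ are identically distributed and $\E_Y[\bar q(S^{i\leftarrow Y}_{\le i})]=\bar q(S_{\le i-1})$. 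The constants also arise differently from your accounting: the $\frac{m}{\hat m}$ in the failure probability comes from paying an additive $2\delta\Delta$ for each of the $m$ coordinates while the signal scales only with $\hat m\Delta$ (via $\sum_i\Pr[\Pi\in\mathbf{\Pi}_{\alpha,i}]\le\hat m\Pr[\Pi\in\mathbf{\Pi}_\alpha]$), and the $4c$ absorbs the resulting $2\delta m\Delta$ term under the assumption $\Pr[\Pi\in\mathbf{\Pi}_\alpha]>\frac12\cdot\frac{m}{\hat m}\cdot\frac{\delta}{c}$ — not from a union bound over $O(1/c)$ discretization buckets, which would in any case inflate the failure probability by a factor your bookkeeping does not account for.
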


\begin{proof}
We introduce the following useful definitions: $\bar{q}\left(\boldsymbol{x}_{\leq i}\right)=\underset{S^{\prime}\sim \mathcal{P}^{m-i}}{\E} \left[q\left(\left(\boldsymbol{x}_{\leq i}, S^{\prime}\right)\right)\right]$. Given a transcript $\pi \in \mathbf{\Pi}$, let $t^{*}(\pi) = \operatorname{argmax}_t\left|q_t(\Q_{\pi}) - q_t(\p^m)\right|$. We use a function $g(t, \pi)$ to specify the samples that $q_t$ depends on at time $t$ given $\pi$. For an $\alpha > 0$, we define the following sets:
\begin{gather*}
\mathbf{\Pi}_\alpha=\left\{\pi \in \boldsymbol{\Pi} \mid q_{t^*(\pi)}\left(\mathcal{Q}_\pi\right)-q_{t^*(\pi)}(\mathcal{P}^m)>\alpha\right\}, \\
A(\pi) = \left\{i \in [m] | i \notin f_L(\pi) \right\}, \\
G(\pi) = \left\{i \in [m] | i \in g(t^*(\pi),\pi)\right\}, \\
\mathbf{\Pi}_{\alpha,i}=\left\{\pi \in \boldsymbol{\Pi} \mid \pi \in \mathbf{\Pi}_\alpha,i \in A(\pi)\cap G(\pi)  \right\},
\end{gather*}
and for any $z \in [0,2\Delta]$, $i \in [m]$, denote 
$$\boldsymbol{\Pi}_{\alpha,i, z}\left(\boldsymbol{x}_{\leq i}\right)=\left\{\pi \in \boldsymbol{\Pi}_{\alpha,i} \mid \bar{q}_{t^*(\pi)}\left(\boldsymbol{x}_{\leq i}\right)-\bar{q}_{t^*(\pi)}\left(\boldsymbol{x}_{\leq i-1}\right)>z-\Delta\right\}.$$
We will then focus on the following expectation:
$$
\underset{S \sim \mathcal{P}^m}{\E}\left[\sum_{\pi \in \boldsymbol{\Pi}_\alpha} \pr\limits_{\Pi \sim I(S)}[\Pi=\pi]\sum_{i\in A(\pi)\cap G(\pi)} \left(\bar{q}_{t^*(\pi)}\left(S_{\leq i}\right)-\bar{q}_{t^*(\pi)}\left(S_{\leq i-1}\right)\right)\right].
$$
On one hand, we have that 
\begin{align*}
    &\underset{S \sim \mathcal{P}^m}{\E}\left[\sum_{\pi \in \boldsymbol{\Pi}_\alpha} \pr\limits_{\Pi \sim I(S)}[\Pi=\pi]\sum_{i\in A(\pi)\cap G(\pi)} \left(\bar{q}_{t^*(\pi)}\left(S_{\leq i}\right)-\bar{q}_{t^*(\pi)}\left(S_{\leq i-1}\right)\right)\right] \\
    =& \underbrace{ \underset{S \sim \mathcal{P}^m}{\E}\left[\sum_{\pi \in \boldsymbol{\Pi}_\alpha} \pr\limits_{\Pi \sim I(S)}[\Pi=\pi]\sum_{i=1}^{m} \left(\bar{q}_{t^*(\pi)}\left(S_{\leq i}\right)-\bar{q}_{t^*(\pi)}\left(S_{\leq i-1}\right)\right)\right]}_{\text{Part I}} \\ & - \underbrace{\underset{S \sim \mathcal{P}^m}{\E}\left[\sum_{\pi \in \boldsymbol{\Pi}_\alpha} \pr\limits_{\Pi \sim I(S)}[\Pi=\pi]\sum_{i \in f_L(\pi)\cap G(\pi)} \left(\bar{q}_{t^*(\pi)}\left(S_{\leq i}\right)-\bar{q}_{t^*(\pi)}\left(S_{\leq i-1}\right)\right)\right]}_{\text{Part II}} \\ & - \underbrace{\underset{S \sim \mathcal{P}^m}{\E}\left[\sum_{\pi \in \boldsymbol{\Pi}_\alpha} \pr\limits_{\Pi \sim I(S)}[\Pi=\pi]\sum_{i \in  \bar{G}(\pi)} \left(\bar{q}_{t^*(\pi)}\left(S_{\leq i}\right)-\bar{q}_{t^*(\pi)}\left(S_{\leq i-1}\right)\right)\right]}_{\text{Part III}}.
\end{align*}
Note that query $q_{t^*(\pi)}$ does not depend on the $i$th data for those $i \in \bar{G}(\pi)$. Therefore $\forall i \in \bar{G}(\pi), \bar{q}_{t^*(\pi)}\left(S_{\leq i}\right)-\bar{q}_{t^*(\pi)}\left(S_{\leq i-1}\right) = 0$, which means the third part equals to zero. We then bound the first two parts separately. 
\begin{align*}
 \text{Part I} & = \underset{S \sim \mathcal{P}^m}{\E}\left[\sum_{\pi \in \boldsymbol{\Pi}_\alpha} \pr\limits_{\Pi \sim I(S)}[\Pi=\pi] \sum_{i=1}^{m}\left(\bar{q}_{t^*(\pi)}\left(S_{\leq i}\right)-\bar{q}_{t^*(\pi)}\left(S_{\leq i-1}\right)\right)\right] \\
 & = \underset{S \sim \mathcal{P}^m}{\E}\left[\sum_{\pi \in \boldsymbol{\Pi}_\alpha} \pr\limits_{\Pi \sim I(S)}[\Pi=\pi] \left(\bar{q}_{t^*(\pi)}\left(S\right)-\bar{q}_{t^*(\pi)}\left(\mathcal{P}^m\right)\right)\right]\\
 &= \sum_{\pi \in \boldsymbol{\Pi}_\alpha} \pr[\Pi=\pi] \left(\bar{q}_{t^*(\pi)}\left(\Q_{\pi}\right)-\bar{q}_{t^*(\pi)}\left(\mathcal{P}^m\right)\right) > \alpha \cdot  \pr[\Pi\in \mathbf{\Pi}_\alpha],\\
 \text{Part II} & \leq  \underset{S \sim \mathcal{P}^m}{\E}\left[\sum_{\pi \in \boldsymbol{\Pi}_\alpha} \pr\limits_{\Pi \sim I(S)}[\Pi=\pi]\cdot  |f_L(\pi)\cap G(\pi)| \Delta\right] \leq  \underset{S \sim \mathcal{P}^m}{\E}\left[\sum_{\pi \in \boldsymbol{\Pi}_\alpha} \pr\limits_{\Pi \sim I(S)}[\Pi=\pi]\cdot  |f_L(\pi)| \Delta\right] \\ &\leq \kappa\Delta \underset{S \sim \mathcal{P}^m}{\E}\left[ \pr\limits_{\Pi \sim I(S)}[\Pi \in \mathbf{\Pi}_\alpha]\right] = \kappa\Delta\cdot \pr[\Pi\in \mathbf{\Pi}_\alpha].
\end{align*}
Combining the results together, we have 
$$
\underset{S \sim \mathcal{P}^m}{\E}\left[\sum_{\pi \in \boldsymbol{\Pi}_\alpha} \pr\limits_{\Pi \sim I(S)}[\Pi=\pi]\sum_{i\in A(\pi)\cap G(\pi)} \left(\bar{q}_{t^*(\pi)}\left(S_{\leq i}\right)-\bar{q}_{t^*(\pi)}\left(S_{\leq i-1}\right)\right)\right] > \pr[\Pi\in \mathbf{\Pi}_\alpha] \cdot \left(\alpha-\kappa\Delta\right).
$$
On the other hand, we consider that 
\begin{align*}
\underset{S \sim \mathcal{P}^m}{\E}&\left[\sum_{\pi \in \boldsymbol{\Pi}_\alpha} \pr\limits_{\Pi \sim I(S)}[\Pi=\pi]  \sum_{i\in A(\pi)\cap G(\pi)} \left(\bar{q}_{t^*(\pi)}\left(S_{\leq i}\right)-\bar{q}_{t^*(\pi)}\left(S_{\leq i-1}\right)\right)\right] \\
&= \sum_{i=1}^{m}\underset{S \sim \mathcal{P}^m}{\E}\left[\sum_{\pi \in \boldsymbol{\Pi}_{\alpha,i}} \pr\limits_{\Pi \sim I(S)}[\Pi=\pi] \left(\bar{q}_{t^*(\pi)}\left(S_{\leq i}\right)-\bar{q}_{t^*(\pi)}\left(S_{\leq i-1}\right)\right)\right].
\end{align*}
Now for each coordinate $i$, we have 
\begin{align*}
 \underset{S \sim \mathcal{P}^m}{\E}&\left[\sum_{\pi \in \boldsymbol{\Pi}_{\alpha,i}} \pr\limits_{\Pi \sim I(S)}[\Pi=\pi] \left(\bar{q}_{t^*(\pi)}\left(S_{\leq i}\right)-\bar{q}_{t^*(\pi)}\left(S_{\leq i-1}\right)+\Delta\right)\right] \\
& = \underset{S \sim \mathcal{P}^m}{\E}\left[ \int_{0}^{2\Delta} \pr\limits_{\Pi \sim I(S)}[\Pi \in \boldsymbol{\Pi}_{\alpha,i,z}(S_{\leq i})]dz \right]\\
&\leq \underset{S \sim \mathcal{P}^m, Y\sim \mathcal{P}}{\E}\left[ \int_{0}^{2\Delta} \left(e^{\epsilon}\pr\limits_{\Pi \sim I(S^{i\leftarrow Y})}[\Pi \in \boldsymbol{\Pi}_{\alpha,i,z}(S_{\leq i})]+\delta\right)dz \right]\\
&=  \underset{S \sim \mathcal{P}^m, Y\sim \mathcal{P}}{\E}\left[ e^{\epsilon} \sum_{\pi \in \boldsymbol{\Pi}_{\alpha,i}} \pr\limits_{\Pi \sim I(S^{i\leftarrow Y})}[\Pi=\pi] \left(\bar{q}_{t^*(\pi)}\left(S_{\leq i}\right)-\bar{q}_{t^*(\pi)}\left(S_{\leq i-1}\right)+\Delta\right)+2\Delta \delta \right]\\
&= \underset{S \sim \mathcal{P}^m, Y\sim \mathcal{P}}{\E}\left[ e^{\epsilon} \sum_{\pi \in \boldsymbol{\Pi}_{\alpha,i}} \pr\limits_{\Pi \sim I(S)}[\Pi=\pi] \left(\bar{q}_{t^*(\pi)}\left(S_{\leq i}^{i\leftarrow Y}\right)-\bar{q}_{t^*(\pi)}\left(S_{\leq i-1}\right)+\Delta\right)+2\Delta \delta \right].
\end{align*}
The inequality holds due to partial differential privacy and the fact that position $i$ does not belong to the privacy leak set. $S^{i\leftarrow Y}$ stands for $\left(S_1, \ldots, S_{i-1}, Y, S_{i+1}, \ldots, S_m\right)$. Therefore, in the last equality we have that $(S,Y)$ and $(S^{i\leftarrow Y},S_i)$ are distributed identically. Since $Y \sim \mathcal{P}$ and is independent of $\Pi$, we have that $$\underset{Y\sim \mathcal{P}}{\E}\left[\bar{q}_{t^*(\pi)}\left(S_{\leq i}^{i\leftarrow Y}\right)\right] = \bar{q}_{t^*(\pi)}\left(S_{\leq i-1}\right), $$
\begin{align*}
 \underset{S \sim \mathcal{P}^m}{\E}&\left[\sum_{\pi \in \boldsymbol{\Pi}_{\alpha,i}} \pr\limits_{\Pi \sim I(S)}[\Pi=\pi] \left(\bar{q}_{t^*(\pi)}\left(S_{\leq i}\right)-\bar{q}_{t^*(\pi)}\left(S_{\leq i-1}\right)+\Delta\right)\right] \\
& \leq \underset{S \sim \mathcal{P}^m}{\E}\left[ e^{\epsilon}  \pr\limits_{\Pi \sim I(S)}[\Pi \in \boldsymbol{\Pi}_{\alpha,i}] \Delta+2\Delta \delta \right] = \left(e^{\epsilon}  \pr[\Pi \in \boldsymbol{\Pi}_{\alpha,i}]+2\delta\right) \Delta.
\end{align*}
Subtracting $ \pr[\Pi \in \boldsymbol{\Pi}_{\alpha,i}]\Delta$ on both sides gives 
$$\underset{S \sim \mathcal{P}^m}{\E}\left[\sum_{\pi \in \boldsymbol{\Pi}_{\alpha,i}} \pr\limits_{\Pi \sim I(S)}[\Pi=\pi] \left(\bar{q}_{t^*(\pi)}\left(S_{\leq i}\right)-\bar{q}_{t^*(\pi)}\left(S_{\leq i-1}\right)\right)\right] \leq \left(\left(e^{\epsilon}-1\right)\pr[\Pi \in \boldsymbol{\Pi}_{\alpha,i}]+2\delta\right) \Delta.$$
Here comes the key observation that 
\begin{align*}
 \sum_{i = 1}^m \pr[\Pi \in \mathbf{\Pi}_{\alpha, i}] &=  \sum_{i = 1}^m \sum_{\pi \in \boldsymbol{\Pi}_{\alpha},i \in A(\pi)\cap G(\pi)} \pr[\Pi =\pi] =  \sum_{\pi \in \boldsymbol{\Pi}_{\alpha}} \sum_{i \in A(\pi)\cap G(\pi)} \pr[\Pi =\pi] \\
 & \leq  \sum_{\pi \in \boldsymbol{\Pi}_{\alpha}} |A(\pi)\cap G(\pi)| \pr[\Pi =\pi] \leq \hat{m}\sum_{\pi \in \boldsymbol{\Pi}_{\alpha}} \pr[\Pi =\pi] \leq \hat{m} \pr[\Pi \in \mathbf{\Pi}_{\alpha}].
\end{align*}
Now using these results, we can achieve that 
\begin{align*}
\underset{S \sim \mathcal{P}^m}{\E}&\left[\sum_{\pi \in \boldsymbol{\Pi}_\alpha} \pr\limits_{\Pi \sim I(S)}[\Pi=\pi]  \sum_{i\in A(\pi)} \left(\bar{q}_{t^*(\pi)}\left(S_{\leq i}\right)-\bar{q}_{t^*(\pi)}\left(S_{\leq i-1}\right)\right)\right] \\
&\leq  \sum_{i=1}^{m} \left(\left(e^{\epsilon}-1\right)\pr[\Pi \in \boldsymbol{\Pi}_{\alpha,i}]+2\delta\right) \Delta \leq  
\left(\left(e^{\epsilon}-1\right)\pr[\Pi \in \boldsymbol{\Pi}_{\alpha}]+2\delta \frac{m}{\hat{m}}\right)\hat{m}\Delta.
\end{align*}
In summary, we obtain both an upper and a lower bound of the expectation. Suppose that $\pr\left[|q_{t^*(\pi)}(\Q_{\Pi}) - q_{t^*(\pi)}(\p)| > \alpha \right] > \frac{m}{\hat{m}}\cdot\frac{\delta}{c}$. Without loss of generality, assume that
\begin{align*}
   \pr\left[q_{t^*(\pi)}(\Q_{\Pi}) - q_{t^*(\pi)}(\p) > \alpha \right] = \pr[\Pi \in \mathbf{\Pi}_{\alpha}] > \frac{1}{2}\frac{m}{\hat{m}}\cdot\frac{\delta}{c}.
\end{align*}
By this assumption, we reach
$$ \pr[\Pi\in \mathbf{\Pi}_\alpha] \cdot \left(\alpha-\kappa\Delta\right) < \left(\left(e^{\epsilon}-1\right)\pr[\Pi \in \boldsymbol{\Pi}_{\alpha}]+2\delta \frac{m}{\hat{m}}\right)k^{\prime}\Delta<\pr[\Pi \in \boldsymbol{\Pi}_{\alpha}]\left(\left(e^{\epsilon}-1\right)+4c \right)\hat{m}\Delta.$$
This results in a contradiction for $\alpha \geq \left(e^\eps - 1 + \frac{\kappa}{\hat{m}} + 4c\right) \hat{m}\Delta$.
\end{proof}
Combining this Lemma with Lemma \ref{lem:sample_acc} yields the generalization theorem for low sensitivity queries. 
\begin{theorem}
\label{thm:transfer_low_sensitivity}
If $\M$ satisfies $(\eps, \delta, \kappa)$ partial differential privacy and each $\Delta$-sensitive query $q_t$ depends on at most $\hat{m}$ data. Then for any data distribution $\p$, any adversary $\A$, and any constant $c, d > 0$:
\begin{align*}
\pr\limits_{S\sim \p^m, \Pi\sim I(M, A;S)}\left[\max_{t}\left|a_t - q_t(\p^m)\right| > \left(e^\eps - 1 + \frac{\kappa}{\hat{m}} + 4c \right) \hat{m}\Delta + \alpha + d \right] \leq \frac{m}{\hat{m}}\cdot\frac{\delta}{c} +  \frac{\beta}{d}.
\end{align*}
\end{theorem}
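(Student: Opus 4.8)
The plan is to bound the out-of-sample error $|a_t-q_t(\p^m)|$ by inserting the posterior distribution $\Q_{\Pi}$ and splitting, pointwise in $t$, with the triangle inequality
\[
|a_t-q_t(\p^m)| \;\le\; |a_t-q_t(\Q_{\Pi})| \;+\; |q_t(\Q_{\Pi})-q_t(\p^m)|,
\]
and then to bound each summand with a lemma already in hand: the first with sample accuracy together with the resampling argument of Jung and Ligett (Lemma~\ref{lem:sample_acc}), and the second with the posterior-stability bound for low-sensitivity queries (Lemma~\ref{lem:cond_low_sensitivity}). As in Theorem~\ref{thm:transfer_linear_m_hat}, the hypothesis that $\M$ is $(\alpha,\beta)$-sample accurate is understood.

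First I would apply Lemma~\ref{lem:sample_acc} with slack $d$, which gives $\pr[\max_t|a_t-q_t(\Q_{\Pi})|>\alpha+d]\le\beta/d$. A minor check: Lemma~\ref{lem:sample_acc} is stated for general queries and its proof uses only that each $q_t$ is $[0,1]$-valued, so it transfers verbatim to $\Delta$-sensitive queries; the $\hat m$-refinement from the appendix is unnecessary here since we are not trying to shrink the $\alpha+d$ term.

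Next I would apply Lemma~\ref{lem:cond_low_sensitivity} with slack $c$, which gives $\pr[\max_t|q_t(\Q_{\Pi})-q_t(\p^m)|>(e^\eps-1+\frac{\kappa}{\hat m}+4c)\hat m\Delta]\le\frac{m}{\hat m}\cdot\frac{\delta}{c}$. Taking a union bound over these two failure events and combining with the pointwise triangle inequality (maximizing over $t$ afterward) yields
\[
\pr\!\left[\max_t|a_t-q_t(\p^m)|>\Bigl(e^\eps-1+\tfrac{\kappa}{\hat m}+4c\Bigr)\hat m\Delta+\alpha+d\right]\;\le\;\frac{m}{\hat m}\cdot\frac{\delta}{c}+\frac{\beta}{d},
\]
which is exactly the claim.

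I do not expect a real obstacle: all the analytic content — the resampling/privacy argument carrying the $\kappa\Delta$ additive overhead and the $m/\hat m$ factor in the failure probability — is already discharged inside Lemma~\ref{lem:cond_low_sensitivity}, so the remaining work is essentially bookkeeping. The only points worth a sentence of care are that both bad events are defined over the same joint draw of $(S,\Pi)$, making the union bound legitimate, and that although the argmax time step may differ between the two lemmas this is harmless because the triangle inequality is applied before maximizing over $t$.
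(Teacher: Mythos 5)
Your proposal matches the paper's own (one-line) proof exactly: the paper also obtains Theorem~\ref{thm:transfer_low_sensitivity} by combining Lemma~\ref{lem:cond_low_sensitivity} (posterior stability with the $\kappa\Delta$ overhead and $m/\hat m$ factor) with Lemma~\ref{lem:sample_acc} (sample-to-posterior accuracy with slack $d$) via the triangle inequality and a union bound over the two failure events. The bookkeeping, including applying the triangle inequality pointwise in $t$ before maximizing, is correct, so there is nothing to add.
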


\section{Simplified Proof for \texorpdfstring{$\eps$}{eps}-Partial Differential Privacy}
Similar as that in \cite{jung2020new}, we also provide a simplified proof of a generalization theorem for $(\eps, 0)$-partial differential privacy. The results are summarized in Lemma \ref{lem:cond_pure_dp} and Theorem \ref{thm:transfer_pure_dp}. 
\begin{lemma}
\label{lem:cond_pure_dp}
    If $\M$ satisfies $(\eps, 0, \kappa)$ partial differential privacy with privacy leak function $f_{L}: \mathbf{\Pi} \rightarrow 2^{[m]}$, then for any data distribution $\p$, any transcript $\pi \in \mathbf{\Pi}$, any linear query $q$, and any $\eta > 0$:
\begin{align*}
   \mathop{\rm Pr}\limits_{S \sim \mathcal{Q}_\pi}\left[\left|q(S)-q(\mathcal{P}^m)\right| \geq\left(e^\epsilon-1\right) + \frac{\kappa}{m}+ \sqrt{\frac{2\ln (2 / \eta)}{m}}\right] \leq \eta.
\end{align*}
\end{lemma}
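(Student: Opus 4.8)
The plan is to peel off the at most $\kappa$ coordinates whose privacy is leaked and to control the remaining part by a sub-Gaussian tail bound obtained by tilting off of the product measure, rather than reasoning directly about the (possibly very non-product) posterior $\Q_\pi$. Fix the transcript $\pi$ and set $L=f_L(\pi)$, so $|L|\le\kappa$. Writing the linear query as $q(S)=\frac1m\sum_{i=1}^m q_i(S_i)$ with $q_i:\X\to[0,1]$, decompose $q=q_L+q_{\bar L}$ with $q_L(S)=\frac1m\sum_{i\in L}q_i(S_i)$ and $q_{\bar L}(S)=\frac1m\sum_{i\notin L}q_i(S_i)$. Since each $q_i\in[0,1]$ we always have $|q_L(S)-q_L(\p^m)|\le|L|/m\le\kappa/m$, so it suffices to show $\Pr_{S\sim\Q_\pi}[\,|q_{\bar L}(S)-q_{\bar L}(\p^m)|\ge(e^{\eps}-1)+\sqrt{2\ln(2/\eta)/m}\,]\le\eta$.

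The core step is a moment generating function bound for $q_{\bar L}(S)$ under $\Q_\pi$. Since $q_{\bar L}$ depends only on $S_{\bar L}=(S_i)_{i\notin L}$, I work with the $\bar L$-marginal $\Q_\pi^{\bar L}$ of $\Q_\pi$ and its density $w:=d\Q_\pi^{\bar L}/d\p^{\bar L}$ against the product measure $\p^{\bar L}$ on the non-leaked coordinates. By Bayes' rule $w(t)$ is proportional (in $t$) to the likelihood of $\pi$ averaged over the leaked coordinates, so applying Definition~\ref{def:rdp} to the singleton event $E=\{\pi\}$ — for which $i\notin\bigcup_{\pi'\in E}f_L(\pi')$ exactly when $i\notin L$ — changing any one coordinate $i\notin L$ multiplies $w$ by a factor in $[e^{-\eps},e^{\eps}]$; equivalently $g:=\log w$ has bounded differences $\le\eps$ in every coordinate of $\bar L$. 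Also $\E_{\p^{\bar L}}[e^{g}]=1$, hence $\E_{\p^{\bar L}}[g]\le 0$ by Jensen. For $\lambda\ge 0$ the function $g+\lambda q_{\bar L}$ then has bounded differences $\le\eps+\lambda/m$ per coordinate, so the standard McDiarmid-type moment bound for product measures (applied to $g+\lambda q_{\bar L}$ under $\p^{\bar L}$) yields
\[
\E_{S\sim\Q_\pi}\big[e^{\lambda(q_{\bar L}(S)-q_{\bar L}(\p^m))}\big]=e^{-\lambda q_{\bar L}(\p^m)}\,\E_{\p^{\bar L}}\big[e^{g+\lambda q_{\bar L}}\big]\le\exp\!\big(\tfrac{m}{8}(\eps+\tfrac{\lambda}{m})^2\big),
\]
a sub-Gaussian estimate insensitive to how intricate $\Q_\pi$ is (in contrast to the measure-comparison argument behind Lemma~\ref{lem:trans_linear}).

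The rest is routine. A Chernoff bound with the optimal $\lambda=m(4\theta'-\eps)$ gives $\Pr_{S\sim\Q_\pi}[q_{\bar L}(S)-q_{\bar L}(\p^m)\ge\theta']\le\exp(-m\theta'(2\theta'-\eps))$, and taking $\theta'=(e^{\eps}-1)+\sqrt{2\ln(2/\eta)/m}$, so that $\theta'\ge\eps$ (since $e^{\eps}-1\ge\eps$) and hence $2\theta'-\eps\ge\theta'\ge\sqrt{2\ln(2/\eta)/m}$, makes this at most $\eta/2$. The lower tail is symmetric (take $\lambda\le 0$), and a union bound over the two tails together with the deterministic $\kappa/m$ bound on $q_L$ gives $\Pr_{S\sim\Q_\pi}[\,|q(S)-q(\p^m)|\ge(e^{\eps}-1)+\kappa/m+\sqrt{2\ln(2/\eta)/m}\,]\le\eta$.

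I expect the MGF step to be the only genuine obstacle. A naive Doob martingale over the coordinates of $S\sim\Q_\pi$ fails badly: conditioning on one coordinate can shift the conditional means of all the others by $\Theta(\eps)$ apiece, and these shifts accumulate to a hopeless $\Theta(\eps\sqrt m)$ fluctuation, since $\Q_\pi$ is far from a product measure. The decisive idea is to center on $\p^{\bar L}$ instead and exploit that $\log(d\Q_\pi^{\bar L}/d\p^{\bar L})$ is coordinatewise $\eps$-Lipschitz on $\bar L$ — which is exactly what $(\eps,0,\kappa)$-partial differential privacy provides away from the leaked set. The $\le\kappa$ leaked coordinates are handled trivially (they move $q$ by at most $\kappa/m$ regardless), which is also the source of the additive $\kappa/m$ in the bound.
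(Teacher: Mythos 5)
Your proof is correct, and it takes a genuinely different route from the paper's. The paper runs a Doob-type martingale argument directly under the posterior $\Q_\pi$: with $X_i = q_i(S_i) - \E[q_i(S_i)\mid S_{<i}]$, Azuma gives concentration of $q(S)$ around $\frac{1}{m}\sum_i \E[q_i(S_i)\mid S_{<i}]$, and then partial DP (via Bayes' rule, applied exactly as in your singleton-event observation) pins each conditional mean between $e^{-\eps}q_i(\p)$ and $e^{\eps}q_i(\p)$ for $i\notin f_L(\pi)$, with the $\le\kappa$ leaked coordinates contributing at most $\kappa/m$ — yielding the identical bound $(e^\eps-1)+\kappa/m+\sqrt{2\ln(2/\eta)/m}$. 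You instead split $q=q_L+q_{\bar L}$, handle the leaked part deterministically, and prove a sub-Gaussian MGF bound for $q_{\bar L}$ under $\Q_\pi$ by tilting off the product measure: the key observation that $g=\log(d\Q_\pi^{\bar L}/d\p^{\bar L})$ has coordinatewise bounded differences $\le\eps$ (a direct consequence of the definition with $E=\{\pi\}$, and note the two-sided bound also forces $w>0$ everywhere so $g$ is finite), combined with $\E_{\p^{\bar L}}[e^g]=1$ and McDiarmid's MGF inequality for $g+\lambda q_{\bar L}$, then a Chernoff optimization; the constants work out and the two tails sum to at most $\eta$. Your approach buys a clean statement about the posterior itself (an MGF bound, hence all moments), at the cost of invoking the Radon--Nikodym formalism; the paper's is more elementary and stays at the same level of rigor. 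One remark: your closing claim that the "naive Doob martingale fails badly" is mistaken — that is precisely the paper's proof, and it succeeds because partial DP bounds each conditional-mean shift \emph{deterministically and multiplicatively} ($|\E[q_i(S_i)\mid S_{<i}]-q_i(\p)|\le(e^\eps-1)q_i(\p)$ uniformly over $S_{<i}$), so the shifts contribute only $(e^\eps-1)+\kappa/m$ to the average rather than accumulating as $\Theta(\eps\sqrt m)$ fluctuations; this does not affect the validity of your own argument.
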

\begin{proof}
Recall that for linear queries $q(S) = \frac{1}{m}\sum_{i=1}^{m} q_i(S_i)$ and $q(\mathcal{P}^m) = \E_{S\sim\mathcal{P}^m}[q(S)]= \frac{1}{m}\sum_{i=1}^m \E_{S_i\sim \mathcal{P}}[q_i(S_i)] = \frac{1}{m}\sum_{i=1}^m q_i(\mathcal{P})$. By the same proof in \cite{jung2020new}, we can construct a martingale and show concentration by Azuma's inequality. More specifically, define random variables $X_i = q_i(S_i) - \E[q_i(S_i)|S_{< i}]$ and let $Z_i = \frac{1}{m}\sum_{j = 1}^{i}X_j$. Then the sequence $Z_0 = 0, Z_1, \cdots, Z_m$ forms a martingale and $|Z_i - Z_{i - 1}| \leq \frac{1}{m}$. By Azuma's inequality, it can be concluded that:
\begin{align}
\label{eq:azuma}
\pr\limits_{S\sim Q_\pi}\left[\left|\frac{1}{m}\sum_{i=1}^m q_i\left(S_i\right)-\frac{1}{m}\sum_{i=1}^m{\E}\left[q_i\left(S_i\right) \mid S_{<i}\right]\right| \geq t\right] \leq 2 \exp \left(\frac{-t^2m}{2}\right).
\end{align}
If $\M$ satisfies differential privacy strictly, \cite{jung2020new} shows that $\E[q_i(S_i)|S_{< i}]$ is close to $q_i(\mathcal{P})$ for all $i \in [m]$. However, there exists privacy leak in our mechanism. For a given transcript $\pi \in \mathbf{\Pi}$, we partition the underlying samples into two sets and examine each case separately. 
\begin{enumerate}
    \item $i \notin f_{L}(\pi)$. Fix any realization $\mathbf{x}$ and consider $\E[q_i(S_i)| S_{<i} = \mathbf{x}_{<i}]$, we have
    \begin{align*}
        \mathop{\E}\limits_{S\sim Q_{\pi}}[q_i(S_i)| S_{<i} = \mathbf{x}_{<i}] &= \sum_{x}q_i(x)\cdot \pr\limits_{S\sim \mathcal{P}^{m}}\left[S_i = x | \Pi = \pi, S_{<i} = \mathbf{x}_{<i}\right] \\
        &= \sum_{x}q_i(x)\cdot \frac{\pr_{S\sim \mathcal{P}^m}\left[\Pi = \pi | S_i = x, S_{<i} = \mathbf{x}_{<i}\right]\cdot \pr_{S\sim \mathcal{P}^{m}}[S_i = x]}{\pr_{S\sim \mathcal{P}^{m}}\left[\Pi = \pi |S_{<i} = \mathbf{x}_{<i}\right]}.
    \end{align*}
    By the definition of partial differential privacy, we have that 
    \begin{align*}
        e^{-\eps} \leq\frac{\pr_{S\sim \mathcal{P}^{m}}\left[\Pi = \pi | S_i = x, S_{<i} = \mathbf{x}_{<i}\right]}{\pr_{S\sim \mathcal{P}^{m}}\left[\Pi = \pi |S_{<i} = \mathbf{x}_{<i}\right]} \leq e^{\eps}.
    \end{align*}
    Hence, we can conclude that for $i \notin f_{L}(\pi)$, 
    \begin{align*}
        e^{-\eps} q_i(\mathcal{P}) \leq \mathop{\E}\limits_{S\sim Q_\pi}[q_i(S_i)| S_{<i}] \leq e^\eps q_i(\mathcal{P}).
    \end{align*}
    \item $i \in f_{L}(\pi)$. Since there are at most $\kappa$ samples that may leak privacy, combined with the fact that $q_i \in [0, 1]$, the error on these samples can be bounded as follows.
    \begin{align*}
       -{\kappa} \leq \sum_{i \in f_{L}(\pi)} \left(\E[q_i(S_i)|S_{<i}] - q_i(\mathcal{P})\right) \leq {\kappa}.
    \end{align*}
\end{enumerate}
By analysis above, we can conclude that
\begin{align*}
    \sum_{i \in [m]} \left(\E[q_i(S_i)|S_{<i}] - q_i(\mathcal{P})\right) &= \sum_{i \notin f_{L}(\pi)}\left(\E[q_i(S_i)|S_{<i}] - q_i(\mathcal{P})\right) + \sum_{i \in f_{L}(\pi)}\left(\E[q_i(S_i)|S_{<i}] - q_i(\mathcal{P})\right), \\
    (1 - e^\eps)\cdot m - \kappa &\leq \sum_{i \in [m]} \left(\E[q_i(S_i)|S_{<i}] - q_i(\mathcal{P})\right) \leq (e^\eps - 1)\cdot m + \kappa.
\end{align*}
By equation \eqref{eq:azuma}, we can get that with probability $1 - \eta$, 
\begin{align*}
    -\sqrt{{2m\ln(2/\eta)}} \leq \sum_{i=1}^m q_i\left(S_i\right)-\sum_{i=1}^m\E\left[q_i\left(S_i\right) \mid S_{<i}\right] \leq \sqrt{{2m\ln(2/\eta)}}.
\end{align*}
Combining this with analysis above yields that
\begin{align*}
    \frac{1}{m}\left|\sum_{i=1}^m \left(q_i\left(S_i\right)- q_i(\mathcal{P})\right)\right| \leq \sqrt{\frac{2\ln(2/\eta)}{m}} + (e^\eps - 1) + \frac{\kappa}{m},
\end{align*}
which completes the proof. 
\end{proof}
A generalization theorem follows directly from Lemma \ref{lem:cond_pure_dp}. The proof is same as that of Theorem 23 in \cite{jung2020new}.
\begin{theorem}
\label{thm:transfer_pure_dp}
If $\M$ satisfies $(\eps, 0, \kappa)$ partial differential privacy and $\M$ is $(\alpha, \beta)$-sample accurate. Then for any data distribution $\p$, any adversary $\A$, any linear query $q_t$, and any constant $c, d > 0$:
\begin{align*}
\pr\limits_{S\sim \p^m, \Pi\sim I(\M, \A;S)}\left[\max_{t}\left|a_t - q_t(\p^m)\right| > \alpha + \sqrt{\frac{2\ln(2/\eta)}{m}} + (e^\eps - 1) + \frac{\kappa}{m} \right] \leq \beta + \eta.
\end{align*}
\end{theorem}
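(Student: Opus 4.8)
The plan is to derive Theorem~\ref{thm:transfer_pure_dp} by combining the per‑transcript concentration bound of Lemma~\ref{lem:cond_pure_dp} with the $(\alpha,\beta)$‑sample accuracy of $\M$, via the ``maximum monitor'' argument of \cite{jung2020new}. Write $G := (e^\eps-1) + \kappa/m + \sqrt{2\ln(2/\eta)/m}$ for the generalization error appearing in Lemma~\ref{lem:cond_pure_dp}. Since the adversary may be assumed deterministic (as remarked in the appendix), for every realized transcript $\pi$ the whole interaction is a deterministic function of $\pi$: the queries $q^\pi_t$ (the adversary's messages) and the answers $a^\pi_t$ are literally recorded in $\pi$, and hence so are the numbers $q^\pi_t(\p^m)=\E_{S\sim\p^m}[q^\pi_t(S)]$. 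Therefore $\hat t(\pi):=\operatorname{argmax}_t\bigl|a^\pi_t-q^\pi_t(\p^m)\bigr|$ is well defined given $\pi$, and $\hat q^{\pi}:=q^{\pi}_{\hat t(\pi)}$ is a \emph{single} linear query determined by the transcript.

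First I would instantiate Lemma~\ref{lem:cond_pure_dp} with the query $\hat q^{\pi}$, for each fixed $\pi$, obtaining $\Pr_{S\sim\Q_\pi}\!\bigl[\,\bigl|\hat q^{\pi}(S)-\hat q^{\pi}(\p^m)\bigr|\ge G\,\bigr]\le\eta$. Averaging over $\pi\sim\Pi$ and using that, conditionally on $\Pi=\pi$, the sample $S$ has law $\Q_\pi$, this becomes $\Pr_{S\sim\p^m,\,\Pi\sim I(\M,\A;S)}\!\bigl[\,\bigl|q_{\hat t(\Pi)}(S)-q_{\hat t(\Pi)}(\p^m)\bigr|\ge G\,\bigr]\le\eta$.

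Next I would finish with a triangle inequality at the selected step $\hat t$: on the complement of the sample‑accuracy failure event $\{\max_t|a_t-q_t(S)|\ge\alpha\}$ (probability at most $\beta$ by Definition~\ref{def:acc}) and on the complement of the event from the previous paragraph (probability at most $\eta$), we have $\bigl|a_{\hat t}-q_{\hat t}(\p^m)\bigr|\le\bigl|a_{\hat t}-q_{\hat t}(S)\bigr|+\bigl|q_{\hat t}(S)-q_{\hat t}(\p^m)\bigr|<\alpha+G$. Because $\max_t\bigl|a_t-q_t(\p^m)\bigr|=\bigl|a_{\hat t}-q_{\hat t}(\p^m)\bigr|$ by the choice of $\hat t$, a union bound over the two failure events gives $\Pr\!\bigl[\max_t|a_t-q_t(\p^m)|>\alpha+G\bigr]\le\beta+\eta$, which is exactly the statement (the parameters $c,d$ in the theorem statement play no role in this pure‑DP case; the only free parameter is $\eta$).

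The one point that needs care — and the only real obstacle — is the supremum over time steps: bounding $\max_t$ by a plain union bound over all $t$ would cost a factor logarithmic in the number of time steps, which is absent from the theorem. The monitor trick removes it precisely because $\hat t(\pi)$ depends only on the transcript and not on the resampled database, so Lemma~\ref{lem:cond_pure_dp}, which is phrased for one fixed query, may legitimately be applied ``after the fact'' to the worst‑case query $\hat q^{\pi}$. Everything else here is a routine triangle inequality and union bound, and the martingale/Azuma computation inside Lemma~\ref{lem:cond_pure_dp} that actually exploits the partial‑DP relaxation — splitting coordinates according to whether $i\in f_L(\pi)$ — is already established.
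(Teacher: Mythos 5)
Your proposal is correct and is essentially the paper's own argument: the paper proves Lemma~\ref{lem:cond_pure_dp} and then invokes the combination step of Theorem~23 of \cite{jung2020new}, which is exactly your monitor-style argument — fix the transcript, apply the per-transcript concentration bound to the transcript-selected worst query $q_{\hat t(\pi)}$, and combine with sample accuracy via a triangle inequality and a union bound. Your observation that the parameters $c,d$ in the statement are vestigial (only $\eta$ matters) is also consistent with the paper.
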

If a linear query $q$ only depends on $\hat{m}$ samples, then $q(S) - q(\p^m) = \frac{\hat{m}}{m} \cdot\left(q'(S) - q'(\p^{\hat{m}})\right)$ where $q' = \frac{1}{\hat{m}}\sum_{j\in \{i_1, i_2, \cdots, i_{\hat{m}}\}}q_{j}(S_j)$. Applying Lemma \ref{lem:cond_pure_dp} with $q'$ yields that,
\begin{align*}
    \mathop{\rm Pr}\limits_{S \sim \mathcal{Q}_\pi}\left[\left|q(S)-q(\mathcal{P}^m)\right| \geq  \frac{\hat{m}}{m}\cdot\left(e^\epsilon-1\right) + \frac{\kappa}{m}+  \frac{1}{m}\sqrt{2\hat{m}\ln (2 / \eta)}\right] \leq \eta.
\end{align*}
Thus in this case, we can get the following theorem.
\begin{theorem}
\label{thm:transfer_pure_dp_mhat}
If $\M$ satisfies $(\eps, 0, \kappa)$ partial differential privacy and $\M$ is $(\alpha, \beta)$-sample accurate. Further, if each linear query $q_t$ depends on at most $\hat{m}$ samples. Then for any data distribution $\p$, any adversary $\A$, and any constant $c, d > 0$:
\begin{align*}
\pr\limits_{S\sim \p^m, \Pi\sim I(\M, \A;S)}\left[\max_{t}\left|a_t - q_t(\p^m)\right| > \alpha + \frac{\hat{m}}{m}\cdot\left(e^\epsilon-1\right) + \frac{\kappa}{m}+  \frac{1}{m}\sqrt{2\hat{m}\ln (2 / \eta)}\right] \leq \beta + \eta.
\end{align*}
\end{theorem}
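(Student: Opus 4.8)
The plan is to derive Theorem~\ref{thm:transfer_pure_dp_mhat} from two ingredients, exactly as Theorem~\ref{thm:transfer_pure_dp} is derived from Lemma~\ref{lem:cond_pure_dp} (which mirrors Theorem~23 of \cite{jung2020new}): the $\hat{m}$-refined conditional generalization bound stated immediately above the theorem, namely $\pr_{S\sim\Q_\pi}[\,|q(S)-q(\p^m)|\ge \tfrac{\hat{m}}{m}(e^\eps-1)+\tfrac{\kappa}{m}+\tfrac1m\sqrt{2\hat{m}\ln(2/\eta)}\,]\le\eta$ for every linear query $q$ that reads at most $\hat{m}$ coordinates of the database, together with $(\alpha,\beta)$-sample accuracy; these are stitched together by a triangle inequality and a single union bound. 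I would treat the refined conditional bound as already in hand, since it follows from Lemma~\ref{lem:cond_pure_dp} by rescaling: writing $q(S)-q(\p^m)=\tfrac{\hat{m}}{m}\big(q'(S)-q'(\p^{\hat{m}})\big)$ for $q'=\tfrac1{\hat{m}}\sum_j q_j(S_j)$ over the $\hat{m}$ relevant coordinates, Lemma~\ref{lem:cond_pure_dp} applied with $\hat{m}$ in place of $m$ controls $|q'(S)-q'(\p^{\hat{m}})|$, and multiplying its deviation by $\hat{m}/m$ produces the displayed term.

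Concretely, first I would fix, for each transcript $\pi\in\mathbf{\Pi}$, the time $t^*(\pi)=\operatorname{argmax}_t|a_t-q_t(\p^m)|$; this depends on $\pi$ alone, so $\max_t|a_t-q_t(\p^m)|=|a_{t^*(\Pi)}-q_{t^*(\Pi)}(\p^m)|$ and, once $\pi$ is frozen, $q_{t^*(\pi)}$ is a single deterministic linear query touching at most $\hat{m}$ coordinates. Next I would split via the triangle inequality, $|a_{t^*(\Pi)}-q_{t^*(\Pi)}(\p^m)|\le|a_{t^*(\Pi)}-q_{t^*(\Pi)}(S)|+|q_{t^*(\Pi)}(S)-q_{t^*(\Pi)}(\p^m)|$. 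The first term is at most $\max_t|a_t-q_t(S)|$, which by $(\alpha,\beta)$-sample accuracy exceeds $\alpha$ with probability at most $\beta$ over $(S,\Pi)$. For the second term I would condition on $\Pi=\pi$ and apply the refined conditional bound to the now-fixed query $q_{t^*(\pi)}$, so that $|q_{t^*(\pi)}(S)-q_{t^*(\pi)}(\p^m)|$ exceeds $\tfrac{\hat{m}}{m}(e^\eps-1)+\tfrac{\kappa}{m}+\tfrac1m\sqrt{2\hat{m}\ln(2/\eta)}$ with probability at most $\eta$ over $S\sim\Q_\pi$; taking expectation over $\pi$ keeps the bound $\eta$. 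A union bound over these two failure events then yields exactly the claimed statement, with error $\alpha+\tfrac{\hat{m}}{m}(e^\eps-1)+\tfrac{\kappa}{m}+\tfrac1m\sqrt{2\hat{m}\ln(2/\eta)}$ and failure probability $\beta+\eta$.

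The one subtlety to get right---and the only thing resembling an obstacle---is keeping the $\max_t$ free of any dependence on the number of time steps: a careless argument would union-bound the conditional deviation bound over all $t$, paying a $\log T$ factor absent from the statement. This is precisely why $t^*(\pi)$ must be chosen as a function of the transcript only; then for each $\pi$ the conditional bound is invoked for a single query, while sample accuracy already controls $|a_t-q_t(S)|$ uniformly in $t$, in particular at $t^*(\pi)$. A minor point to note along the way: when Lemma~\ref{lem:cond_pure_dp} is applied on the $\hat{m}$ relevant coordinates, one uses that $\M$ restricted to those coordinates still satisfies $(\eps,0,\cdot)$-partial differential privacy (inherited from the global guarantee) and that at most $\kappa$ of them can lie in the leaked set, which is what makes the scaled leakage term $\kappa/m$ rather than $\kappa/\hat{m}$; this is routine. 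All the real work is already carried by Lemma~\ref{lem:cond_pure_dp}.
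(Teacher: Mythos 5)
Your proposal is correct and follows essentially the same route as the paper: the paper also obtains the refined conditional bound by rescaling Lemma~\ref{lem:cond_pure_dp} to the at most $\hat{m}$ relevant coordinates (noting at most $\kappa$ of them can be leaked, giving $\kappa/m$ after scaling by $\hat{m}/m$), and then combines it with $(\alpha,\beta)$-sample accuracy exactly as in Theorem~23 of Jung--Ligett, via the transcript-determined worst time $t^*(\pi)$, the triangle inequality, and a union bound. No gaps to report.
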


\end{document}